\newcommand{\beginsupplement}{%
\counterwithout{figure}{section}
\counterwithout{table}{section}

\setcounter{figure}{0}
\setcounter{table}{0}

\renewcommand{\thefigure}{S.\arabic{figure}}
\renewcommand{\thetable}{S.\arabic{table}}

\providecommand{\theHfigure}{}
\providecommand{\theHtable}{}
\renewcommand{\theHfigure}{S.\arabic{figure}}
\renewcommand{\theHtable}{S.\arabic{table}}
}
\definecolor{color1}{HTML}{a89ee0}
\definecolor{color2}{HTML}{fff290}
\definecolor{color3}{HTML}{a1ff90}
\newcolumntype{d}[1]{D{.}{.}{#1}}
\newcolumntype{C}[1]{>{\centering}m{#1}}
\def\bfbeta{\bm{\beta}}
\def\rmT{\top}
\DeclareMathOperator*{\argmax}{arg\,max}
\newcommand{\LRs}[1]{\left(#1\right)}
\newcommand{\LRm}[1]{\left[#1\right]}
\newcommand{\LRl}[1]{\left\{#1\right\}}
\def\E{\mathbb{E}}
\def\indep{\perp\!\!\!\perp}
\theoremstyle{plain}
\newtheorem{thm}{Theorem}
\newtheorem{prop}{Proposition}
\newtheorem{lemma}{Lemma}
\theoremstyle{remark}
\newtheorem{remark}{Remark}
\newtheoremstyle{examplestyle}  
{3pt}                          
{3pt}                          
{\normalfont}                  
{\parindent}                   
{\itshape}                     
{.}                            
{.5em}                         
{\thmname{#1}\thmnumber{ #2}\thmnote{ \textit{(#3)}}}  
\theoremstyle{examplestyle}
\newtheorem{example}{Example}
\title{\textbf{Model-Agnostic and Uncertainty-Aware Dimensionality Reduction in Supervised Learning}}
\author{Yue Yu$^{1}$,\ Guanghui Wang$^{1}$,\ Liu Liu$^{2}$,\ and Changliang Zou$^{1}$             \\
{$^{1}${\small\it School of Statistics and Data Science, Nankai University, Tianjin, China}} \\
{$^{2}${\small\it School of Mathematical Science, Chengdu University of Technology, Chengdu, China}}
}
\date{}
\def\V{\mathcal{L}}
\def\bfX{X}
\def\bfY{Y}
\def\bfB{B}
\def\bff{f}
\def\bfv{v}
\def\bfH{H}
\def\bfu{u}
\def\bfA{A}
\def\bfW{W}
\def\bfR{R}
\def\bfx{x}
\def\bfZ{Z}
\def\P{\mathbb{P}}
\def\Q{\mathbb{Q}}
\begin{document}
\maketitle

\begin{abstract}
Dimension reduction is a fundamental tool for analyzing high-dimensional data in supervised learning. Traditional methods for estimating intrinsic order often prioritize model-specific structural assumptions over predictive utility. This paper introduces predictive order determination (POD), a model-agnostic framework that determines the minimal predictively sufficient dimension by directly evaluating out-of-sample predictiveness. POD quantifies uncertainty via error bounds for over- and underestimation and achieves consistency under mild conditions. By unifying dimension reduction with predictive performance, POD applies flexibly across diverse reduction tasks and supervised learners. Simulations and real-data analyses show that POD delivers accurate, uncertainty-aware order estimates, making it a versatile component for prediction-centric pipelines.
\end{abstract}

\section{Introduction}\label{sec:intro}

Dimension reduction is a core idea in modern statistics. It provides low-dimensional summaries of high-dimensional data while keeping the information needed for reliable prediction and inference \citep{jolliffe2002principal,li2018sufficient}. In supervised problems, we observe predictors $\bfX\in\mathbb{R}^p$ and aim to predict a response $\bfY\in\mathbb{R}^q$ using as few transformed variables as possible. The smallest such number is the \textit{intrinsic order}, denoted $d^*$.

\subsection{Motivating examples}

\begin{example}[Factor regression]\label{eg:factor regression}
Latent factors $\bff\in\mathbb{R}^{d^*_{\rm F}}$ drive both the predictors $\bfX$ and, through an unknown link $m$, the response $\bfY$ with $q=1$ \citep{fan2023factor,fan2024latent}:
\[
Y=m(\bff)+\varepsilon,\qquad
\bfX = \bfB \bff + \bfu,
\]
where $\bfB$ is a loading matrix and $\varepsilon$ and $\bfu$ are noise. All predictive information resides in the $d^*_{\rm F}$ coordinates of $\bff$. When $m$ is linear, the model reduces to principal component regression. 
\end{example}

\begin{example}[Sufficient dimension reduction, SDR]\label{eg:SDR}
There may exist a matrix $\bfB\in\mathbb{R}^{p\times d^*_{\rm CS}}$ with the smallest possible number of columns such that $\bfY$ (with $q=1$) is independent of $\bfX$ given $\bfB^{\top}\bfX$. The span of $\bfB$ is the central subspace and has dimension $d^*_{\rm CS}$ \citep{li1991sliced,cook1991sliced}. Focusing only on $\E(Y\mid \bfX)$ yields the central mean subspace, with dimension $d^*_{\rm CMS}\le d^*_{\rm CS}$ \citep{cook2002dimension}.
\end{example}

\begin{example}[Reduced-rank regression]\label{eg:reduced rank regression}
For multivariate responses,
\[
\bfY=\bfA^{\top}\bfX+\varepsilon,\qquad
\bfA=\bfB\Omega,
\]
where $\bfA$ has rank $d^*_{\rm R}<\min\{p,q\}$, $\bfB\in\mathbb{R}^{p\times d^*_{\rm R}}$, $\Omega\in\mathbb{R}^{d^*_{\rm R}\times q}$, and $\varepsilon$ is noise. All predictive information lies in the $d^*_{\rm R}$ coordinates of $\bfB^{\top}\bfX$ \citep{izenman1975reduced}.
\end{example}

Though these paradigms differ mechanistically, they share a unified goal: finding the minimal dimension \(d^*\) (e.g., $d^*_{\rm F},d^*_{\rm CS},d^*_{\rm CMS},d^*_{\rm R}$) that preserves all predictive information in \(X\) for \(Y\). This common thread motivates a shift in perspective: rather than relying on problem-specific structural diagnostics, we seek to determine \(d^*\) by directly measuring how dimensionality affects predictive performance.

\subsection{Limitations of classical order-estimation methods}

Classical approaches generally frame order estimation as a model selection problem. Information criteria based on penalized likelihood or eigenvalue thresholds remain popular for factor models \citep{bai2002determining}, SDR \citep{zhu2006sliced}, and reduced-rank regression \citep{bunea2011optimal}; adaptive penalties have also been proposed for SDR \citep{luo2016combining}. Other common approaches include cross-validation, used in SDR \citep{xia2002adaptive} and reduced-rank regression \citep{yuan2007dimension}, as well as sequential tests comparing eigenvalues from method-specific kernel matrices in SDR \citep[e.g.,][]{li1991sliced,bura2011dimension} or from the sample covariance matrix in factor models (typically in an \textit{unsupervised} fashion) \citep[e.g.,][]{onatski2009testing,yu2025testing}. Visual heuristics such as scree plots or regression plots persist in practice despite limited theoretical support \citep{cattell1966scree,cook1994interpretation}.

These approaches typically prioritize structure assumptions (e.g., eigenvalue decay or distributional conditions) over the ultimate goal of supervised learning: prediction accuracy. While they yield point estimators under specific combinations of task, model, and algorithm, they lack a unified framework to evaluate dimensionality through the lens of predictive sufficiency. Consequently, they provide limited guidance for downstream prediction tasks and often do not quantify \textit{uncertainty} in a systematic way. Recent studies highlight the practical importance of uncertainty quantification in order estimation. For instance, in factor regression, mild overestimation of the number of factors may be relatively benign \citep{fan2022learning}; however, underestimation risks missing important predictive signals, and severe overestimation accumulates noise, both of which harm predictive performance. Moreover, as modern data-analysis pipelines increasingly incorporate complex, non-linear dimension-reduction methods such as deep neural networks \citep{fan2023factor,chen2024deep}, traditional eigenvalue-based diagnostics become unreliable due to the resulting complex spectral behavior. These challenges demand a framework that explicitly ties dimension reduction to predictive performance, provides model-agnostic uncertainty quantification, and prioritizes supervision-aware dimension selection.

\subsection{Our contributions}\label{sec:contribution}

To address these challenges, we introduce \textit{predictive order determination (POD)} and contribute two main advances.
\begin{itemize}
\item \textbf{Prediction-optimal dimension via supervision.} For any candidate dimension \(d\), POD evaluates the predictiveness gap, that is, the drop in out-of-sample prediction risk incurred by using only \(d\) dimensions instead of the full representation. Under mild conditions, the smallest \(d\) at which this gap vanishes is \(d^*\). Predictiveness thus quantifies the utility of additional coordinates for the supervised task and decouples order determination from specific algorithms or distributional assumptions.

\item \textbf{Error bounds and consistency guarantees.} POD sequentially tests whether a given dimension is sufficient via cross-fitted predictiveness comparisons. This procedure (i) controls the probability of overestimating $d^*$ at a pre-specified level $\alpha$, (ii) bounds the probability of underestimation, and (iii) achieves consistency under regularity conditions.
\end{itemize}

\subsection{Related literature}

Prediction-driven order selection has been explored, but mostly within model-specific frameworks. In SDR with a multi-index structure, \cite{xia2002adaptive} proposed minimum average variance estimation, which estimates the central mean subspace by iterating local linear smoothing with index updates; the structural dimension is chosen by cross-validation, yielding a consistent point estimator. In multivariate linear regression, \cite{yuan2007dimension} obtained reduced-rank estimators via penalized least squares and used generalized cross-validation as a practical rule to select the penalty and hence the rank. Despite their predictive focus, these procedures remain tied to particular models and estimators and offer limited uncertainty quantification for the selected order. In contrast, our approach is model- and algorithm-agnostic and provides error rate control beyond consistency.

\cite{lei2020cross} introduced cross-validation with confidence, which conditions on a fixed set of candidate models and returns a confidence set of near-optimal models. When candidate risks are equal, however, its test statistic can \textit{degenerate} \citep{chen2024zipper}. POD instead targets population risk indexed by dimension, remains agnostic to the model and learning algorithm, and guarantees error rate control even under such degeneracy.

A parallel, algorithm-agnostic literature on \textit{variable importance} formalizes a predictiveness contrast: compare the population risk using all features with the risk when a subset is omitted, and develop nonparametric, efficient estimators and valid tests for that target \citep{williamson2023general,dai2022significance}. Our focus differs in both target and setting. Rather than asking ``which features matter,'' we ask ``how many dimensions suffice''. POD infers the minimal predictive order by detecting where predictiveness levels off as dimension grows. Moreover, unlike variable-importance frameworks that operate directly on the observed predictors $\bfX$, POD assesses predictiveness through an \textit{unobserved} low-dimensional representation that is learned as a surrogate. Our theory therefore quantifies how representation error transfers through the cross-fitted predictiveness gaps and the sequential test, leading to uncertainty control (bounds on overestimation and underestimation) and consistency under mild regularity.

\subsection{Structure}

The rest of the paper is structured as follows. Section \ref{sec:method} formalizes the concept of predictiveness-induced order, develops our cross-fitted tests, and presents the sequential procedure for estimating $d^*$. Section \ref{sec:theory} establishes error bounds and consistency results. Section \ref{sec:num} provides simulation studies and a real-data illustration. Section \ref{sec:conclusion} concludes this paper. Technical proofs and additional examples appear in the Supplementary Material.

\textbf{Notation.} For a positive integer $n$, write $[n]=\{1,\dots,n\}$. For a vector $x\in\mathbb{R}^p$ and an index set $l\subseteq[p]$, let $x_l$ collect the coordinates in $l$ and $x_{-l}$ the remainder. The Euclidean norm is $\|\bfx\|_2=(\sum_{i=1}^p x_i^2)^{1/2}$. For an event $A$, $1\{A\}$ denotes its indicator. For $a\in\mathbb{R}$, let $\lfloor a\rfloor$ denote the greatest integer less than or equal to $a$. The symbol $\delta_z$ is the point-mass distribution at $z$.

\section{Methodology}\label{sec:method}

\subsection{Predictiveness-induced order}\label{sec:order}

\subsubsection{Oracle and surrogate representations}

Let $(\bfX,\bfY)\sim\Q$ on $\mathcal{X}\times\mathcal{Y}$. We assume there exists an unobserved \textit{oracle representation} $$\bfR^*\in\mathbb{R}^{d^*},$$ with unknown intrinsic order $d^*$, such that $\bfR^*$ contains all information in $\bfX$ that is relevant for predicting $\bfY$. This setup covers, for example, factor regression with $\bfR^*=\bff$ (Example~\ref{eg:factor regression}), and SDR or reduced-rank regression with $\bfR^*=\bfB^{\top}\bfX$ (Examples~\ref{eg:SDR} and \ref{eg:reduced rank regression}). To compare candidate orders within a common ambient space, we fix a user-chosen upper bound $d_{\max}\geq d^*$ and embed $\bfR^*$ into $\mathbb{R}^{d_{\max}}$ by zero-padding:$$\bfR\in\mathbb{R}^{d_{\max}}\text{ with } \bfR_{[d^*]}=\bfR^*\text{ and }\bfR_{-[d^*]}=0.$$ Let $\bfZ=(\bfR,\bfX,\bfY)$ and let $\P$ denote the joint distribution of $\bfZ$. The distribution $\P$ is used only to define population quantities involving $\bfR$; the observed data are independent and identically distributed (i.i.d.) draws from $\Q$.

In practice, $\bfR$ is unobserved. We instead construct a data-dependent \textit{surrogate representation} $$\widehat{\bfR}=\widehat{\varphi}(\bfX)\in\mathbb{R}^{d_{\max}},$$ where $\widehat{\varphi}:\mathcal{X}\rightarrow\mathbb{R}^{d_{\max}}$ is produced by a chosen dimension-reduction method; for example, by projecting $\bfX$ onto leading eigenvectors of a sample covariance or kernel matrix \citep{xia2002adaptive,fan2023factor}. We assume the coordinates of $\widehat{\bfR}$ are ordered by an importance score returned by the reduction method, such as decreasing sample eigenvalues \citep{bai2002determining}, so that earlier coordinates are intended to be more informative.

\subsubsection{Predictiveness}

Let $\mathcal{G}=\{g:\mathbb{R}^{d_{\max}}\to \widehat{\mathcal{Y}}\}$ be a class of prediction rules (typically $\widehat{\mathcal{Y}}=\mathcal{Y}$). For $Y\in\mathcal{Y}$ and $\widehat{Y}\in\widehat{\mathcal{Y}}$, let $\ell(\bfY,\widehat{Y})$ be a loss function. Examples include squared loss $\ell(\bfY,\widehat{Y})=\|\bfY-\widehat{Y}\|_2^2$ for continuous responses with $\mathcal{Y}=\widehat{\mathcal{Y}}=\mathbb{R}^q$, 0-1 loss $\ell(\bfY,\widehat{Y})=1\{Y\neq\widehat{Y}\}$ for binary classification with $\mathcal{Y}=\widehat{\mathcal{Y}}=\{0,1\}$, and cross-entropy $\ell(\bfY,\widehat{Y})=-e_{Y}^{\top}\log\widehat{Y}$ for multiclass outcomes when $\mathcal{Y}=[M]$ and $\widehat{\mathcal{Y}}=[0,1]^M$, where $e_m$ the $m$th standard basis vector. For $g\in\mathcal{G}$, define the population risk $$\V(g,\P)= \E_{\P}\{\ell(\bfY,g(\bfR))\},$$ with smaller values indicating better predictive performance.

For $0\le d\le d_{\max}$, define the subclass of rules that depend on $\bfR$ only through its first $d$ coordinates: $$\mathcal{G}_d=\left\{g\in\mathcal{G}:g(r)=g(r')\text{ whenever }r_{[d]}=r'_{[d]}\right\}.$$ We define the \textit{predictiveness at order $d$} by
\[
\V_d=\min_{g\in\mathcal{G}_d}\V(g,\P),
\]
and let
\begin{equation*}
g_d\in{\arg\min}_{g\in\mathcal{G}_d}\V(g,\P)
\end{equation*}
denote a minimizer.

\begin{prop}\label{prop: equivalence of V}
$\V_0\geq \V_1\geq \dots \geq \V_{d^*}=\V_{d^*+1}=\dots=\V_{d_{\max}}$.
\end{prop}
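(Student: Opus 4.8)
The plan is to decompose the statement into two parts, the monotone chain $\V_0\ge\V_1\ge\cdots\ge\V_{d_{\max}}$ and the stabilization $\V_{d^*}=\V_{d^*+1}=\cdots=\V_{d_{\max}}$, and to derive both from elementary properties of the nested classes $\mathcal{G}_d$ together with the zero-padding construction $\bfR_{-[d^*]}=0$.

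For monotonicity, the first step is to record the nesting $\mathcal{G}_d\subseteq\mathcal{G}_{d+1}$ for every $0\le d<d_{\max}$: a rule that ignores all coordinates beyond the first $d$ a fortiori ignores all coordinates beyond the first $d+1$, since $r_{[d+1]}=r'_{[d+1]}$ implies $r_{[d]}=r'_{[d]}$ and hence $g(r)=g(r')$. Minimizing the same functional $\V(\cdot,\P)$ over a larger feasible set can only weakly lower the minimum, so $\V_{d+1}\le\V_d$; chaining these inequalities yields the full nonincreasing chain.

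For stabilization, I would fix any $d$ with $d^*\le d\le d_{\max}$. One inequality, $\V_d\le\V_{d^*}$, is already delivered by monotonicity. For the reverse, the idea is to exploit that $\bfR$ has deterministically zero entries in coordinates $d^*+1,\dots,d_{\max}$. Given any $g\in\mathcal{G}_d$, define $\tilde g(r)=g(r_1,\dots,r_{d^*},0,\dots,0)$, i.e. $g$ composed with the map that zeroes out coordinates beyond $d^*$. Then $\tilde g$ depends on $r$ only through $r_{[d^*]}$, so $\tilde g\in\mathcal{G}_{d^*}$ (and $\tilde g\in\mathcal{G}$ since $\mathcal{G}$ is the full class of maps $\mathbb{R}^{d_{\max}}\to\widehat{\mathcal{Y}}$). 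The crucial observation is that evaluation at $\bfR$ is unaffected by the zeroing: because $\bfR_{-[d^*]}=0$, we have $\tilde g(\bfR)=g(\bfR_1,\dots,\bfR_{d^*},0,\dots,0)=g(\bfR)$ pointwise, hence $\V(\tilde g,\P)=\V(g,\P)$. Therefore $\V_{d^*}\le\V(\tilde g,\P)=\V(g,\P)$ for every $g\in\mathcal{G}_d$, and taking the minimum over $\mathcal{G}_d$ gives $\V_{d^*}\le\V_d$. Combined with $\V_d\le\V_{d^*}$, this forces equality for all $d\ge d^*$.

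The nesting and the ``minimize over a larger set'' bound are routine. The one step deserving care, and the only place the structure of the problem enters, is the stabilization argument, which hinges entirely on the zero-padding $\bfR_{-[d^*]}=0$: this is a deterministic property of $\bfR$, so the substitution $\tilde g(\bfR)=g(\bfR)$ holds identically rather than merely in risk. I would make this explicit, and note that no measurability or attainment technicalities arise, since $\mathcal{G}$ imposes no restriction beyond being a function class and the bound $\V_{d^*}\le\V(g,\P)$ is obtained for each individual $g\in\mathcal{G}_d$ before passing to the infimum.
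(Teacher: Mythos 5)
Your proposal is correct and follows essentially the same route as the paper: monotonicity from the nesting $\mathcal{G}_0\subseteq\cdots\subseteq\mathcal{G}_{d_{\max}}$, and stabilization by composing any $g\in\mathcal{G}_d$ with the map that zeroes out coordinates beyond $d^*$ (your $\tilde g$ is exactly the paper's $f_g$), then using $\bfR_{-[d^*]}=0$ to get $\tilde g(\bfR)=g(\bfR)$ pointwise. The only cosmetic difference is that the paper splits the stabilization step into the cases $d^*=0$ and $d^*\ge 1$, whereas your single construction handles both uniformly.
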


Proposition \ref{prop: equivalence of V} shows that $\{\V_d\}_{d=0}^{d_{\max}}$ is non-increasing and becomes constant once all informative coordinates are included. We therefore define the target order
\[
d^*(\V)=\min\{0\leq d\leq d_{\max}:\V_d=\V_{d_{\max}}\}.
\]
Such a definition is inherently meaningful in its own right whenever the ultimate goal is prediction-oriented, as it directly quantifies the minimal complexity needed for optimal supervised performance. Under mild regularity conditions and suitable losses, $d^*(\V)$ coincides with intrinsic orders in the motivating examples. For continuous responses with squared loss, $d^*(\V)=d^*_{\rm F}$ in Example \ref{eg:factor regression}, $d^*(\V)=d^*_{\rm CMS}$ in Example \ref{eg:SDR}, and $d^*(\V)=d^*_{\rm R}$ in Example \ref{eg:reduced rank regression}, whereas with negative log-likelihood, $d^*(\V)= d^*_{\rm CS}$ in Example \ref{eg:SDR}. For categorical responses, cross-entropy also implies $d^*(\V)=d^*_{\rm CS}$ in Example \ref{eg:SDR}. See Section \ref{subsec:d*_d0} of the Supplementary Material for proofs and additional examples. Henceforth, we write $d^*=d^*(\V)$ and omit the explicit dependence on $\V$.

\begin{remark}
In Example \ref{eg:SDR}, squared loss targets the conditional mean and can yield $d^*(\V)=d^*_{\rm CMS}<d^*_{\rm CS}$ when the central mean subspace is strictly contained in the central space. This mirrors the distinction between conditional mean independence and conditional independence \citep{lundborg2024projected}. When the goal is prediction under the chosen loss, $d^*(\V)$ remains the appropriate target.
\end{remark}

\subsection{Testing whether a candidate order is sufficient}\label{sec:estimation-inference}

For each $d\in\{0,\ldots,d_{\max}\}$, we test whether the first $d$ coordinates already achieve maximal predictiveness:
\[
H_{0,d}:\V_d=\V_{d_{\max}}\quad\text{versus}\quad H_{1,d}:\V_d>\V_{d_{\max}}.
\]
Equivalently, we test whether the predictiveness gap $\V_d-\V_{d_{\max}}$ equals zero.

\subsubsection{Cross-fitting procedure}\label{sec:two-step}

Let $\{(\bfX_i,\bfY_i):i\in[n]\}$ be i.i.d.\ observations from $\Q$. The oracle representation $\bfR$ and the optimal prediction rules $g_d$ and $g_{d_{\max}}$ are unobserved. We therefore learn a reduction map, compute surrogate representations, and fit rules on the surrogate coordinates. If we were to evaluate predictive performance on the same data used for training, the resulting estimates would generally be biased and their asymptotic behavior difficult to analyze. To avoid this ``double dipping'', we use $K$-fold \textit{cross-fitting} \citep{chernozhukov2018double,dai2022significance,williamson2023general}.

Partition $[n]$ into $K\ge 2$ disjoint folds $I_1,\dots,I_K$. For each $k\in[K]$, let $I_{-k}=[n]\setminus I_k$ denote the training indices and $I_k$ the test indices.
\begin{enumerate}
\item \textbf{Fit on $I_{-k}$.} Using only observations in $I_{-k}$, fit a reduction map $\widehat{\varphi}_{-k}$ and compute surrogate representations $\widehat{\bfR}_i^{(k)}=\widehat{\varphi}_{-k}(\bfX_i)$ for all $i\in[n]$. Learn two prediction rules $\widehat{g}_{-k;d}$ and $\widehat{g}_{-k;d_{\max}}$, by regressing $\bfY_i$ on the first $d$ and the entire $d_{\max}$ coordinates of $\widehat{\bfR}_i^{(k)}$, respectively. The reduction map can be adapted to specific dimension-reduction tasks and methods; see Section \ref{suppsec:reduction map} of the Supplementary Material for detailed examples. The prediction rules may be fit using flexible learner such as kernels, splines, trees, and neural networks.
\item \textbf{Evaluate on $I_k$.} Section \ref{sec:zipper} describes how risks are evaluated only on the held-out fold $I_k$ to form a nondegenerate test statistic.
\end{enumerate}

\subsubsection{A second split to avoid degeneracy}\label{sec:zipper}

For a fitted reduction map $\widehat{\varphi}_{-k}$, define the conditional risk $$\widehat{\V}_{-k}(g,\Q)=\E_\Q\{\ell(\bfY,g(\widehat{\varphi}_{-k}(\bfX)))\mid\widehat{\varphi}_{-k}\}.$$ A naive plug-in estimator of the gap $\V_d-\V_{d_{\max}}$ would evaluate both risks on the same test fold $I_k$:
\[
\widehat{\V}_{-k}(\widehat{g}_{-k;d},\widehat{\Q}_{k})-\widehat{\V}_{-k}(\widehat{g}_{-k;d_{\max}},\widehat{\Q}_k),
\]
where $\widehat{\Q}_k=|I_k|^{-1}\sum_{i\in I_k}\delta_{(\bfX_i,\bfY_i)}$ is the empirical distribution on $I_k$. However, under the null hypothesis $H_{0,d}$, this contrast is \textit{degenerate}: the leading influence-function terms of the two risk evaluations coincide and thus cancel (see \eqref{eq:degeneracy} in the Supplementary Material). Similar degeneracy arises in variable-importance inference \citep[e.g.,][]{dai2022significance,williamson2023general,chen2024zipper}. To obtain a nondegenerate statistic with good power, we further split each test fold using an overlapping scheme inspired by \cite{chen2024zipper}.

Randomly divide $I_k$ into three disjoint parts $I_{k,a}, I_{k,b}, I_{k,o}$ with $|I_{k,a}|=|I_{k,b}|$. The subset $I_{k,o}$ is shared by the two risk evaluations, and we summarize the amount of sharing by the overlap proportion $\tau={|I_{k,o}|}/({|I_{k,a}|+|I_{k,o}|})\in[0,1)$. For $J\in\{a,b,o\}$, let $\widehat{\Q}_{k,J}=|I_{k,J}|^{-1}\sum_{i\in I_{k,J}}\delta_{(\bfX_i,\bfY_i)}$ be the empirical distribution on $I_{k,J}$. Define the fold-specific risk estimates
\begin{align*}
\widehat{\V}_{d,k}&=\tau \widehat{\V}_{-k}(\widehat{g}_{-k;d},\widehat{\Q}_{k,o})+(1-\tau)\widehat{\V}_{-k}(\widehat{g}_{-k;d},\widehat{\Q}_{k,a})\text{ and}\\
\widehat{\V}_{d_{\max},k}&=\tau \widehat{\V}_{-k}(\widehat{g}_{-k;d_{\max}},\widehat{\Q}_{k,o})+(1-\tau)\widehat{\V}_{-k}(\widehat{g}_{-k;d_{\max}},\widehat{\Q}_{k,b}),
\end{align*}
and the cross-fitted, fold-aggregated contrast
\begin{equation}\label{eq:psi}
\widehat{\psi}_{d}=\frac{1}{K}\sum\limits_{k=1}^K(\widehat{\V}_{d,k}-\widehat{\V}_{d_{\max},k}).
\end{equation}

Under $H_{0,d}$, Theorem \ref{thm:thm1} below shows that $$\{n/(2-\tau)\}^{1/2}\widehat{\psi}_{d}\rightsquigarrow \mathcal{N}(0,\nu_{d}^2).$$ To implement the test, we estimate $\nu_{d}^2$ by $\widehat{\nu}_{d}^2=(1-\tau)K^{-1}\sum_{k=1}^K(\widehat{\sigma}^2_{d,k}+\widehat{\sigma}^2_{d_{\max},k}),$ where $\widehat{\sigma}^2_{d,k}=\left|I_{k}\right|^{-1}\sum_{i\in I_k}\{\ell(\bfY_i,\widehat{g}_{-k;d}(\widehat{\bfR}_i^{(k)}))-\widehat{\V}_{-k}(\widehat{g}_{-k;d},\widehat{\Q}_k )\}^2$ and $\widehat{\sigma}^2_{d_{\max},k}$ is defined analogously. Consistency of $\widehat{\nu}_{d}^2$ is established in Proposition \ref{prop:consistency}. We reject $H_{0,d}$ at level $\alpha$ when $$\widehat{T}_d=\{n/(2-\tau)\}^{1/2}\widehat{\psi}_{d}/\widehat{\nu}_d \geq z_{1-\alpha},$$ where $z_{1-\alpha}$ is the $(1-\alpha)$-quantile of $\mathcal{N}(0,1)$.

\begin{remark}
The asymptotic theory permits any fixed $\tau\in[0,1)$, but finite-sample performance reflects an efficiency-degeneracy trade-off. Larger $\tau$ increases overlap, typically reducing variance and improving power; however, as $\tau\to 1$ the contrast becomes nearly degenerate and size can inflate. Following the guidance of \cite{chen2024zipper}, we choose $\tau$ to keep the effective independent test size $(1-\tau)n/(2-\tau)$ comfortably large. In our simulations, we set $\tau = 0.8$ which delivered stable size and power. Data-driven selection of $\tau$ is left for future work.
\end{remark}

\subsection{Sequential selection of the predictiveness-induced order}\label{sec:order selection}

Sections \ref{sec:order}--\ref{sec:estimation-inference} provide the ingredients for our \textit{predictive order determination (POD)} method. POD combines the cross-fitted predictiveness-gap test with a forward search to estimate the predictiveness-induced order $d^*$.

Starting with $d=0$ and increasing $d$ one step at a time, we apply the test sequentially:
\begin{itemize}
\item compute $\widehat{T}_d$;
\item if $\widehat{T}_d < z_{1-\alpha}$ (i.e., $H_{0,d}$ is not rejected), set $\hat{d}=d$ and stop;
\item otherwise set $d\leftarrow d+1$ and continue, up to $d_{\max}$.
\end{itemize}
\noindent Equivalently, 
\begin{equation}\label{equ:hatd}
\widehat{d}=\min\Big\{\min\big\{d:0\leq d \leq d_{\max},\ \widehat{T}_d < z_{1-\alpha}\big\},d_{\max}\Big\}.
\end{equation}
For ease of implementation, Algorithm~\ref{alg:pod} summarizes the POD procedure.

\begin{algorithm}[H]
\label{alg:pod}
\caption{The POD procedure}

\KwIn{Observed data $\{(\bfX_i,\bfY_i)\}_{i=1}^n$; number of folds $K\geq 2$; search limit $d_{\max}$; overlap proportion $\tau\in[0,1)$; significance level $\alpha\in(0,1)$; loss $\ell$; a dimension-reduction method $\mathcal R$; and a prediction-rule class $\mathcal F$.}

Randomly partition $[n]$ into $K$ folds $I_1,\ldots,I_K$.

\For{$k = 1,\ldots,K$}{
Fit a reduction map $\widehat{\varphi}_{-k}:\mathcal{X}\to\mathbb{R}^{d_{\max}}$ by applying $\mathcal{R}$ to $\{(\bfX_i,\bfY_i)\}_{i\in I_{-k}}$, and compute $\widehat{\bfR}_i^{(k)}=\widehat{\varphi}_{-k}(\bfX_i)$ for all $i\in[n]$.

Randomly partition $I_k$ into three disjoint sets $I_{k,a}$, $I_{k,b}$, and $I_{k,o}$ such that $\lvert I_{k,a}\rvert=\lvert I_{k,b}\rvert$ and $\tau=\lvert I_{k,o}\rvert/(\lvert I_{k,a}\rvert+\lvert I_{k,o}\rvert)\in[0,1)$.
}

\For{$d=0,\ldots,d_{\max}$}{
\For{$k=1,\ldots,K$}{
Using $\{(\widehat{\bfR}_i^{(k)},\bfY_i)\}_{i\in I_{-k}}$, fit prediction rules $\widehat{g}_{-k;d}$ (using the first $d$ coordinates of $\widehat{\bfR}_i^{(k)}$) and $\widehat{g}_{-k;d_{\max}}$ (using all $d_{\max}$ coordinates). Optionally, one may select within $\mathcal{F}$ using an internal procedure based only on $I_{-k}$.

Using $\{(\widehat{\bfR}_i^{(k)},\bfY_i)\}_{i\in I_k}$, compute the loss values $\ell(\bfY_i,\widehat{g}_{-k;d}(\widehat{\bfR}_i^{(k)}))$ and $\ell(\bfY_i,\widehat{g}_{-k;d_{\max}}(\widehat{\bfR}_i^{(k)}))$.

Compute $\widehat{\V}_{d,k}$, $\widehat{\V}_{d_{\max},k}$, $\widehat{\sigma}_{d,k}^2$, and $\widehat{\sigma}_{d_{\max},k}^2$ as in Section \ref{sec:zipper}. 
}

Compute $\widehat{\psi}_d$, $\widehat{\nu}_d^2$, and $\widehat{T}_d$ as in Section \ref{sec:zipper}.
}

\KwOut{Reject $H_{0,d}$ if $\widehat{T}_d>z_{1-\alpha}$. \tcp*[r]{$\triangleright$ Sequential tests}

$\widehat{d}=\min\big\{\min\{d:0\leq d \leq d_{\max},\ \widehat{T}_d < z_{1-\alpha}\},d_{\max}\big\}$. \tcp*[r]{$\triangleright$ Sequential selection}
}
\end{algorithm}

The significance level $\alpha$ governs a transparent trade-off between parsimony and potential predictive gain. Smaller $\alpha$ makes rejection harder and tends to stop earlier, yielding a more compact model; larger $\alpha$ lowers the rejection threshold and tends to select a larger order.

Because each step is tested at level $\alpha$, and the procedure stops at the first non-rejection, the probability of overestimating the true order is controlled by the final test:
\begin{equation}\label{eq:FWER}
\Pr(\widehat{d}>d^*)\leq \Pr(\widehat{T}_{d^*}\ge z_{1-\alpha})\to \alpha.
\end{equation}
Thus, the asymptotic probability of overestimation does not exceed $\alpha$. Upper bounds for the underestimation are given in Theorem \ref{thm:thm2}, and Theorem \ref{thm:thm3} establishes consistency of $\widehat{d}$ when $\alpha=\alpha_n\to 0$ at a suitable rate; see Section \ref{sec:theory}.

\section{Theoretical results}\label{sec:theory}

\subsection{Asymptotic null distribution}\label{sec:theory_null}

Recall that $(\bfX,\bfY)\sim\Q$ on $\mathcal{X}\times\mathcal{Y}$ and $\bfZ=(\bfR,\bfX,\bfY)\sim\P$. Let $\widehat{\mathcal{Y}}$ be the prediction space (typically $\widehat{\mathcal{Y}}=\mathcal{Y}$) with norm $\|\cdot\|_{\widehat{\mathcal{Y}}}$ (e.g., the Euclidean norm when $\widehat{\mathcal{Y}}=\mathbb{R}^q$). For a measurable function $h:\text{supp}(\P)\to \widehat{\mathcal{Y}}$ (where $\text{supp}(\P)$ is the support of $\P$), we write $\|h\|$ for a generic norm, such as the $L_2(\P)$-norm $\|h\|_{L_2(\P)}=[ \int \| h(z)\|_{\widehat{\mathcal{Y}}}^2\, d\P(z)]^{1/2}$, or the $L_\infty(\P)$-norm $\|h\|_{L_\infty(\P)}=\sup_{z\in \text{supp}(\P)}\| h(z)\|_{\widehat{\mathcal{Y}}}$. Given a reduction map $\varphi:\mathcal{X}\to\mathbb{R}^{d_{\max}}$ and a prediction rule $g\in\mathcal{G}$, we write $g\circ\varphi$ for the composite function $x\mapsto g(\varphi(x))$.

We study the asymptotic null distribution of the cross-fitted contrast $\widehat{\psi}_d$ in \eqref{eq:psi}, and hence of the test statistic $\widehat{T}_d$. By exchangeability of the $K$ folds, it suffices to analyze a generic term $\widehat{\V}_{-k}(\widehat{g}_{-k;d},\widehat{\Q}_k)$, where $\widehat{\varphi}_{-k}$ and $\widehat{g}_{-k;d}$ are trained on $I_{-k}$ and $\widehat{\Q}_k$ is the empirical distribution on $I_k$. Our goal is to obtain a root-$n$ expansion of $\widehat{\V}_{-k}(\widehat{g}_{-k;d},\widehat{\Q}_k)-\V(g_{d^*},\P)$ under $H_0:\V(g_d,\P)=\V(g_{d^*},\P)$, where $g_d\in{\arg\min}_{g\in\mathcal{G}_d}\V(g,\P)$. To this end, decompose
\begin{align}\label{eq: decomposition}
\widehat{\V}_{-k}(&\widehat{g}_{-k;d},\widehat{\Q}_k)-\V(g_{d^*},\P)\nonumber\\
&=\underbrace{\{\V(g_{d^*},\widehat{\P}_k)-\V(g_{d^*},\P)\}}_{\text{(I)}}+\underbrace{\{\widehat{\V}_{-k}(\widehat{g}_{-k;d},\P)- \V(g_{d^*},\P)\}}_{\text{(II)}}+r_n,
\end{align}
where $\widehat{\P}_k$ is the empirical distribution of $\bfZ$ on $I_k$ (introduced only for theoretical analysis) and $r_n:=\{\widehat{\V}_{-k}(\widehat{g}_{-k;d},\widehat{\P}_k)-\widehat{\V}_{-k}(\widehat{g}_{-k;d},\P)\}-\{\V(g_{d^*},\widehat{\P}_k)-\V(g_{d^*},\P)\}$. Term (I) evaluates the fixed oracle rule $g_{d^*}$ on the test fold and is asymptotically normal by the central limit theorem. The term $r_n$ is a difference-of-differences remainder; cross-fitting typically makes it $o_{\P}(n^{-1/2})$ under mild regularity conditions (cf.\ \citealp{williamson2023general}). Term (II) is the population-risk effect (conditional on the training fold) of replacing $g_{d^*}$ with the learned rule $\widehat{g}_{-k;d}\circ\widehat{\varphi}_{-k}$:
\[
\text{(II)}=\E_\P\big\{\ell\big(\bfY,\widehat{g}_{-k;d}\circ\widehat{\varphi}_{-k}(\bfX)\big)\mid\widehat{g}_{-k;d}\circ\widehat{\varphi}_{-k}\big\} - \E_{\P}\big\{\ell\big(\bfY,g_{d^*}(\bfR)\big)\big\}.
\]
The conditions below ensure that Term (II) is $o_{\P}(n^{-1/2})$.
The mechanism is that $g_{d^*}\in{\arg\min}_{g\in\mathcal{G}_{d^*}}\V(g,\P)$ and that $\bfR$ is sufficient for prediction. For illustration, consider squared loss with scalar response and conditional mean sufficiency, i.e., $\E_{\P}(\bfY\mid \bfX,\bfR)=\E_{\P}(\bfY\mid \bfR)$. Since $g_{d^*}(R)=\E_{\P}(\bfY\mid \bfR)$ minimizes mean squared error, for any $g\in\mathcal{G}_d$ and any $\varphi:\mathcal{X}\to \mathbb{R}^{d_{\max}}$,
\begin{align*}
\E_{\P}\{\ell(\bfY&,g\circ\varphi(\bfX))\} - \E_{\P}\big\{\ell\big(\bfY,g_{d^*}(\bfR)\big)\big\}\\
&=2\E_{\P}\big\{\big(\bfY-g_{d^*}(\bfR)\big)\big(g_{d^*}(\bfR)- g\circ\varphi(\bfX)\big)\big\}+\|g\circ\varphi-g_{d^*}\|_{L_2(\P)}^2.
\end{align*}
The cross term is zero by conditional mean sufficiency, leaving a quadratic remainder. Thus the risk difference is second order in $g\circ\varphi-g_{d^*}$, and Term (II) is negligible if $\|\widehat{g}_{-k;d}\circ\widehat{\varphi}_{-k}-g_{d^*}\|_{L_2(\P)}=o_{\P}(n^{-1/4})$.

We formalize these requirements through three conditions.

\begin{enumerate} 
\item[(C1)](\textit{Risk curvature}) For $d\geq d^*$, there exists a constant $C>0$ such that for any sequence $\{(\varphi^{(j)},g^{(j)})\}_{j\geq 1}$ with $\varphi^{(j)}:\mathcal{X}\rightarrow\mathbb{R}^{d_{\max}}$ and $g^{(j)}\in\mathcal{G}_d$, if $\| g^{(j)} \circ \varphi^{(j)} - g_{d^*} \|\rightarrow 0$, then for all sufficiently large $j$, $\lvert\E_{\P}\{\ell(\bfY,g^{(j)}\circ\varphi^{(j)}(\bfX))-\ell(\bfY,g_{d^*}(\bfR))\}\rvert\leq C \| g^{(j)} \circ \varphi^{(j)} - g_{d^*}\|^2$.
\item[(C2)](\textit{Learning accuracy}) For $d\geq d^*$, $\| \widehat{g}_{-k;d}\circ\widehat{\varphi}_{-k}-g_{d^*}\|=o_{\P}(n^{-1/4})$.
\item[(C3)](\textit{Consistency}) For $d\geq d^*$, $\E_{\P}[\{ \ell(\bfY,\widehat{g}_{-k;d}\circ\widehat{\varphi}_{-k}(\bfX)) -  \ell(\bfY,g_{d^*}(\bfR))\}^2\mid\widehat{g}_{-k;d},\widehat{\varphi}_{-k}] =o_{\P}(1)$.
\end{enumerate}

Condition (C1) requires (i) local curvature of the risk induced by the loss \citep[e.g.,][]{farrell2021deep,williamson2023general} and (ii) a structural link among $(\bfR,\bfX,\bfY)$ that removes first-order terms when the oracle rule is $\bfR$-based but the learned rule is $\bfX$-based. For squared loss, a sufficient condition for (ii) is $\E_{\P}(\bfY\mid\bfX,\bfR)=\E_{\P}(\bfY\mid\bfR)$, which holds in Example~\ref{eg:SDR}. In Example~\ref{eg:factor regression}, it suffices that $\E(\varepsilon\mid\bff,\bfu)=0$, and in Example~\ref{eg:reduced rank regression} that $\E(\varepsilon\mid\bfX)=0$; see, e.g., \citet{yuan2007dimension} and \cite{fan2023factor}. Additional examples (beyond squared loss) and proofs are provided in Section~\ref{subsec: verification of C1} of the Supplementary Material.

Condition (C2) requires that the learned rule converges to $g_{d^*}$ at rate $o_{\P}(n^{-1/4})$. Together with (C1), this implies that Term (II) in \eqref{eq: decomposition} is $o_{\P}(n^{-1/2})$. In many settings, this rate can be achieved when the reduction stage recovers $\bfR^*$ up to rotation and the prediction rule is sufficiently regular. Concretely, under mild conditions, there exists a rotation matrix $O$ such that $\|O\widehat{\bfR}_{[d]}-\bfR^*\|=o_{\P}(n^{-1/4})$. Section~\ref{sec:recovery_null} of the Supplementary Material establishes such a bound in two representative settings: Example~\ref{eg:factor regression}, where $\widehat{\varphi}$ is obtained by principal component analysis, and Example~\ref{eg:SDR}, where $\widehat{\varphi}$ is learned via forward or inverse regression. If $g_{d^*}$ is Lipschitz, this representation error transfers to prediction error. The remaining estimation and approximation errors of $\widehat{g}_{-k;d}$ can be controlled using standard complexity arguments \citep{gyorfi2002distribution}. For example, in factor regression (Example~\ref{eg:factor regression}), \citet{fan2023factor} show that for principal-component reductions with neural-network prediction rules (with an additional split within $I_{-k}$), if $g_{d^*}$ is $\beta$-smooth and $p$ is of the same order as $n$, then $\|\widehat{g}_{-k;d}\circ\widehat{\varphi}_{-k}-g_{d^*}\|_{L_2(\P)}^2=O_{\P}(n^{-2\beta/(2\beta+d^*)})$ up to logarithmic factors, yielding $\|\widehat{g}_{-k;d}\circ\widehat{\varphi}_{-k}-g_{d^*}\|_{L_2(\P)}=o_{\P}(n^{-1/4})$ whenever $\beta>d^*/2$; a closely related rate condition for nonparametric estimation of the outcome regression in average treatment effect inference is given in \cite{fan2025factor}.

Condition (C3) is a mild consistency requirement for the loss, as in \citet{williamson2023general}. For losses that are Lipschitz in their second argument, (C2) typically implies (C3) \citep{farrell2021deep}.

\begin{thm}[Asymptotic null distribution]\label{thm:thm1}
Under Conditions (C1)--(C3) and $H_{0,d}$, for any $\tau\in[0,1)$,
\begin{equation*}
\{n/(2-\tau)\}^{1/2}\widehat{\psi}_{d}\rightsquigarrow\mathcal{N}(0,\nu^2_{d}),
\end{equation*}
where $\nu_{d}^2=2(1-\tau)\sigma^2_{d^*}$ and  $\sigma^2_{d^*}=\mathrm{var}\{\ell(\bfY,g_{d^*}(\bfR))\}$.
\end{thm}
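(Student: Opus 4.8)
The plan is to reduce everything to the fixed oracle rule $g_{d^*}$ by applying the decomposition \eqref{eq: decomposition} to each of the four risk evaluations that make up $\widehat{\psi}_d$---the rule $\widehat{g}_{-k;d}$ on subfolds $a$ and $o$, and the rule $\widehat{g}_{-k;d_{\max}}$ on subfolds $b$ and $o$. By exchangeability each subfold is an i.i.d.\ sample from $\Q$ that is independent of the training fold $I_{-k}$, so the decomposition is valid verbatim on every subfold. Writing $\xi_i=\ell(\bfY_i,g_{d^*}(\bfR_i))-\V(g_{d^*},\P)$ for the oracle-centered summands (mean zero, variance $\sigma^2_{d^*}$, assumed finite), term (I) evaluated on subfold $J$ is exactly the i.i.d.\ average $|I_{k,J}|^{-1}\sum_{i\in I_{k,J}}\xi_i$. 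The point is that term (I) involves only the oracle pair $(g_{d^*},\bfR)$, so all dependence on the learned rule is pushed into term (II) and the remainder $r_n$.

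I would first dispose of these two nuisance pieces at the root-$n$ scale. Under $H_{0,d}$ we have $d\ge d^*$ by Proposition~\ref{prop: equivalence of V}, so Conditions (C1)--(C3) apply at both orders $d$ and $d_{\max}$. For term (II), (C2) gives $\|\widehat{g}_{-k;d'}\circ\widehat{\varphi}_{-k}-g_{d^*}\|=o_\P(n^{-1/4})\to 0$, and substituting into the curvature bound (C1) yields $|(\text{II})_{d'}|\le C\|\widehat{g}_{-k;d'}\circ\widehat{\varphi}_{-k}-g_{d^*}\|^2=o_\P(n^{-1/2})$ for $d'\in\{d,d_{\max}\}$, where the deterministic ``sufficiently large $j$'' statement in (C1) transfers to an $o_\P$ bound by a routine subsequence argument. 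Note that $(\text{II})_{d'}$ is a population quantity conditional on training and hence does not depend on the evaluation subfold $J$. For the remainder, $r_n=(\widehat{\P}_k-\P)[\ell(\bfY,\widehat{g}_{-k;d'}\circ\widehat{\varphi}_{-k}(\bfX))-\ell(\bfY,g_{d^*}(\bfR))]$ is an empirical process evaluated at a difference of losses; conditioning on $I_{-k}$ freezes the integrand and leaves i.i.d.\ evaluation points, so on a subfold of size $|I_{k,J}|\asymp n$ the conditional variance of $r_n$ is at most $|I_{k,J}|^{-1}$ times the left-hand side of (C3), which is $o_\P(1)$; hence $r_n=o_\P(n^{-1/2})$ on every subfold and for both orders.

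The heart of the matter is the overlap cancellation together with the variance accounting. On the shared subfold $I_{k,o}$ the two evaluations carry the \emph{same} oracle average $|I_{k,o}|^{-1}\sum_{i\in I_{k,o}}\xi_i$ in their (I) terms, which cancels in the difference $\widehat{\V}_{-k}(\widehat{g}_{-k;d},\widehat{\Q}_{k,o})-\widehat{\V}_{-k}(\widehat{g}_{-k;d_{\max}},\widehat{\Q}_{k,o})$---this is precisely the degeneracy the overlap split is designed to exploit---leaving only $\tau\{(\text{II})_d-(\text{II})_{d_{\max}}+r_{n,o,d}-r_{n,o,d_{\max}}\}=o_\P(n^{-1/2})$. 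The non-overlapping subfolds survive: the (II) terms again cancel across the two rules, while the (I) averages $|I_{k,a}|^{-1}\sum_{i\in I_{k,a}}\xi_i$ and $|I_{k,b}|^{-1}\sum_{i\in I_{k,b}}\xi_i$ are built from disjoint data and are therefore independent. Collecting terms gives
\[
\widehat{\psi}_d=\frac{1-\tau}{K}\sum_{k=1}^K\Big(\frac{1}{|I_{k,a}|}\sum_{i\in I_{k,a}}\xi_i-\frac{1}{|I_{k,b}|}\sum_{i\in I_{k,b}}\xi_i\Big)+o_\P(n^{-1/2}).
\]
With $m=|I_{k,a}|=|I_{k,b}|$ and $|I_{k,o}|=\tau m/(1-\tau)$, the identity $|I_k|=m(2-\tau)/(1-\tau)$ gives $m\asymp n(1-\tau)/\{K(2-\tau)\}$; using independence across folds and of the two within-fold averages, the variance of the leading term is $2(1-\tau)(2-\tau)\sigma^2_{d^*}/n$ up to $o(n^{-1})$. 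A triangular-array (Lindeberg--L\'evy) CLT applied to this equally weighted sum of i.i.d.\ summands $\{\pm\xi_i\}$ yields asymptotic normality, and scaling the variance by $n/(2-\tau)$ produces exactly $\nu_d^2=2(1-\tau)\sigma^2_{d^*}$.

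I expect the main obstacle to be the remainder control---establishing $r_n=o_\P(n^{-1/2})$ uniformly over the three subfolds and both orders from cross-fitting and (C3)---together with the careful subfold/fold bookkeeping required to land exactly on the constant $2(1-\tau)\sigma^2_{d^*}$. The conceptual core, by contrast, is the overlap-induced cancellation of the oracle influence terms, which simultaneously removes the degeneracy of the naive plug-in contrast and pins down the form of the limiting variance.
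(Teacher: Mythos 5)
Your proposal is correct and takes essentially the same route as the paper's proof: the same decomposition into the oracle term (I), the population term (II), and the difference-of-differences remainder, with (C1)--(C2) making (II) negligible at rate $o_\P(n^{-1/2})$, cross-fitting plus a conditional Chebyshev/variance bound via (C3) handling the remainder, exact cancellation of the shared oracle average on the overlap subfold $I_{k,o}$, and the identical fold/subfold bookkeeping (using $|I_{k,a}|=|I_{k,b}|$ and $|I_k|=m(2-\tau)/(1-\tau)$) that pins down $\nu_d^2=2(1-\tau)\sigma^2_{d^*}$ via the CLT. The only differences are cosmetic refinements (your subsequence argument for applying the deterministic statement of (C1) to random sequences, and a slightly streamlined variance bound for the remainder), not a different method.
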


\begin{prop}\label{prop:consistency}
Under Conditions (C1)--(C3) and $H_{0,d}$, for any $\tau\in[0,1)$, $\widehat{\nu}_{d}^2\to\nu_{d}^2$ in probability.
\end{prop}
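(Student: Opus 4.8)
The plan is to reduce the claim to the convergence of the individual fold-and-index variance estimates and then average. Concretely, I would show that for every fold $k$,
\[
\widehat{\sigma}^2_{d,k}\to_\P\sigma^2_{d^*}\qquad\text{and}\qquad\widehat{\sigma}^2_{d_{\max},k}\to_\P\sigma^2_{d^*},
\]
after which the conclusion is immediate, since $\widehat{\nu}_d^2=(1-\tau)K^{-1}\sum_{k=1}^K(\widehat{\sigma}^2_{d,k}+\widehat{\sigma}^2_{d_{\max},k})\to_\P(1-\tau)\cdot 2\sigma^2_{d^*}=\nu_d^2$. Under $H_{0,d}$ we have $d\ge d^*$ by Proposition~\ref{prop: equivalence of V}, and $d_{\max}\ge d^*$ always, so Conditions (C1)--(C3) apply to both indices; I will treat a generic term $\widehat{\sigma}^2_{d,k}$, as the $d_{\max}$ case is identical because (C3) uses the common reference $g_{d^*}(\bfR)$ for all $d\ge d^*$.

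The key device is to compare the learned loss with the (theoretical) oracle loss. Writing $L_i=\ell(\bfY_i,\widehat{g}_{-k;d}\circ\widehat{\varphi}_{-k}(\bfX_i))$ and $A_i=\ell(\bfY_i,g_{d^*}(\bfR_i))$, put $\Delta_i=L_i-A_i$. Because $\widehat{\V}_{-k}(\widehat{g}_{-k;d},\widehat{\Q}_k)$ is exactly the sample mean of $\{L_i\}_{i\in I_k}$, the estimate $\widehat{\sigma}^2_{d,k}$ is the sample variance of $\{L_i\}_{i\in I_k}$, and the bilinearity of the empirical (co)variance gives
\[
\widehat{\sigma}^2_{d,k}=\widehat{\mathrm{var}}(A)+2\,\widehat{\cov}(A,\Delta)+\widehat{\mathrm{var}}(\Delta),
\]
where $\widehat{\mathrm{var}}$ and $\widehat{\cov}$ are taken over $I_k$. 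I would then handle the three terms separately.

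For the first term, the $A_i$ are i.i.d. with finite second moment (since $\sigma^2_{d^*}<\infty$) and are independent of the training fold, so the ordinary weak law yields $\widehat{\mathrm{var}}(A)\to_\P\sigma^2_{d^*}$. For the third term I would condition on $\mathcal{F}_n:=\sigma(\widehat{g}_{-k;d},\widehat{\varphi}_{-k})$: the average $|I_k|^{-1}\sum_{i\in I_k}\Delta_i^2$ then has conditional mean $\E_\P[\Delta^2\mid\widehat{g}_{-k;d},\widehat{\varphi}_{-k}]$, which is precisely the quantity in (C3) and is therefore $o_\P(1)$. Using the elementary fact that a nonnegative sequence $W_n$ with $\E[W_n\mid\mathcal{F}_n]\to_\P 0$ obeys $W_n\to_\P 0$ (apply conditional Markov and bounded convergence to $\E[\min\{1,\epsilon^{-1}\E[W_n\mid\mathcal{F}_n]\}]$), I obtain $|I_k|^{-1}\sum_{i\in I_k}\Delta_i^2\to_\P 0$ and hence $\widehat{\mathrm{var}}(\Delta)\to_\P 0$. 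The cross term is then squeezed by Cauchy--Schwarz, $|\widehat{\cov}(A,\Delta)|\le\{\widehat{\mathrm{var}}(A)\,\widehat{\mathrm{var}}(\Delta)\}^{1/2}\to_\P 0$. Combining the three pieces gives $\widehat{\sigma}^2_{d,k}\to_\P\sigma^2_{d^*}$, and repeating with $d_{\max}$ finishes the argument.

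The hard part will be the coupling of two randomness sources: the learned nuisances $(\widehat{g}_{-k;d},\widehat{\varphi}_{-k})$ from the training fold and the empirical fluctuations on the held-out fold. The oracle/nuisance split $L_i=A_i+\Delta_i$ is exactly what disentangles them, isolating a term governed by a standard law of large numbers (independent of training) from a nuisance-error term controlled conditionally by (C3); the conditional-Markov lemma is then the tool that upgrades the conditional-mean bound of (C3) to unconditional convergence in probability, crucially without needing any higher-moment control on $L_i$ beyond $\sigma^2_{d^*}<\infty$.
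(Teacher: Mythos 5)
Your proposal is correct, and its skeleton coincides with the paper's: establish fold-wise consistency $\widehat{\sigma}^2_{d,k}\to\sigma^2_{d^*}$ and $\widehat{\sigma}^2_{d_{\max},k}\to\sigma^2_{d^*}$ in probability, then average over folds and conclude by continuity. The difference is in how that fold-wise step is handled: the paper does not carry it out at all, but defers it to ``the same arguments used in the proof of Proposition 2.1 and Remark 2.2 of \cite{chen2024zipper}'', whereas you prove it from scratch. Your self-contained route --- write the evaluated loss as $L_i=A_i+\Delta_i$ with $A_i=\ell(\bfY_i,g_{d^*}(\bfR_i))$ the oracle loss, expand the sample variance bilinearly into $\widehat{\mathrm{var}}(A)+2\,\widehat{\mathrm{cov}}(A,\Delta)+\widehat{\mathrm{var}}(\Delta)$, handle $\widehat{\mathrm{var}}(A)$ by the weak law for the i.i.d.\ oracle losses, kill $\widehat{\mathrm{var}}(\Delta)$ by bounding it with the raw second moment whose conditional mean is exactly the (C3) quantity and then upgrading via the conditional-Markov lemma, and squeeze the cross term by Cauchy--Schwarz --- is precisely the kind of argument the citation stands in for, and it buys two things. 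First, it makes the proposition self-contained within the paper's own conditions. Second, it is slightly more economical in hypotheses: your argument uses only (C3) and finiteness of $\sigma^2_{d^*}$, not (C2), whereas the paper invokes ``(C2)--(C3)''. Two points worth flagging if you write this up: (i) the decomposition is a purely theoretical device, since $A_i$ involves the unobserved $\bfR_i$ --- this is legitimate and parallels the paper's use of the unobserved empirical distribution $\widehat{\P}_k$ in the proof of Theorem \ref{thm:thm1}; (ii) your observation that $H_{0,d}$ forces $d\ge d^*$ (by minimality of $d^*$ and Proposition \ref{prop: equivalence of V}), so that (C3) applies to both $\widehat{g}_{-k;d}$ and $\widehat{g}_{-k;d_{\max}}$ with the common reference $g_{d^*}$, is exactly the reason both fold-wise limits equal the same $\sigma^2_{d^*}$ and hence why $\nu_d^2=2(1-\tau)\sigma^2_{d^*}$ emerges from the average.
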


Theorem \ref{thm:thm1}, Proposition \ref{prop:consistency}, and Slutsky's lemma yield that, under $H_0$, $$\{n/(2-\tau)\}^{1/2}\widehat{\psi}_{d}/\widehat{\nu}_{d}\rightsquigarrow\mathcal{N}(0,1).$$ Consequently, $\Pr(\widehat{d}>d^*)\leq \Pr(\widehat{T}_{d^*}\ge z_{1-\alpha})\to \alpha$, so the sequential procedure controls the asymptotic probability of overestimating the true order at level $\alpha$.

\subsection{Probability bounds for underestimation}

We next control the probability that the sequential selector $\widehat{d}$ in \eqref{equ:hatd} stops too early, i.e., returns a dimension strictly smaller than the predictive order $d^*$. To this end, we study the behavior of the test statistic $\widehat{T}_d$ under the alternative $H_{1,d}$ for $d\in\{0,\dots,d^*-1\}$.

For simplicity, we specialize in this subsection to squared loss. Under $H_{1,d}$ with $d<d^*$, let $\Delta_{d}:=\V_d-\V_{d^*}=\V(g_{d},\P)-\V(g_{d^*},\P)>0$ be the predictive gap. For each fold $k$, under mild conditions, we have
\begin{align*}
\widehat{\V}_{-k}(&\widehat{g}_{-k;d},\widehat{\Q}_k) - \widehat{\V}_{-k}(\widehat{g}_{-k;d^*},\widehat{\Q}_k) \nonumber\\
&=\Delta_{d} + \underbrace{\{\V(g_{d},\widehat{\P}_k)-\V(g_{d^*},\widehat{\P}_k)\}-\Delta_{d}}_{\text{(I')}} + \underbrace{\{\widehat{\V}_{-k}(\widehat{g}_{-k;d},\P)- \V(g_{d},\P)\}}_{\text{(II')}}+o_\P(n^{-1/2}).
\end{align*}
Term (I') is $O_\P(n^{-1/2})$ and is asymptotically normal. Term (II') is the population-risk effect of replacing $g_d$ with $\widehat g_{-k;d}\circ\widehat\varphi_{-k}$, parallel to Term (II); unlike the null regime, it is not necessarily $o_\P(n^{-1/2})$. For squared loss, it can be bounded in terms of the estimation error $\|\widehat{g}_{-k;d}\circ\widehat{\varphi}_{-k}-g_{d}\|_{L_2(\P)}$ together with the deterministic predictive gap $\Delta_d$. The following conditions are analogues of (C2)--(C3) for the regime $d<d^*$.

\begin{enumerate}
\item[(C2')](\textit{Learning accuracy}) For $1\leq d< d^*$, $\|\widehat{g}_{-k;d}\circ\widehat{\varphi}_{-k}-g_{d}\|_{L_2(\P)}=o_{\P}(n^{-1/4})$.

\item[(C3')](\textit{Consistency}) For $1\leq d< d^*$, $\E_{\P}[ \{\|\bfY-\widehat{g}_{-k;d}\circ\widehat{\varphi}_{-k}(\bfX)\|_2^2 -  \|\bfY-g_{d}(\bfR)\|_2^2\}^2\mid\widehat{g}_{-k;d},\widehat{\varphi}_{-k}] =o_{\P}(1)$.
\end{enumerate}

Condition (C2') presumes that, for $d<d^*$, $\bfR_{[d]}$ constitutes a well-posed target for $\widehat{\bfR}_{[d]}$. This identifiability typically follows from eigenvalue separation of the population operator or matrix. In Example \ref{eg:factor regression}, pervasiveness implies that the eigenvalues of $p^{-1}\bfB^{\top}\bfB$ are distinct and bounded away from $0$ and $\infty$; see \cite{fan2013large}. In Example \ref{eg:SDR}, an analogous condition is that the kernel matrix associated with the SDR procedure has distinct leading eigenvalues; see \cite{kyongwon2020post}. Consequently, the Davis-Kahan theorem \citep{davis1970rotation,yu2015useful} guarantees the existence of a rotation matrix $O$ such that $\|O\widehat{\bfR}_{[d]}-\bfR_{[d]}\|=o_{\P}(n^{-1/4})$. This bound in turn supports (C2') under standard regularity conditions and appropriate growth of $n$ and $p$, by the same argument used in the discussion of (C2). Section~\ref{sec:recovery_alter} of the Supplementary Material derives the above bound in two settings: Example~\ref{eg:factor regression}, where $\widehat{\varphi}$ is constructed by principal component analysis, and Example~\ref{eg:SDR}, where $\widehat{\varphi}$ is learned via kernel-matrix-based methods.

\begin{thm}[Underestimation bound]\label{thm:thm2}
Assume Conditions (C1)--(C3) and (C2')--(C3'). Fix $\tau\in[0,1)$. Then, for any $0<\delta<(\V_d-\V_{d^*})$,
\begin{equation*}
\Pr(\widehat{d}<d^*)
\leq \sum_{d=0}^{d^*-1}
\Phi\bigg( \dfrac{\nu_{d}}{\nu_{d,\eta}}z_{1-\alpha}-\dfrac{\{n/(2-\tau)\}^{1/2}(\V_d-\V_{d^*}-\delta)}{\nu_{d,\eta}} \bigg)+o(1),
\end{equation*}
where $\nu_{d}^2=(1-\tau)(\sigma^2_{d}+\sigma_{d^*}^2)$, $\nu_{d,\eta}^2=\nu_{d}^2+\tau\eta_{d}^2$, $\sigma_{d}^2=\mathrm{var}\{\|\bfY-g_{d}(\bfR)\|_2^2\}$, and $\eta_d^2=\mathrm{var}\{\|\bfY-g_{d}(\bfR)\|_2^2-\|\bfY-g_{d^*}(\bfR)\|_2^2 \}$.
\end{thm}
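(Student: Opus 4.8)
Because $\widehat d$ is the first index at which the sequential test fails to reject, the event $\{\widehat d<d^*\}$ forces $\widehat T_d<z_{1-\alpha}$ for some $d\in\{0,\dots,d^*-1\}$; hence $\{\widehat d<d^*\}\subseteq\bigcup_{d=0}^{d^*-1}\{\widehat T_d<z_{1-\alpha}\}$ and, by the union bound, $\Pr(\widehat d<d^*)\le\sum_{d=0}^{d^*-1}\Pr(\widehat T_d<z_{1-\alpha})$. It therefore suffices to bound each summand, for fixed $d<d^*$, by the corresponding $\Phi$-term plus $o(1)$.

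\textbf{Step 2 (expansion of $\widehat\psi_d$ under $H_{1,d}$).} Write $\Delta_d=\V_d-\V_{d^*}>0$. Since $\V_{d_{\max}}=\V_{d^*}$ and $g_{d_{\max}}=g_{d^*}$ are population minimizers (Proposition \ref{prop: equivalence of V}), I mimic the decomposition preceding the theorem but on the zipper split, writing
\[
\widehat\psi_d=\Delta_d+\mathrm A_n+\mathrm B_n+R_n .
\]
Here $\mathrm A_n$ is the centered oracle zipper contrast built from the \emph{fixed} rules $g_d,g_{d^*}$; $\mathrm B_n=K^{-1}\sum_{k}\{\widehat\V_{-k}(\widehat g_{-k;d},\P)-\V_d\}$ is the population learning effect on the under-dimensioned side; and $R_n$ gathers the difference-of-differences remainders and the $d_{\max}$-side learning effect. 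The $d_{\max}$-side sits in the null regime, so under (C1)--(C3) its learning effect is $o_{\P}(n^{-1/2})$ and joins $R_n$; cross-fitting renders the remaining remainders $o_{\P}(n^{-1/2})$ exactly as in Theorem \ref{thm:thm1}. For $\mathrm B_n$, the squared-loss identity together with the sufficiency structure in (C1), namely $\E_{\P}(\bfY\mid\bfR,\bfX)=g_{d^*}(\bfR)$, gives
\[
\widehat\V_{-k}(\widehat g_{-k;d},\P)-\V_d
=-2\big\langle g_{d^*}-g_d,\ \widehat g_{-k;d}\circ\widehat\varphi_{-k}-g_d\big\rangle_{L_2(\P)}
+\big\|\widehat g_{-k;d}\circ\widehat\varphi_{-k}-g_d\big\|_{L_2(\P)}^2 ,
\]
because the noise part $\bfY-g_{d^*}(\bfR)$ is $L_2(\P)$-orthogonal to every function of $(\bfR,\bfX)$. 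Using $\|g_{d^*}-g_d\|_{L_2(\P)}^2=\Delta_d$, Cauchy--Schwarz, and (C2'), each summand is $o_{\P}(\sqrt{\Delta_d}\,n^{-1/4})+o_{\P}(n^{-1/2})=o_{\P}(1)$, so $\mathrm B_n=o_{\P}(1)$. The decisive variance calculation is for $\mathrm A_n$: the shared block $I_{k,o}$ is a paired evaluation contributing $\eta_d^2/|I_{k,o}|$, while the disjoint blocks $I_{k,a},I_{k,b}$ contribute $(\sigma_d^2+\sigma_{d^*}^2)/|I_{k,a}|$; substituting $|I_{k,o}|=|I_k|\tau/(2-\tau)$ and $|I_{k,a}|=|I_{k,b}|=|I_k|(1-\tau)/(2-\tau)$ and aggregating the independent fold contributions, a Lindeberg central limit theorem yields
\[
\{n/(2-\tau)\}^{1/2}\mathrm A_n\rightsquigarrow\mathcal N(0,\nu_{d,\eta}^2),\qquad
\nu_{d,\eta}^2=(1-\tau)(\sigma_d^2+\sigma_{d^*}^2)+\tau\eta_d^2 .
\]
In parallel, the variance estimator is assembled from per-rule loss variances and ignores the overlap-induced pairing, so by the argument of Proposition \ref{prop:consistency} adapted to the alternative (using (C2')--(C3')), $\widehat\nu_d^2\to\nu_d^2=(1-\tau)(\sigma_d^2+\sigma_{d^*}^2)$ in probability. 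This deliberate mismatch is exactly what produces the ratio $\nu_d/\nu_{d,\eta}$ in the stated bound.

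\textbf{Step 3 (one-sided conversion via the slack and Polya).} Fix $\delta\in(0,\Delta_d)$. Since $\mathrm B_n=o_{\P}(1)$ and $\delta>0$ is fixed, $\Pr(\mathrm B_n<-\delta)\to0$; on its complement $\widehat\psi_d\ge\Delta_d-\delta+\mathrm A_n+R_n$, so only a \emph{one-sided} control of $\mathrm B_n$ is needed. Writing $B_n=\{n/(2-\tau)\}^{1/2}\mathrm A_n$ (so $B_n/\nu_{d,\eta}\rightsquigarrow\mathcal N(0,1)$) and using $\widehat\nu_d=\nu_d+o_{\P}(1)$ and $\{n/(2-\tau)\}^{1/2}R_n=o_{\P}(1)$,
\[
\Pr(\widehat T_d<z_{1-\alpha})
\le\Pr\!\Big(B_n<z_{1-\alpha}\nu_d-\{n/(2-\tau)\}^{1/2}(\Delta_d-\delta)+\zeta_n\Big)+o(1),
\]
with $\zeta_n=o_{\P}(1)$. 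The centering $\{n/(2-\tau)\}^{1/2}(\Delta_d-\delta)\to\infty$ pushes the threshold to $-\infty$, so weak convergence alone is insufficient; since the limit law is continuous, Polya's theorem upgrades it to $\sup_t|\Pr(B_n/\nu_{d,\eta}\le t)-\Phi(t)|\to0$, which permits evaluation at the diverging argument. Splitting off $\{\zeta_n>\epsilon\}$ for arbitrary $\epsilon>0$, applying Polya, using the Lipschitz bound $|\Phi(s+\epsilon/\nu_{d,\eta})-\Phi(s)|\le\epsilon/(\nu_{d,\eta}\sqrt{2\pi})$, and letting $\epsilon\downarrow0$ give
\[
\Pr(\widehat T_d<z_{1-\alpha})
\le\Phi\!\Big(\frac{\nu_d}{\nu_{d,\eta}}z_{1-\alpha}-\frac{\{n/(2-\tau)\}^{1/2}(\V_d-\V_{d^*}-\delta)}{\nu_{d,\eta}}\Big)+o(1).
\]
Summing over $d=0,\dots,d^*-1$ gives the theorem.

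\textbf{Main obstacle.} The crux is $\mathrm B_n$: unlike the null regime of Theorem \ref{thm:thm1}, the under-dimensioned learning effect is only $o_{\P}(1)$ (in fact merely $o_{\P}(n^{-1/4})$, because of the $\sqrt{\Delta_d}$-weighted cross term) and cannot be folded into the $n^{-1/2}$ central-limit fluctuation. The resolution exploits that underestimation requires only a one-sided upper bound, so $\mathrm B_n$ is absorbed by the fixed slack $\delta$ rather than by the limit law---at the cost of the $-\delta$ inside $\Phi$. A secondary point is that the deterministic centering diverges, forcing evaluation of the normal approximation in its far left tail; this is why Polya's uniform-CDF theorem (not plain weak convergence) is required, and why the two distinct scales $\nu_{d,\eta}$ (the numerator fluctuation, which carries the overlap term $\tau\eta_d^2$) and $\nu_d$ (the estimator limit, which does not) must be tracked separately throughout.
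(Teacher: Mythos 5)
Your proof follows essentially the same route as the paper's: the same union bound over $d\in\{0,\dots,d^*-1\}$, the same decomposition of $\widehat{\psi}_d$ into the gap $\V_d-\V_{d^*}$ plus an oracle zipper contrast with limiting variance $\nu_{d,\eta}^2=(1-\tau)(\sigma_d^2+\sigma_{d^*}^2)+\tau\eta_d^2$ plus an under-dimensioned learning effect that is only $o_{\P}(n^{-1/4})$ and hence must be absorbed by the fixed slack $\delta$, together with the same deliberate variance mismatch ($\widehat{\nu}_d^2\to\nu_d^2$, not $\nu_{d,\eta}^2$) and the same appeal to Polya's theorem to evaluate the normal approximation at a diverging threshold. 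The only differences are cosmetic: the paper bounds the cross term $A'_{n,k,d}$ by Cauchy--Schwarz against $\sqrt{\mathrm{tr}(\mathrm{var}_{\P}(\bfY))}$ rather than your sufficiency-based $\sqrt{\Delta_d}$ bound, and it treats $d=0$ as a separate case (there $\widehat{g}_{-k;0}=\bar{\bfY}_{-k}$, giving an $O_{\P}(n^{-1})$ learning effect) because (C2')--(C3') are stated only for $1\le d<d^*$ --- a case your argument handles implicitly but should flag.
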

The parameter $\delta>0$ serves as a margin that accounts for the estimation error incurred when replacing $g_d$ with $\widehat{g}_{-k;d}\circ\widehat{\varphi}_{-k}$. Theorem \ref{thm:thm2} further indicates that, asymptotically, larger overlap $\tau$ decreases the probability of underestimating $d^*$.

\subsection{Consistency}

Combining overestimation control \eqref{eq:FWER} with Theorem \ref{thm:thm2} yields a simple consistency criterion. Let the nominal level depend on $n$ and write $\alpha=\alpha_n\to 0$. Theorem \ref{thm:thm1} implies $\Pr(\widehat{d}>d^*)\leq \Pr(\widehat{T}_{d^*}\ge z_{1-\alpha_n})\to 0$. If $z_{1-\alpha_n}/\sqrt{n}\to 0$ (equivalently $(\log\alpha_n)/n\to0$),  the underestimation bound forces $\Pr(\widehat{d}<d^*)\to 0$.

\begin{thm}[Consistency]\label{thm:thm3}
Under the assumptions of Theorems \ref{thm:thm1}--\ref{thm:thm2}, if $\alpha_n\to0$ and $(\log\alpha_n)/n\to0$, then $\lim_{n\to\infty}\Pr(\widehat{d}=d^*)=1$.
\end{thm}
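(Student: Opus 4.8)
The plan is to prove the two-sided statement $\Pr(\widehat d\neq d^*)\to 0$ by splitting it along the partition $\{\widehat d<d^*\}\cup\{\widehat d=d^*\}\cup\{\widehat d>d^*\}$. Since $\widehat d$ takes values in $\{0,\dots,d_{\max}\}$ and $d^*\le d_{\max}$, these three events are disjoint and exhaustive, so $\Pr(\widehat d=d^*)=1-\Pr(\widehat d<d^*)-\Pr(\widehat d>d^*)$, and it suffices to drive each error probability to zero using, respectively, the overestimation control in \eqref{eq:FWER} and the underestimation bound in Theorem~\ref{thm:thm2}.

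For overestimation, I would start from \eqref{eq:FWER}, namely $\Pr(\widehat d>d^*)\le\Pr(\widehat T_{d^*}\ge z_{1-\alpha_n})$. Because $\V_{d^*}=\V_{d_{\max}}$ by the definition of $d^*$, the null $H_{0,d^*}$ holds, so Theorem~\ref{thm:thm1}, Proposition~\ref{prop:consistency} and Slutsky's lemma give $\widehat T_{d^*}\rightsquigarrow\mathcal N(0,1)$; in particular $\widehat T_{d^*}=O_{\P}(1)$. Since $\alpha_n\to0$ forces $z_{1-\alpha_n}\to\infty$, a tight statistic cannot keep pace with a diverging threshold: for each fixed $M$ we have $z_{1-\alpha_n}\ge M$ eventually, whence $\limsup_n\Pr(\widehat T_{d^*}\ge z_{1-\alpha_n})\le\lim_n\Pr(\widehat T_{d^*}\ge M)=1-\Phi(M)$, and letting $M\to\infty$ yields $\Pr(\widehat d>d^*)\to0$.

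For underestimation, I would invoke Theorem~\ref{thm:thm2} with a single margin $\delta$ chosen so that $0<\delta<\min_{0\le d<d^*}(\V_d-\V_{d^*})$; this minimum is strictly positive because Proposition~\ref{prop: equivalence of V} and the definition of $d^*$ force $\V_d>\V_{d^*}$ for every $d<d^*$. It then remains to show that each summand $\Phi(\cdot)$ vanishes, i.e.\ that its argument tends to $-\infty$. The first term obeys $(\nu_d/\nu_{d,\eta})z_{1-\alpha_n}\le z_{1-\alpha_n}$ since $\nu_{d,\eta}^2=\nu_d^2+\tau\eta_d^2\ge\nu_d^2$, and the rate hypothesis $(\log\alpha_n)/n\to0$ combined with the Gaussian tail asymptotic $z_{1-\alpha_n}\sim\{2\log(1/\alpha_n)\}^{1/2}$ gives $z_{1-\alpha_n}=o(n^{1/2})$. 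The second term equals $\{n/(2-\tau)\}^{1/2}(\V_d-\V_{d^*}-\delta)/\nu_{d,\eta}$, which is of exact order $n^{1/2}$ with a strictly positive coefficient because $\V_d-\V_{d^*}-\delta>0$ and $\nu_{d,\eta}$ is a fixed finite constant bounded away from zero. Hence the argument is $o(n^{1/2})-\Theta(n^{1/2})\to-\infty$, so each $\Phi(\cdot)\to0$; summing the finitely many terms $d\in\{0,\dots,d^*-1\}$ together with the $o(1)$ remainder yields $\Pr(\widehat d<d^*)\to0$, and combining the two bounds completes the proof.

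The calculations are essentially bookkeeping once Theorems~\ref{thm:thm1}--\ref{thm:thm2} are in hand, so the only points requiring care are (i) the passage from level-$\alpha$ control to vanishing rejection probability, which hinges on the tightness $\widehat T_{d^*}=O_{\P}(1)$ rather than on any quantitative rate, and (ii) the translation of the hypothesis $(\log\alpha_n)/n\to0$ into $z_{1-\alpha_n}=o(n^{1/2})$ via Gaussian-quantile asymptotics, while checking that the variance constants $\nu_{d,\eta}$ remain bounded away from $0$ and $\infty$ so that the deterministic gap term genuinely dominates. I expect (ii) to be the main, though still routine, obstacle.
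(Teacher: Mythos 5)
Your overall route coincides with the paper's: split $\{\widehat d\neq d^*\}$ into over- and underestimation, kill the first using the null distribution of $\widehat T_{d^*}$ against the diverging threshold $z_{1-\alpha_n}$, and kill the second by showing that the Gaussian bound's argument tends to $-\infty$ because $z_{1-\alpha_n}=o(n^{1/2})$ while the predictive-gap term grows like $n^{1/2}$. Your overestimation argument (tightness of $\widehat T_{d^*}$ versus a diverging threshold, then letting $M\to\infty$) is a touch more elementary than the paper's, which bounds $\Pr(\widehat T_{d^*}\ge z_{1-\alpha_n})$ by $\sup_{\beta\in(0,1)}\lvert\Pr(\widehat T_{d^*}\ge z_{1-\beta})-\beta\rvert+\alpha_n$ via Polya's theorem; both are correct.

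The genuine gap is in the underestimation step: you invoke Theorem~\ref{thm:thm2} verbatim with the level $\alpha$ replaced by the drifting sequence $\alpha_n$. Theorem~\ref{thm:thm2} is an asymptotic statement for a \emph{fixed} level: its $o(1)$ remainder is obtained by replacing the random threshold $z_{1-\alpha}\widehat\nu_d/\nu_{d,\eta}$ by the constant $z_{1-\alpha}\nu_d/\nu_{d,\eta}$, a step that only needs consistency of $\widehat\nu_d$ (Proposition~\ref{prop:consistency}) when $z_{1-\alpha}$ is fixed. Once $z_{1-\alpha_n}\to\infty$, the discarded cross term is $z_{1-\alpha_n}(\widehat\nu_d-\nu_d)/\nu_{d,\eta}$, and mere consistency of $\widehat\nu_d$ no longer makes it vanish, so the $o(1)$ in Theorem~\ref{thm:thm2} cannot simply be carried over. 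This is exactly what the paper's proof of Theorem~\ref{thm:thm3} does differently: it re-derives the bound with the moving threshold, isolating the term $(\widehat\nu_d-\nu_d)z_{1-\alpha_n}/\nu_{d,\eta}$ and using the rate $\widehat\nu_d-\nu_d=O_{\P}(n^{-1/2})$ so that this term is $o_{\P}(1)$ precisely when $z_{1-\alpha_n}=o(n^{1/2})$. In other words, the hypothesis $(\log\alpha_n)/n\to0$ is used twice: once for the drift-versus-gap comparison you do make, and once to re-validate the Gaussian approximation itself with a moving threshold, which you do not. Relatedly, the obstacle you flag in your point (ii) --- that $\nu_{d,\eta}$ be bounded away from $0$ and $\infty$ --- is not the real issue (these are fixed population constants for each $d<d^*$); the issue is the interaction between the diverging quantile and the estimation error of $\widehat\nu_d$. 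The repair is short (rerun the proof of Theorem~\ref{thm:thm2} keeping $z_{1-\alpha_n}$ explicit, as the paper does), but without it the appeal to Theorem~\ref{thm:thm2} is not licensed as stated.
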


Thus any vanishing sequence $\alpha_{n}$ that decays more slowly than $\exp(-cn)$ guarantees that the sequential selector recovers the oracle order with probability tending to one.

\section{Numerical studies}\label{sec:num}

\subsection{Simulation}\label{sec:simu}

We examine the finite-sample behavior of POD in two canonical settings: factor regression (Example \ref{eg:factor regression}) and sufficient dimension reduction (Example \ref{eg:SDR}). Our goals are to (i) assess the empirical size and power of the individual predictiveness-gap tests, and (ii) evaluate the accuracy of the resulting order estimator.

Each configuration is replicated $500$ times, and results are presented as Monte Carlo averages. Unless stated otherwise, we implement POD with $K=5$ folds, search limit $d_{\max}=8$, and overlap proportion $\tau=0.8$. Preliminary experiments indicated that refitting the reduction map within folds versus fitting it once on the full sample led to negligible differences in the reported metrics; accordingly, we fit $\widehat\varphi$ once using all observations and reuse it across folds to reduce computational cost. Within each training fold, we fit prediction rules from a flexible class $\mathcal{F}$ that includes multiple candidate learners (and, where applicable, tuning parameters). We select within $\mathcal{F}$ using two-fold cross-validation under the loss $\ell$, based only on the training data. The resulting fitted rule is then evaluated on the held-out fold to construct $\widehat T_d$; see Algorithm \ref{alg:pod}. This multi-model class can improve post-reduction prediction by adaptively selecting a strong learner among candidates.

\subsubsection{Factor regression}\label{subsec:simulation for factor}

\paragraph{Data-generating mechanism.} In this experiment, we set the true order $d^*_{\rm F}=5$, representing latent factors $\bff \sim \mathcal{N}(0,I_{d^*_{\rm F}})$ that drive both the response and the covariates. Specifically, $Y=m(\bff)+\varepsilon$ with $m(\bff)=f_1+2f_2+f_3+3f_4+2f_5$ and $\bfX=\bfB\bff+\bfu$, where $\varepsilon\sim\mathcal{N}(0,0.1)$ and $\bfu\sim\mathcal{N}(0, {\rm diag}(v_1^2,\dots,v_p^2))$ are independent noise terms. We take $n=500$ and $p=1,000$. We investigate two different regimes for the loadings and variances:
\begin{itemize}
\item Weak factors: The loadings follow \cite{harding2013estimating}, with variances $v_j=0.55^2$ for $j\in[d^*_{\rm F}]$. Specifically, for each $j$, $b_{lj}=\sqrt{3p^{-1/2}}$ for $l\leq d^*_{\rm F}$ and $b_{lj}=a_{lj}\sqrt{3(p-j)^{-1}}$ for $l>d^*_{\rm F}$, where $a_{lj}=-1$ if $l=rj$ ($r=1,2,\ldots$) and $a_{lj}=1$ otherwise.
\item Pervasive factors: The loadings are i.i.d. from $b_{lj}\sim{\rm Unif}(0,j)$, with variances $v_j^2=25$.
\end{itemize}

\paragraph{Implementation.} The reduction step uses principal component analysis on the sample covariance matrix of $X$. The first $d_{\max}$ eigenvectors are used to form $\widehat{\varphi}$. We take $\mathcal{F}$ to include ordinary least squares (OLS), multivariate adaptive regression splines (MARS, implemented via the \texttt{R} package \texttt{earth} with default settings), and a two-layer neural network (implemented via \texttt{nnet}, with $5$ neurons in the hidden layer). Squared loss is used, so the target order is $d^*(\V)=d^*_{\rm F}$.

\paragraph{Test size and power.} We benchmark POD against two eigenvalue-based sequential tests proposed by \cite{onatski2009testing} and \cite{kapetanios2010testing}. These methods sequentially test $H'_{0,d}:d=d^*_{\rm F}$ versus $H_{1,d}:d<d^*_{\rm F}$ for $d=0,1,\dots,d_{\max}$. The test statistic in \cite{onatski2009testing} is $\max_{d<i\leq d_{\max}}(\widetilde{\lambda}_i-\widetilde{\lambda}_{i+1})/(\widetilde{\lambda}_{i+1}-\widetilde{\lambda}_{i+2})$, where $\widetilde{\lambda}_i$ is the $i$th largest eigenvalue of matrix $\sum_{i=1}^{\lfloor\frac{n}{2}\rfloor}\widetilde{\bfX}_i\widetilde{\bfX}_i^{\top}/\lfloor \frac{n}{2} \rfloor$ with $\widetilde{\bfX}_i=\bfX_i+\sqrt{-1}\bfX_{i+\lfloor \frac{n}{2}\rfloor}$. For \cite{kapetanios2010testing}, the test statistic is $\widehat{\lambda}_{d+1}-\widehat{\lambda}_{d_{\max}+1}$, where $\widehat{\lambda}_i$ is the $i$th largest eigenvalue of the sample covariance matrix. Critical values for both tests are obtained via simulation or subsampling as recommended in the original papers.

\begin{table}[!h]
\renewcommand{\arraystretch}{1.25}
\centering
\caption{Rejection proportions (\%) for each step of the sequential test at $\alpha=5\%$.}
\label{tab:ST factor}
\small
\begin{tabular}{llrrrrr|rr}
\toprule
Scenario & Method     &{$d=0$} &{$d=1$} & {$d=2$} & {$d=3$} & {$d=4$} & {$d=5$} & {$d=6$} \\
\midrule
Weak factors    & POD        & 100     & 100     & 100     & 100     & 100     & 5.4    & 5.4     \\
& \cite{onatski2009testing}    & 22.4      & 24.4      & 25.8    & 29.8    & 27.2    & 3.4   & 3.4    \\
& \cite{kapetanios2010testing}\ & 100     & 100     & 100     & 100     & 100     & 69.4     & 38      \\
\midrule
Pervasive factors    & POD        & 100     & 100     & 100     & 100     & 100     & 6    & 4.8       \\
& \cite{onatski2009testing}\    & 100     & 90.2    & 93      & 96.2    & 50.2    & 5   & 4    \\ 
& \cite{kapetanios2010testing}\ & 100     & 100     & 100     & 100     & 100     & 19.8     & 10.6       \\
\bottomrule
\end{tabular}
\end{table}

Table~\ref{tab:ST factor} reports rejection proportions at $\alpha=5\%$. POD and \cite{onatski2009testing} maintains the nominal size when $d\geq d^*_{\rm F}$; for $d<d^*_{\rm F}$, POD is markedly more powerful. In contrast, \cite{kapetanios2010testing} substantially over-rejects, in line with the size distortions reported in \cite{onatski2009testing}.

\paragraph{Order estimation.} We compare POD with three eigenvalue-based alternatives: the information criteria (IC) from \cite{bai2002determining} (see \eqref{eq:IC} in the Supplementary Material), the eigenvalue ratio (ER) estimator from \cite{ahn2013eigenvalue}, and the adjusted correlation thresholding (ACT) method from \cite{fan2022estimating}.

\begin{figure}[!h]
\centering
\includegraphics[width=.6\linewidth]{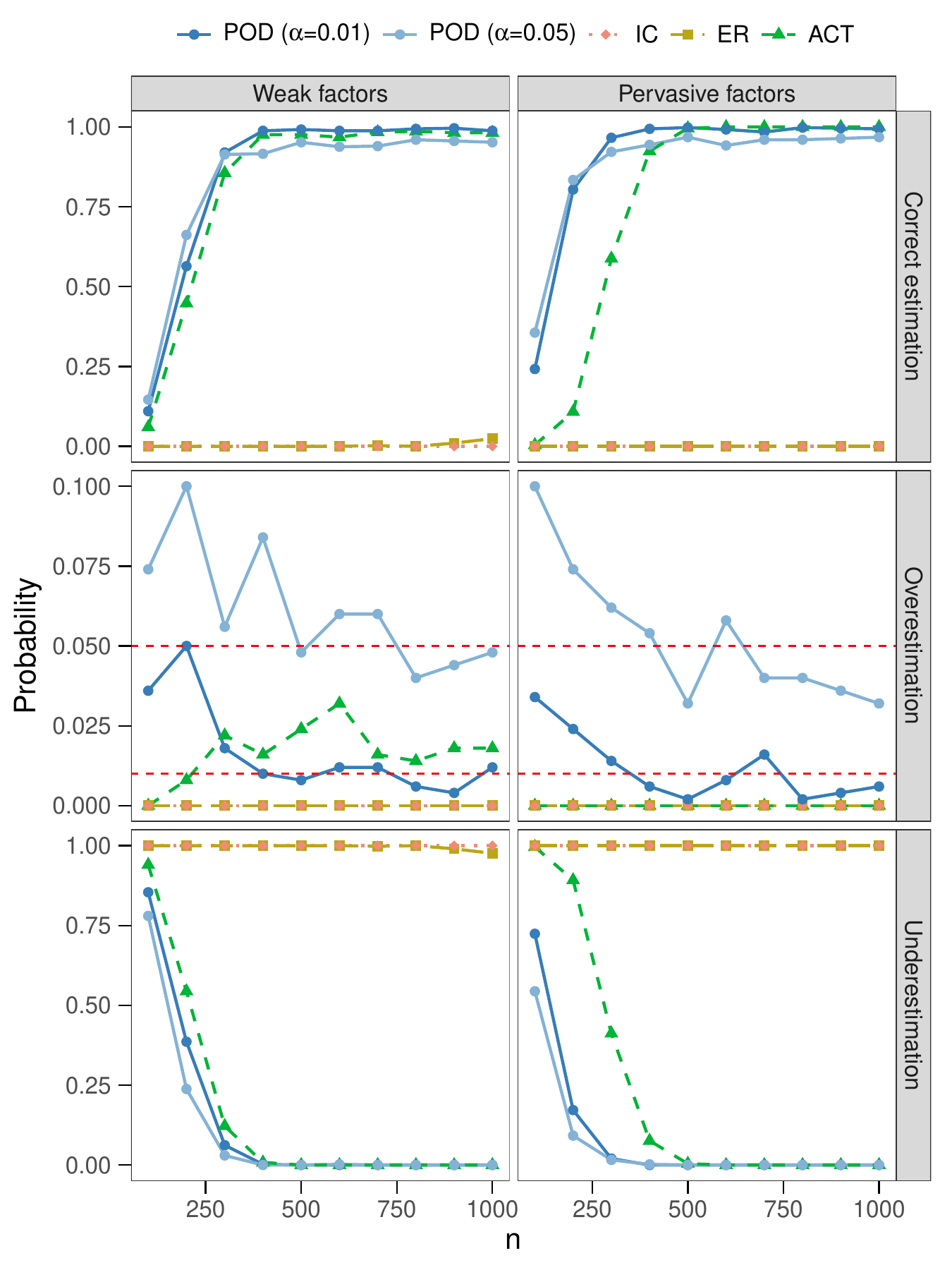}
\caption{Empirical probabilities of correct estimation ($\widehat{d}=d^*_{\rm F}$), overestimation ($\widehat{d}>d^*_{\rm F}$), and underestimation ($\widehat{d}<d^*_{\rm F}$) across $n$ for factor regression. Red dashed horizontal lines in the second row mark the nominal significance levels $\alpha=0.01$ and $\alpha=0.05$.} 
\label{Fig: line_factor}.
\end{figure}

Figure~\ref{Fig: line_factor} shows the probabilities of correct estimation, overestimation, and underestimation of the number of factors as $n$ increases. POD quickly converges to the true order, with its overestimation rate stabilizing at the prescribed $\alpha$ and the underestimation rate approaching zero---consistent with the theory presented in Section \ref{sec:theory}. By comparison, IC and ER exhibit underestimation over the sample sizes considered, consistent with the behavior noted in \cite{fan2022estimating}, whereas ACT converges slowly.

\subsubsection{Sufficient dimension reduction}\label{subsec: simulation for SDR}

\paragraph{Test size and power.} Let $\bfX\sim\mathcal{N}(0,I_p)$ with $p=10$. Consider
\begin{align}
&\text{Model 1}:\qquad Y=X_1+X_2+X_3+X_4+\sigma\varepsilon, \nonumber\\
&\text{Model 2}:\qquad Y=0.4(X_1+X_2+X_3)^2+3\sin[(X_1+X_9+3X_{10})/4]+\sigma\varepsilon, \nonumber
\end{align}
where $\epsilon\sim \mathcal{N}(0,1)$ is independent of $\bfX$ and $\sigma=0.5$. In Model 1, the sample sizes are $n\in\{100,200\}$, and in Model 2, we use $n\in\{200,300\}$. The structural dimensions are $d^*_{\rm CS}=1$ for Model 1 and $d^*_{\rm CS}=2$ for Model 2, with $d^*_{\rm CS}=d^*_{\rm CMS}$ in both cases.

POD uses squared loss to quantify predictiveness, so $d^*(\V)=d^*_{\rm CMS}$. The reduction map is estimated by sliced inverse regression \citep[SIR,][]{li1991sliced} for Model 1, and by directional regression \citep[DR,][]{li2007directional} for Model 2, with $10$ slices for SIR and $4$ slices for DR. We set $\mathcal{F}$ to include OLS, MARS, and regression trees (implemented via the \texttt{R} package \texttt{tree} with default settings).

For benchmarking, we implement two sequential tests proposed by \cite{bura2011dimension} on the same dimension reduction methods as POD for each model. These methods are: (i) a test based on the smallest eigenvalues of the method-specific kernel matrix, with an asymptotic weighted chi-square ($\chi^2$) limit (BY, weighted-$\chi^2$), and (ii) a Wald-type test with an asymptotic $\chi^2$ limit (BY, $\chi^2$). Each procedure tests $H'_{0,d}:d=d^*_{\rm CS}$ versus $H_{1,d}:d<d^*_{\rm CS}$ sequentially for $d=0,1,\ldots,d_{\max}$.

\begin{table}[!h]
\renewcommand{\arraystretch}{1.25}
\centering
\caption{Rejection proportions (\%) for each step of the sequential test at $\alpha=5\%$.}
\label{tab:ST}
{\small
\begin{tabular}{lllrrrrrr}
\toprule
Model                             & $n$                    & Method                      &{$d=0$} &{$d=1$} & {$d=2$} & {$d=3$} & {$d=4$} & {$d=5$}  \\ 
\midrule
1                                  & 100                  & POD                    & \multicolumn{1}{r|}{100}    & 5    & 7.2    & 7.2    & 4.6    & 6.2    \\
&                                     & BY, weighted-$\chi^2$                & \multicolumn{1}{r|}{100}     & 3.4    & 0.2      & 0      & 0      & 0      \\
&                                 & BY, $\chi^2$                    & \multicolumn{1}{r|}{100}    & 100    & 97.4   & 42.8   & 3      & 0      \\ \cline{2-9} 
&                   200                 & POD & \multicolumn{1}{r|}{100}    & 5      & 5.4      & 4.4    & 5.4      & 4.6    
\\ &    & BY, weighted-$\chi^2$ &    \multicolumn{1}{r|}{100}  & 5.2    & 0.2      & 0      & 0      & 0      \\
&    & BY, $\chi^2$ & \multicolumn{1}{r|}{100}     & 82.6   & 22.6   & 1.6    & 0      & 0      \\ 
\midrule
2                                   & 200                  & POD                     & 99.8    & \multicolumn{1}{r|}{96}     & 6.8    & 7      & 7      & 4.8    \\
&                                         & BY, weighted-$\chi^2$                   & 100     & \multicolumn{1}{r|}{52}     & 0.8    & 0.2    & 0      & 0      \\
&                                    & BY, $\chi^2$                      & 100     & \multicolumn{1}{r|}{100}    & 100    & 100    & 100    & 100    \\ \cline{2-9} 
& 300                  & POD                     & 100     & \multicolumn{1}{r|}{99}     & 4.6    & 4.8    & 5.2      & 5.4    \\
&                                          & BY, weighted-$\chi^2$                   & 100     & \multicolumn{1}{r|}{87}    & 1      & 0      & 0      & 0      \\
&   & BY, $\chi^2$                    & 99.8     & \multicolumn{1}{r|}{99.8}    & 80    & 54.4    & 30.6    & 13.4   \\ \bottomrule
\end{tabular}}%
\end{table}

Table~\ref{tab:ST} shows rejection proportions at $\alpha=0.05$ for each step of the sequential test. POD and the BY test with weighted-$\chi^2$ limits maintain size under their respective nulls (POD: $H_{0,d}:d\geq d^*_{\rm CS}$, weighted-$\chi^2$: $H'_{0,d}:d=d^*_{\rm CS}$). In Model 2, POD attains higher power than the weighted-$\chi^2$ test under the alternative $d=1$, while the latter tends to be conservative. In contrast, the Wald-type test with $\chi^2$ limits over-rejects in many configurations, consistent with \cite{bura2011dimension}, who note that substantially larger sample sizes may be required for the $\chi^2$ asymptotic calibration to be accurate.

\paragraph{Order estimation.} Let $\bfX\sim\mathcal{N}(0,I_p)$ with $p=10$. Consider
\begin{align}
&\text{Model 3}:\qquad Y=\sin(X_1)+\sigma\varepsilon, \nonumber\\
&\text{Model 4}:\qquad Y=X_1^2+0.5\sin(X_2)+\sigma\varepsilon, \nonumber\\
&\text{Model 5}:\qquad Y=\lvert X_1\rvert+X_2(X_2+X_3+1)+\sigma\varepsilon,\nonumber
\end{align}
where $\epsilon\sim\mathcal{N}(0,1)$ is independent of $\bfX$ and $\sigma=0.5$. The structural dimensions are $d^*_{\rm CS}=1$ for Model 3 \citep{luo2021order}, $d^*_{\rm CS}=2$ for Model 4, and $d^*_{\rm CS}=3$ for Model 5; in all cases, $d^*_{\rm CS}=d^*_{\rm CMS}$.

POD uses squared-loss predictiveness, hence $d^*(\V)=d^*_{\rm CMS}$. The reduction maps for Models 3--5 are estimated by minimum average variance estimation (MAVE; \citealp{xia2002adaptive}). Prediction rules include OLS, MARS, and regression trees. We compare POD with three alternatives: (i) the ladle estimator \citep{luo2016combining}, which combines the eigenvalue scree with bootstrap variability of the kernel-matrix eigenvectors---to specify kernel, we apply SIR with $10$ slices to Model 3 and DR with $4$ slices to Models~4--5, so that the targeted matrix rank aligns with the oracle $d^*_{\rm CS}$; (ii) a cross-validation (CV) criterion based on MAVE, targeting $d^*_{\rm CMS}$; and (iii) a CV criterion based sliced regression \citep[SR,][]{wang2008sliced}, targeting $d^*_{\rm CS}$.

\begin{figure}[!h]
\centering
\includegraphics[width=0.7\linewidth]{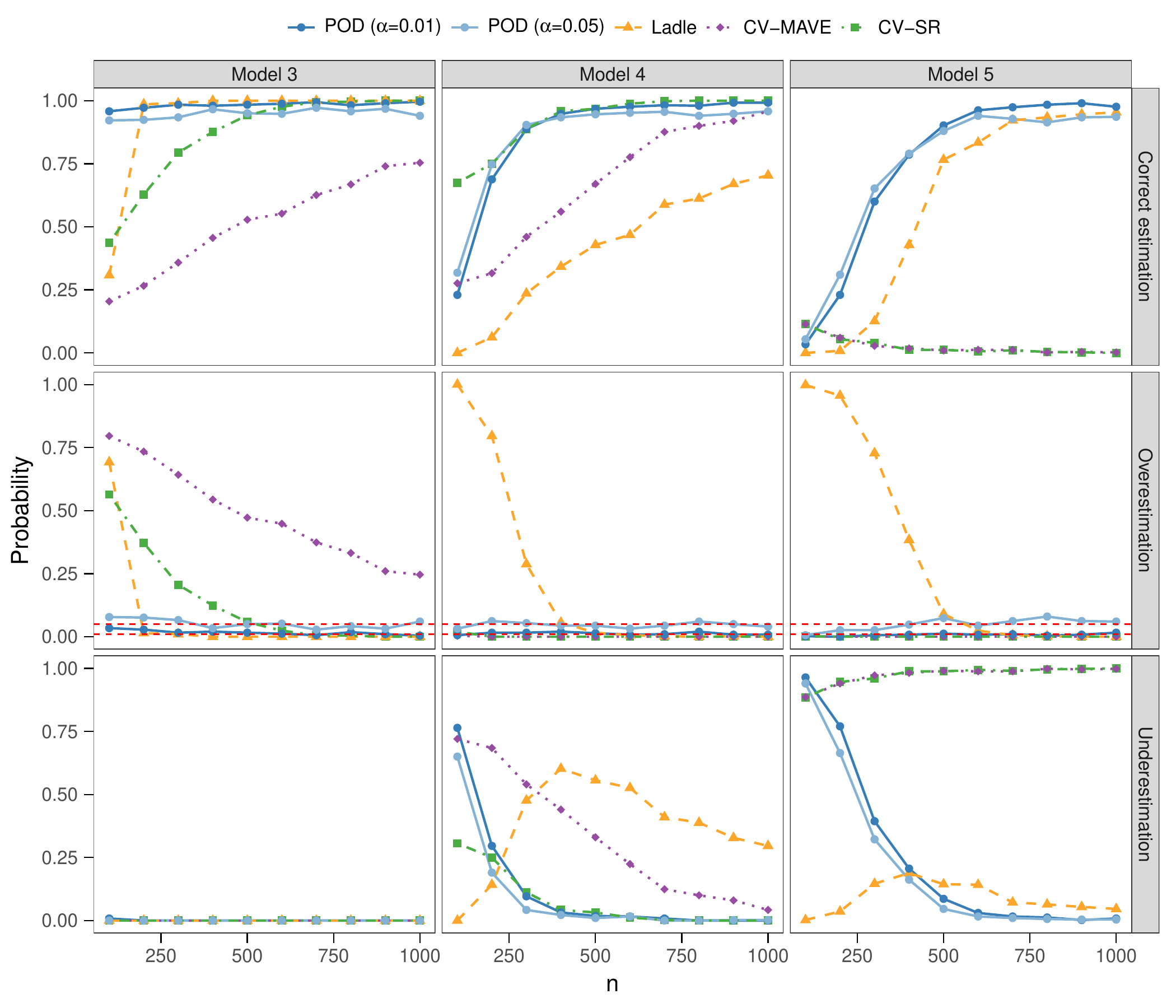}
\caption{Empirical probabilities of correct estimation ($\widehat{d}=d^*_{\rm CMS}$), overestimation ($\widehat{d}>d^*_{\rm CMS}$), and underestimation ($\widehat{d}<d^*_{\rm CMS}$) across $n$ for SDR. Red dashed horizontal lines in the second row mark the nominal significance levels $\alpha=0.01$ and $\alpha=0.05$.} 
\label{Fig: percentages of correct order estimation}.
\end{figure}

Figure~\ref{Fig: percentages of correct order estimation} displays the probabilities of correct estimation, overestimation, and underestimation of $d^*_{\rm CMS}$ as $n$ increases. In all models, POD converges more rapidly than the competitors. Its overestimation probability stabilizes at the nominal level, while the probability of underestimation vanishes. POD can benefit from using a richer candidate class $\mathcal F$, because performance based on a single learner can depend on the data-generating regime. In Section~\ref{sec:ftm simulation} of the Supplementary Material, restricting $\mathcal F$ to OLS only or to regression trees only yields a noticeable loss in accuracy, whereas using MARS alone remains satisfactory; this motivates including multiple candidates in $\mathcal F$.

Results for categorical responses are deferred to Section~\ref{sec:categorical response} of the Supplementary Material.

\subsubsection{Effect of the loss function}

We investigate how the choice of the loss function affects POD through its target dimension $d^*(\V)$. Let $\bfX\sim\mathcal{N}(0,\Sigma)$ with $p=10$ and $\Sigma_{i,j}=(0.5)^{\lvert i-j\rvert}$. Consider a Bernoulli response $Y$ with $\Pr(Y=1\mid X_1>0)=1$ and $\Pr(Y=1\mid X_1\leq 0)=0.6$ \citep{cook2001theory}, so $Y$ depends only on $X_1$. In this setting, $d^*_{\rm CS}=d^*_{\rm CMS}=1$. However, because the Bayes' classifier always predicts $Y=1$, the central discriminant subspace \citep[CDS,][]{cook2001theory} has dimension $d^*_{\rm CDS}=0$. We draw $n=2000$ observations. 

For this problem, under cross-entropy loss, POD targets $d^*(\V)=d^*_{\rm CS}=1$, while under 0-1 loss, it targets $d^*(\V)=d^*_{\rm CDS}=0$. The reduction map is estimated by DR, and the prediction rule is selected from a support vector machine (implemented via the R package \texttt{e1071}, with default settings) and classification trees. We compare POD with three alternatives: the ladle estimator applied to DR, and cross-validation criteria based on MAVE and SR; all methods targets $d^*_{\rm CS}=d^*_{\rm CMS}=1$.

\begin{table}[!h]
\renewcommand{\arraystretch}{1.25}
\centering
\caption{Proportions (\%) of estimated structural dimensions. ave($\widehat{d}$) is the average of the estimates.}
\label{tab:d* under different V}
{\small
\begin{tabular}{lrrcrrrrr}
\toprule
\multicolumn{6}{c}{POD} & Ladle & CV-SR & CV-MAVE \\
& \multicolumn{2}{c}{0-1 loss} & & \multicolumn{2}{c}{Cross-entropy loss} & & & \\ \cline{2-3} \cline{5-6}
& $\alpha=1\%$ & $\alpha=5\%$ & & $\alpha=1\%$ & $\alpha=5\%$ & & & \\
\midrule
{$\widehat{d}=0$}               & 99.2            & 96          &  & 0                  & 0                 & 0     & 0     & 0        \\
{$\widehat{d}=1$}              & 0.8             & 3.4            &  & 99               & 94.2              & 99.4  & 61.2  & 59.4     \\
{$\widehat{d}\geq 2$} & 0               & 0.6           &  & 1                & 5.8               & 0.6   & 38.8  & 40.6     \\
{ave($\widehat{d}$)}            & 0.01            & 0.05           &  & 1.01               & 1.07              & 1.01     & 1.39  & 1.41     \\ \bottomrule
\end{tabular}}%
\end{table}

Table~\ref{tab:d* under different V} shows the proportions of the estimated structural dimensions for different loss functions and methods. The results confirm that POD adapts to the targeted order $d^*(\V)$: under 0-1 loss, POD selects $\widehat{d}=0$, whereas under cross-entropy loss, it identifies $\widehat{d}=1$. In both cases, POD keeps the overestimation rate close to the nominal level. In contrast, the ladle estimator consistently estimates $d^*_{\rm CS}=1$, while SR and MAVE exhibit substantial overestimation.

\subsection{Real-data illustration: PenDigits handwritten digits}\label{sec:app}

We analyze the PenDigits dataset from UCI Machine Learning Repository, focusing on determining how many directional-regression (DR) directions are required to separate the most confusable digits  $\{0,6,9\}$. The training set contains $2,219$ images and the test set $1,035$; each image is represented by a $16$-dimensional pen-trajectory vector. An eigen-analysis of the DR kernel matrix estimated from the training data reveals a sharp drop after the first two eigenvalues, indicating that the first two DR directions capture most of the discriminative information; see Figure~\ref{Fig: real_data}A.

\begin{figure}[!h]
\centering
\includegraphics[width=1\linewidth]{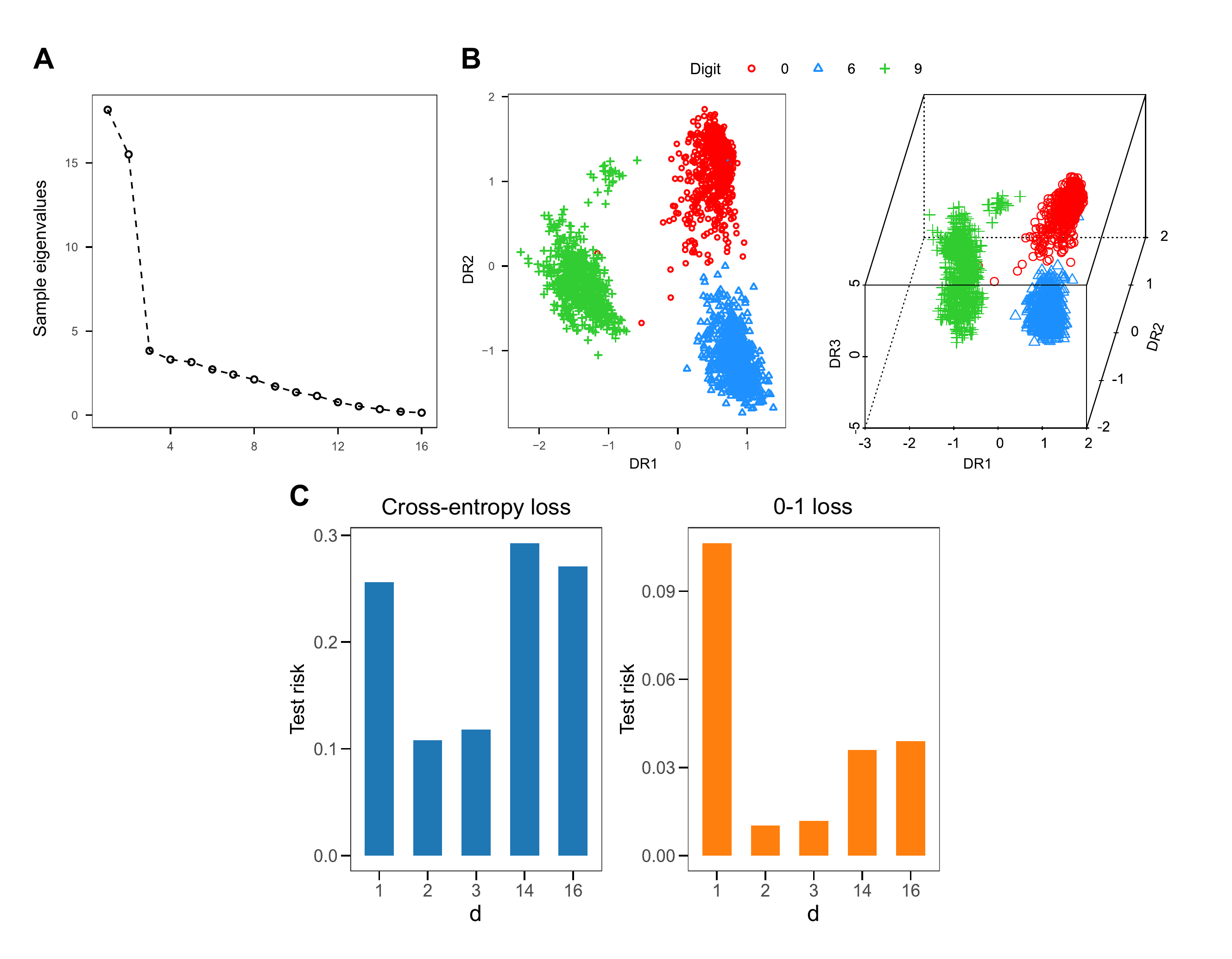}
\caption{PenDigits handwritten digits analysis. Panel A shows the eigenvalue scree plot of the DR kernel matrix. Panel B visualizes the DR representation with the first two and three DR directions. Panel C displays the test risks for different dimensions.}
\label{Fig: real_data}
\end{figure}

Our POD procedure tests the incremental predictive value of the leading DR directions for both cross-entropy and 0-1 loss functions. The prediction rule is a two-layer neural network. Significance levels $\alpha\in\{0.01,0.05\}$ are examined. As benchmarks, we implement the ladle estimator and the BY weighted-$\chi^{2}$ test, both computed from the DR kernel matrix. All methods set the upper search limit $d_{\max}\in\{8,16\}$. To account for the sampling variability introduced by POD's cross-fitting, the ladle's bootstrap, and BY's critical value simulation, each procedure is repeated $100$ times on the fixed training set, and the modal estimate $\widehat{d}$ is reported (see Section \ref{sec:frequency} of the Supplementary Material for frequency details). Across every combination of loss (cross-entropy or 0-1), significance level, and search limit $d_{\max}$, POD consistently selects $\widehat{d}=2$. By contrast, the ladle estimator selects $\widehat{d}=3$ when $d_{\max}=8$, but it inflates the estimate to $\widehat{d}=14$ when $d_{\max}=16$. The weighted-$\chi^2$ test outputs $\widehat{d}=14$ under $\alpha=0.01$, while $\widehat{d}=16$ under $\alpha=0.05$. Figure~\ref{Fig: real_data}B provides a scatter plot of the DR representation: the left panel plots the first two DR directions, while the right panel adds the third. Even in the two-dimensional view, the classes are already almost perfectly separated, confirming that the two DR directions selected by POD capture virtually all discriminative information.

To assess the out-of-sample performance of the models with different dimensions, we proceed as follows. For each candidate dimension $\widehat{d}\in\{1,2,3,14,16\}$, we select the corresponding DR scores, then---under both target loss (cross-entropy or 0-1)---train a classifier on the full training set using a neural network under that same loss function ($500$ repetitions to enhance stability of neural network training). These models are then evaluated on the test set. Figure~\ref{Fig: real_data}C reports the (average) test risk for each chosen dimension. The two-dimensional representation selected by POD yields the smallest test risk for both loss functions. The three-dimensional model performs slightly worse. The larger-dimensional model results in higher test risk, indicating clear overfitting. These results confirm that POD delivers a parsimonious yet highly predictive representation in practice.

\section{Concluding remarks}\label{sec:conclusion}

POD views order determination as a question of predictive adequacy rather than of model-specific structural criteria. It treats the dimension as an index of population risk and selects the smallest order at which additional coordinates no longer improve out-of-sample performance. The method combines cross-fitted risk evaluation, an overlapping second split that avoids degeneracy, and a forward sequential test, and can be applied on top of a wide range of dimension-reduction methods and learners to produce a predictive order and corresponding uncertainty that are aligned with the chosen loss and prediction task.

Several directions merit further investigation. An important direction is robustness, including extensions of POD to dependent or heavy-tailed data. This may call for block cross-fitting schemes for time series or clustered observations, as well as robust loss functions that remain stable under outliers and heavy tails. A second direction is post-selection inference for prediction risk and related functionals at the selected order, developing confidence intervals and tests that account for the data-driven choice of $\widehat{d}$. Together, these extensions would broaden the scope of POD while preserving its central principle: determine dimension by its predictive value.

\par
{\small \baselineskip 10pt
\bibliographystyle{asa}
\bibliography{POD}}

\begin{thebibliography}{57}
\newcommand{\enquote}[1]{``#1''}
\expandafter\ifx\csname natexlab\endcsname\relax\def\natexlab#1{#1}\fi

\bibitem[{Ahn and Horenstein(2013)}]{ahn2013eigenvalue}
Ahn, S.~C. and Horenstein, A.~R. (2013), \enquote{Eigenvalue ratio test for the number of factors,} \textit{Econometrica}, 81, 1203--1227.

\bibitem[{Bai and Li(2012)}]{bai2012factor}
Bai, J. and Li, K. (2012), \enquote{{Statistical analysis of factor models of high dimension},} \textit{The Annals of Statistics}, 40, 436--465.

\bibitem[{Bai and Ng(2002)}]{bai2002determining}
Bai, J. and Ng, S. (2002), \enquote{Determining the number of factors in approximate factor models,} \textit{Econometrica}, 70, 191--221.

\bibitem[{Bunea et~al.(2011)Bunea, She, and Wegkamp}]{bunea2011optimal}
Bunea, F., She, Y., and Wegkamp, M.~H. (2011), \enquote{{Optimal selection of reduced rank estimators of high-dimensional matrices},} \textit{The Annals of Statistics}, 39, 1282--1309.

\bibitem[{Bura and Yang(2011)}]{bura2011dimension}
Bura, E. and Yang, J. (2011), \enquote{Dimension estimation in sufficient dimension reduction: a unifying approach,} \textit{Journal of Multivariate Analysis}, 102, 130--142.

\bibitem[{Cattell(1966)}]{cattell1966scree}
Cattell, R.~B. (1966), \enquote{The scree test for the number of factors,} \textit{Multivariate Behavioral Research}, 1, 245--276.

\bibitem[{Chen et~al.(2024{\natexlab{a}})Chen, Jia, Wang, and Zou}]{chen2024zipper}
Chen, G., Jia, Y., Wang, G., and Zou, C. (2024{\natexlab{a}}), \enquote{Zipper: Addressing degeneracy in algorithm-agnostic inference,} \textit{Advances in Neural Information Processing Systems}, 37, 65884--65911.

\bibitem[{Chen et~al.(2024{\natexlab{b}})Chen, Jiao, Qiu, and Yu}]{chen2024deep}
Chen, Y., Jiao, Y., Qiu, R., and Yu, Z. (2024{\natexlab{b}}), \enquote{Deep nonlinear sufficient dimension reduction,} \textit{The Annals of Statistics}, 52, 1201--1226.

\bibitem[{Chernozhukov et~al.(2018)Chernozhukov, Chetverikov, Demirer, Duflo, Hansen, Newey, and Robins}]{chernozhukov2018double}
Chernozhukov, V., Chetverikov, D., Demirer, M., Duflo, E., Hansen, C., Newey, W., and Robins, J. (2018), \enquote{Double/debiased machine learning for treatment and structural parameters,} \textit{The Econometrics Journal}, 21, C1--C68.

\bibitem[{Cook(1994)}]{cook1994interpretation}
Cook, R.~D. (1994), \enquote{On the interpretation of regression plots,} \textit{Journal of the American Statistical Association}, 89, 177--189.

\bibitem[{Cook and Li(2002)}]{cook2002dimension}
Cook, R.~D. and Li, B. (2002), \enquote{Dimension reduction for conditional mean in regression,} \textit{The Annals of Statistics}, 30, 455--474.

\bibitem[{Cook and Weisberg(1991)}]{cook1991sliced}
Cook, R.~D. and Weisberg, S. (1991), \enquote{Sliced inverse regression for dimension reduction: Comment,} \textit{Journal of the American Statistical Association}, 86, 328--332.

\bibitem[{Cook and Yin(2001)}]{cook2001theory}
Cook, R.~D. and Yin, X. (2001), \enquote{Theory \& methods: special invited paper: dimension reduction and visualization in discriminant analysis (with discussion),} \textit{Australian \& New Zealand Journal of Statistics}, 43, 147--199.

\bibitem[{Dai et~al.(2024)Dai, Shen, and Pan}]{dai2022significance}
Dai, B., Shen, X., and Pan, W. (2024), \enquote{Significance tests of feature relevance for a black-Box learner,} \textit{IEEE Transactions on Neural Networks and Learning Systems}, 35, 1898--1911.

\bibitem[{Davis and Kahan(1970)}]{davis1970rotation}
Davis, C. and Kahan, W.~M. (1970), \enquote{The rotation of eigenvectors by a perturbation. III,} \textit{SIAM Journal on Numerical Analysis}, 7, 1--46.

\bibitem[{Doz et~al.(2012)Doz, Giannone, and Reichlin}]{doz2012quasi}
Doz, C., Giannone, D., and Reichlin, L. (2012), \enquote{A quasi–maximum likelihood approach for large, approximate dynamic factor models,} \textit{The Review of Economics and Statistics}, 94, 1014--1024.

\bibitem[{Fan and Gu(2024)}]{fan2023factor}
Fan, J. and Gu, Y. (2024), \enquote{Factor augmented sparse throughput deep relu neural networks for high dimensional regression,} \textit{Journal of the American Statistical Association}, 119, 2680--2694.

\bibitem[{Fan et~al.(2022)Fan, Guo, and Zheng}]{fan2022estimating}
Fan, J., Guo, J., and Zheng, S. (2022), \enquote{Estimating number of factors by adjusted eigenvalues thresholding,} \textit{Journal of the American Statistical Association}, 117, 852--861.

\bibitem[{Fan et~al.(2025)Fan, Jana, Kulkarni, and Yin}]{fan2025factor}
Fan, J., Jana, S., Kulkarni, S., and Yin, Q. (2025), \enquote{Factor informed double deep learning for average treatment effect estimation,} \textit{arXiv preprint arXiv:2508.17136}.

\bibitem[{Fan and Liao(2022)}]{fan2022learning}
Fan, J. and Liao, Y. (2022), \enquote{Learning latent factors from diversified projections and its applications to over-estimated and weak factors,} \textit{Journal of the American Statistical Association}, 117, 909--924.

\bibitem[{Fan et~al.(2013)Fan, Liao, and Mincheva}]{fan2013large}
Fan, J., Liao, Y., and Mincheva, M. (2013), \enquote{Large covariance estimation by thresholding principal orthogonal complements,} \textit{Journal of the Royal Statistical Society Series B: Statistical Methodology}, 75, 603--680.

\bibitem[{Fan et~al.(2016)Fan, Liao, and Wang}]{fan2016projected}
Fan, J., Liao, Y., and Wang, W. (2016), \enquote{Projected principal component analysis in factor models,} \textit{The Annals of Statistics}, 44, 219--254.

\bibitem[{Fan et~al.(2024)Fan, Lou, and Yu}]{fan2024latent}
Fan, J., Lou, Z., and Yu, M. (2024), \enquote{Are latent factor regression and sparse regression adequate?} \textit{Journal of the American Statistical Association}, 119, 1076--1088.

\bibitem[{Farrell et~al.(2021)Farrell, Liang, and Misra}]{farrell2021deep}
Farrell, M.~H., Liang, T., and Misra, S. (2021), \enquote{Deep neural networks for estimation and inference,} \textit{Econometrica}, 89, 181--213.

\bibitem[{Golub and Van~Loan(2013)}]{golub2013matrix}
Golub, G.~H. and Van~Loan, C.~F. (2013), \textit{Matrix Computations}, Baltimore, MD: Johns Hopkins University Press, 4th ed.

\bibitem[{Gy{\"o}rfi et~al.(2002)Gy{\"o}rfi, Kohler, Krzy{\.z}ak, and Walk}]{gyorfi2002distribution}
Gy{\"o}rfi, L., Kohler, M., Krzy{\.z}ak, A., and Walk, H. (2002), \textit{A distribution-free theory of nonparametric regression}, New York: Springer.

\bibitem[{Harding(2013)}]{harding2013estimating}
Harding, M. (2013), \enquote{Estimating the number of factors in large dimensional factor models,} Manuscript.

\bibitem[{Hristache et~al.(2001)Hristache, Juditsky, Polzehl, and Spokoiny}]{hristache2001structure}
Hristache, M., Juditsky, A., Polzehl, J., and Spokoiny, V. (2001), \enquote{Structure adaptive approach for dimension reduction,} \textit{The Annals of Statistics}, 29, 1537--1566.

\bibitem[{Hsing and Carroll(1992)}]{hsing1992asymptotic}
Hsing, T. and Carroll, R.~J. (1992), \enquote{An asymptotic theory for sliced inverse regression,} \textit{The Annals of Statistics}, 20, 1040--1061.

\bibitem[{Izenman(1975)}]{izenman1975reduced}
Izenman, A.~J. (1975), \enquote{Reduced-rank regression for the multivariate linear model,} \textit{Journal of Multivariate Analysis}, 5, 248--264.

\bibitem[{Jolliffe(2002)}]{jolliffe2002principal}
Jolliffe, I.~T. (2002), \textit{Principal component analysis}, New York: Springer, 2nd ed.

\bibitem[{Kapetanios(2010)}]{kapetanios2010testing}
Kapetanios, G. (2010), \enquote{A testing procedure for determining the number of factors in approximate factor models with large datasets,} \textit{Journal of Business \& Economic Statistics}, 28, 397--409.

\bibitem[{Kim et~al.(2020)Kim, Li, Yu, and Li}]{kyongwon2020post}
Kim, K., Li, B., Yu, Z., and Li, L. (2020), \enquote{{On post dimension reduction statistical inference},} \textit{The Annals of Statistics}, 48, 1567--1592.

\bibitem[{Lei(2020)}]{lei2020cross}
Lei, J. (2020), \enquote{Cross-validation with confidence,} \textit{Journal of the American Statistical Association}, 115, 1978--1997.

\bibitem[{Li(2018)}]{li2018sufficient}
Li, B. (2018), \textit{Sufficient dimension reduction: Methods and applications with R}, Boca Raton, Florida: CRC Press.

\bibitem[{Li and Wang(2007)}]{li2007directional}
Li, B. and Wang, S. (2007), \enquote{On directional regression for dimension reduction,} \textit{Journal of the American Statistical Association}, 102, 997--1008.

\bibitem[{Li(1991)}]{li1991sliced}
Li, K.-C. (1991), \enquote{Sliced inverse regression for dimension reduction,} \textit{Journal of the American Statistical Association}, 86, 316--327.

\bibitem[{Li(1992)}]{li1992principal}
--- (1992), \enquote{On principal Hessian directions for data visualization and dimension reduction: Another application of Stein's lemma,} \textit{Journal of the American Statistical Association}, 87, 1025--1039.

\bibitem[{Lin et~al.(2019)Lin, Zhao, and Liu}]{lin2019sparse}
Lin, Q., Zhao, Z., and Liu, J.~S. (2019), \enquote{Sparse sliced inverse regression via lasso,} \textit{Journal of the American Statistical Association}, 114, 1726--1739.

\bibitem[{Lundborg et~al.(2024)Lundborg, Kim, Shah, and Samworth}]{lundborg2024projected}
Lundborg, A.~R., Kim, I., Shah, R.~D., and Samworth, R.~J. (2024), \enquote{{The projected covariance measure for assumption-lean variable significance testing},} \textit{The Annals of Statistics}, 52, 2851--2878.

\bibitem[{Luo and Li(2016)}]{luo2016combining}
Luo, W. and Li, B. (2016), \enquote{Combining eigenvalues and variation of eigenvectors for order determination,} \textit{Biometrika}, 103, 875--887.

\bibitem[{Luo and Li(2021)}]{luo2021order}
--- (2021), \enquote{On order determination by predictor augmentation,} \textit{Biometrika}, 108, 557--574.

\bibitem[{Onatski(2009)}]{onatski2009testing}
Onatski, A. (2009), \enquote{Testing hypotheses about the number of factors in large factor models,} \textit{Econometrica}, 77, 1447--1479.

\bibitem[{Reinsel et~al.(2022)Reinsel, Velu, and Chen}]{reinsel2022multivariate}
Reinsel, G.~C., Velu, R.~P., and Chen, K. (2022), \textit{Multivariate reduced-rank regression}, New York: Springer, 2nd ed.

\bibitem[{van~der Vaart(1998)}]{van1998asymptotic}
van~der Vaart, A.~W. (1998), \textit{Asymptotic statistics}, Cambridge: Cambridge University Press.

\bibitem[{Wang and Xia(2008)}]{wang2008sliced}
Wang, H. and Xia, Y. (2008), \enquote{Sliced regression for dimension reduction,} \textit{Journal of the American Statistical Association}, 103, 811--821.

\bibitem[{Williamson et~al.(2023)Williamson, Gilbert, Simon, and Carone}]{williamson2023general}
Williamson, B.~D., Gilbert, P.~B., Simon, N.~R., and Carone, M. (2023), \enquote{A general framework for inference on algorithm-agnostic variable importance,} \textit{Journal of the American Statistical Association}, 118, 1645--1658.

\bibitem[{Xia(2007)}]{xia2007constructive}
Xia, Y. (2007), \enquote{{A constructive approach to the estimation of dimension reduction directions},} \textit{The Annals of Statistics}, 35, 2654--2690.

\bibitem[{Xia et~al.(2002)Xia, Tong, Li, and Zhu}]{xia2002adaptive}
Xia, Y., Tong, H., Li, W.~K., and Zhu, L.-X. (2002), \enquote{An adaptive estimation of dimension reduction space,} \textit{Journal of the Royal Statistical Society Series B: Statistical Methodology}, 64, 363--410.

\bibitem[{Yin and Li(2011)}]{yin2011sufficient}
Yin, X. and Li, B. (2011), \enquote{{Sufficient dimension reduction based on an ensemble of minimum average variance estimators},} \textit{The Annals of Statistics}, 39, 3392--3416.

\bibitem[{Yu et~al.(2025)Yu, Zhao, and Zhou}]{yu2025testing}
Yu, L., Zhao, P., and Zhou, W. (2025), \enquote{Testing the number of common factors by bootstrapped sample covariance matrix in high-dimensional factor models,} \textit{Journal of the American Statistical Association}, 120, 448--459.

\bibitem[{Yu et~al.(2015)Yu, Wang, and Samworth}]{yu2015useful}
Yu, Y., Wang, T., and Samworth, R.~J. (2015), \enquote{A useful variant of the Davis--Kahan theorem for statisticians,} \textit{Biometrika}, 102, 315--323.

\bibitem[{Yuan et~al.(2007)Yuan, Ekici, Lu, and Monteiro}]{yuan2007dimension}
Yuan, M., Ekici, A., Lu, Z., and Monteiro, R. (2007), \enquote{Dimension reduction and coefficient estimation in multivariate linear regression,} \textit{Journal of the Royal Statistical Society Series B: Statistical Methodology}, 69, 329--346.

\bibitem[{Zhu et~al.(2006)Zhu, Miao, and Peng}]{zhu2006sliced}
Zhu, L., Miao, B., and Peng, H. (2006), \enquote{On sliced inverse regression with high-dimensional covariates,} \textit{Journal of the American Statistical Association}, 101, 630--643.

\bibitem[{Zhu and Ng(1995)}]{zhu1995asymptotics}
Zhu, L.-X. and Ng, K.~W. (1995), \enquote{Asymptotics of sliced inverse regression,} \textit{Statistica Sinica}, 5, 727--736.

\bibitem[{Zhu and Zeng(2006)}]{zhu2006fourier}
Zhu, Y. and Zeng, P. (2006), \enquote{Fourier methods for estimating the central subspace and the central mean subspace in regression,} \textit{Journal of the American Statistical Association}, 101, 1638--1651.

\bibitem[{Zou et~al.(2022)Zou, Ke, and Zhang}]{zou2022estimation}
Zou, C., Ke, Y., and Zhang, W. (2022), \enquote{Estimation of low rank high-dimensional multivariate linear models for multi-response data,} \textit{Journal of the American Statistical Association}, 117, 693--703.

\end{thebibliography}

\appendix
\numberwithin{equation}{section}
\renewcommand{\theremark}{\thesection.\arabic{remark}}
\renewcommand{\thethm}{\thesection.\arabic{thm}}
\renewcommand{\theprop}{\thesection.\arabic{prop}}
\renewcommand{\thelemma}{\thesection.\arabic{lemma}}
\setcounter{thm}{0}
\setcounter{prop}{0}
\setcounter{lemma}{0}
\setcounter{remark}{0}
\setcounter{equation}{0}
\beginsupplement
\newpage

\begin{center}
\linespread{2}\selectfont
{\Large \bfseries Supplementary Material for ``Model-Agnostic and Uncertainty-Aware Dimensionality Reduction in Supervised Learning''}
\end{center}
\vspace{30pt}

This Supplementary Material provides supporting details for the main text, including proofs of the theoretical results, additional discussion of assumptions and construction choices, and extended numerical evidence. Section~A collects proofs for the propositions and theorems stated in the paper. Section~B clarifies the predictiveness-induced order $d^*(\V)$ and provides arguments supporting the key conditions used in the analysis, including representation error bounds for the motivating examples. Section~C complements the numerical studies and real-data analysis. Figure~\ref{fig:roadmap} summarizes this organization for quick navigation.

\begin{figure}[!h]
\centering
\includegraphics[
width=\textwidth,
trim=0cm 0cm 6.5cm 0cm,
clip
]{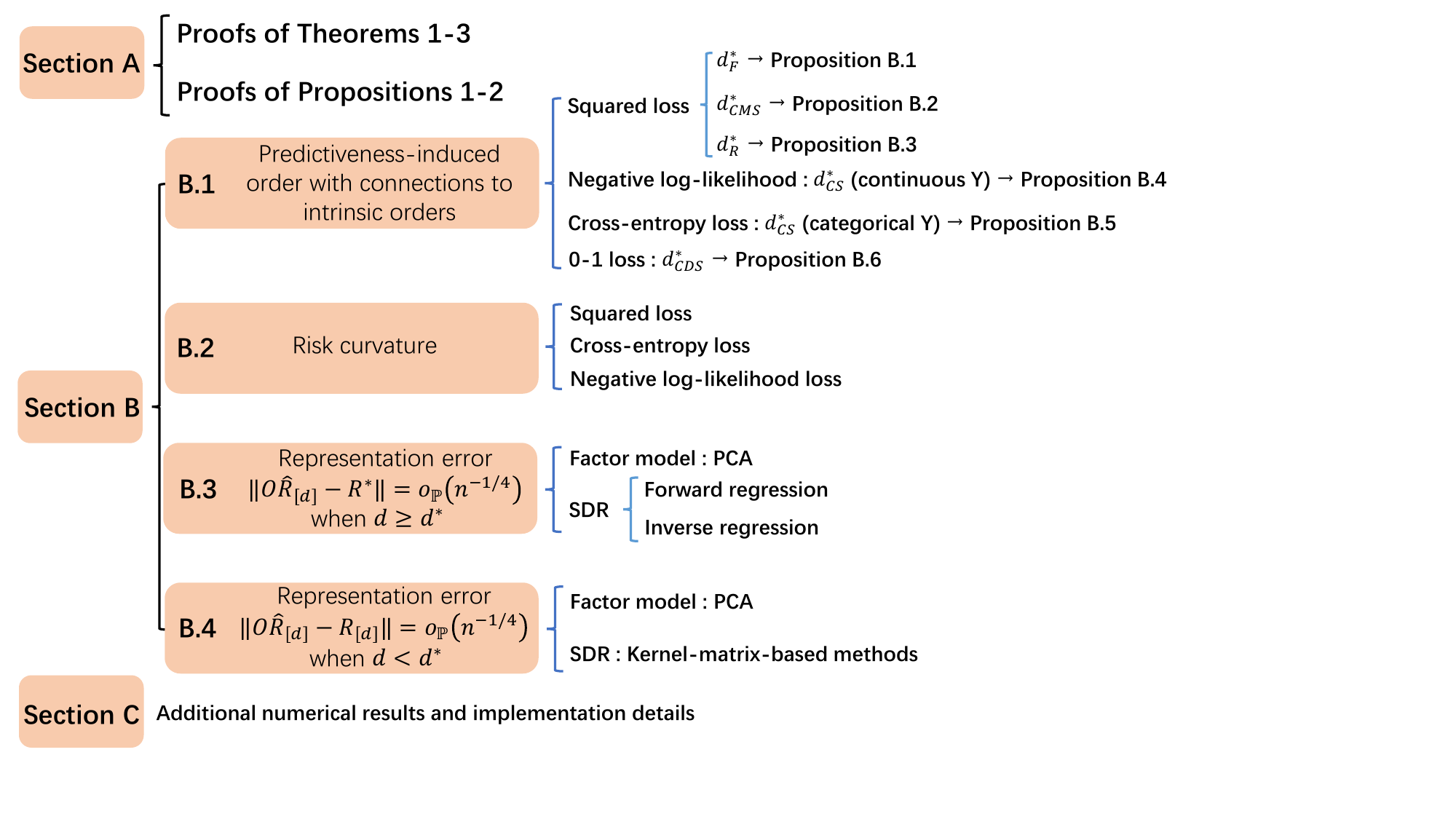}
\caption{Roadmap for the Supplementary Material.}
\label{fig:roadmap}
\end{figure}

\section{Proofs}

\subsection{Proof of Proposition \ref{prop: equivalence  of V}}

\begin{proof}
Recall that in Section \ref{sec:order} we embed $\bfR^*$ into $\mathbb{R}^{d_{\max}}$ by zero padding; that is, $\bfR_{[d^*]}=\bfR^*$ and $\bfR_{-[d^*]}=0$. On the one hand, $\mathcal{G}_0\subseteq\mathcal{G}_1\subseteq\dots\subseteq\mathcal{G}_{d_{\max}}$. Therefore, $\V_0\geq \V_1\geq \dots\geq \V_{d_{\max}}$. On the other hand, consider two cases. \textit{Case 1:} $d^*=0$. Then $\V_d=\min_{g\in\mathcal{G}_d} \E_{\P}\{\ell(\bfY,g(0))\}$. For any $g\in\mathcal{G}_d$, define $h_g(u)=g(0)$. Then $h_g\in\mathcal{G}_0$, and hence $\V_d=\min_{g\in\mathcal{G}_d}\E_{\P}\{\ell(\bfY,h_g(\bfR))\}\ge\min_{g\in\mathcal{G}_0}\E_{\P}\{\ell(\bfY,g(\bfR))\}=\V_0 $. \textit{Case 2:} $d^*\ge 1$. Then $\V_d=\min_{g\in\mathcal{G}_d} \E_{\P}\{\ell(\bfY,g(\bfR^*,0))\}$. For any $g\in\mathcal{G}_d$, define $f_g(\bfu)=g(\bfu_{[d^*]},0)$. Then $f_g\in\mathcal{G}_{d^*}$, and hence, $\V_d=\min_{g\in\mathcal{G}_d} \E_{\P}\{\ell(\bfY,f_g(\bfR))\}\geq\min_{g\in\mathcal{G}_{d^*}} \E_{\P}\{\ell(\bfY,g(\bfR))\}=\V_{d^*}$. Combining these results yields $\V_0\geq \V_1\geq \dots \geq \V_{d^*}=\V_{d^*+1}=\dots=\V_{d_{\max}}$.
\end{proof}

\subsection{Proof of Theorem \ref{thm:thm1}}

\begin{proof}
Recall that 
$\V(g_{d},\P)=\V(g_{d^*},\P),$ for all $d^*\leq d \leq d_{\max}$. Fix $J\in\LRl{o,a,b}$ and $k\in\LRm{K}$. We first decompose 
\begin{equation*}
\widehat{\V}_{-k}(\widehat{g}_{-k;d},\widehat{\P}_{k,J})-\V(g_{d},\P )=\widehat{\V}_{-k}(\widehat{g}_{-k;d},\widehat{\P}_{k,J})-\V(g_{d^*},\widehat{\P}_{k,J})+\V(g_{d^*},\widehat{\P}_{k,J})-\V(g_{d^*},\P ),
\end{equation*}
where
\begin{equation*}
\V(g_{d^*},\widehat{\P}_{k,J})-\V(g_{d^*},\P )=\frac{1}{\left|I_{k,J}\right|}\sum_{i\in I_{k,J}}\ell(Y_i,g_{d^*}(\bfR_i))-\V(g_{d^*},\P)
\end{equation*}
is the leading term for asymptotic normality. Thus, it suffices to show that $\widehat{\V}_{-k}(\widehat{g}_{-k;d},\widehat{\P}_{k,J})-\V(g_{d^*},\widehat{\P}_{k,J})=o_{\P }(n^{-1/2})$.

Write
\begin{align*}
&\widehat{\V}_{-k}(\widehat{g}_{-k;d},\widehat{\P}_{k,J})-\V(g_{d^*},\widehat{\P}_{k,J})\\
= & \underbrace{\widehat{\V}_{-k}(\widehat{g}_{-k;d},\P)-\V(g_{d^*},\P) }_{:=A_n}+\underbrace{\{\widehat{\V}_{-k}(\widehat{g}_{-k;d},\widehat{\P}_{k,J}) - \widehat{\V}_{-k}(\widehat{g}_{-k;d},\P)\}-\{ \V(g_{d^*},\widehat{\P}_{k,J})-\V(g_{d^*},\P) \}  }_{:=B_n}.
\end{align*}

By (C1), $\lvert A_n\rvert\leq C\| \widehat{g}_{-k;d}\circ\widehat{\varphi}_{-k}-g_{d^*}\|^2$. By (C2), $\| \widehat{g}_{-k;d}\circ\widehat{\varphi}_{-k}-g_{d^*}\|=o_{\P}(n^{-1/4})$, so $A_n=o_{\P}(n^{-1/2})$.

Rewrite
\begin{equation*}
B_n=\frac{1}{|I_{k,J}|}\sum_{i\in I_{k,J}}\underbrace{\LRm{\LRl{ \widehat{\V}_{-k}(\widehat{g}_{-k;d},\delta_{\bfZ_i})-\widehat{\V}_{-k}(\widehat{g}_{-k;d},\P) } -\LRl{\V(g_{d^*},\delta_{\bfZ_i})-\V(g_{d^*},\P) }}}_{:=\widehat{\xi}_{d,k}(\bfZ_i)}.
\end{equation*}
Let $\mathcal{D}_k=\LRl{\bfZ_i:i\in I_k}$, for $k=1,\dots,K$. By cross-fitting,  $\E[\widehat{\xi}_{d,k}(\bfZ)\mid\cup_{j\neq k}\mathcal{D}_j]=0$. Therefore, for any $\eta>0$, Chebyshev's inequality yields
\begin{align*}
0 &\leq \Pr (|I_{k,J}|^{-1/2}\sum_{i\in I_{k,J}} \widehat{\xi}_{d,k}(\bfZ_i)>\eta\mid \cup_{j\neq k}\mathcal{D}_j)\\
&\leq \dfrac{\mathrm{var}[ \widehat{\xi}_{d,k}(\bfZ)\mid\cup_{j\neq k}\mathcal{D}_j]}{\eta^2}\leq\dfrac{\E[( \widehat{\xi}_{d,k}(\bfZ))^2\mid \cup_{j\neq k}\mathcal{D}_j]}{\eta^2}.
\end{align*}
By (C1)--(C3),
\begin{align*}
&\E[\{\widehat{\xi}_{d,k}(\bfZ)\}^2\mid\cup_{j\neq k}\mathcal{D}_j ] \\
\leq&2\E[\{\ell(Y,\widehat{g}_{-k:d}\circ \widehat{\varphi}_{-k}(\bfX))-\ell(Y,g_{d^*}(\bfR))\}^2\mid\cup_{j\neq k}\mathcal{D}_j] \\
&\hspace{2em}+2\E[\{\widehat{\V}_{-k}(\widehat{g}_{-k;d},\P)-\V(g_{d^*},\P) \}^2\mid\cup_{j\neq k}\mathcal{D}_j]\\
=&o_{\P}(1).
\end{align*}
Consequently, by dominated convergence, $|I_{k,J}|^{-1/2}\sum_{i\in I_{k,J}} \widehat{\xi}_{d,k}(\bfZ_i)=o_{\P}(1)$, and hence $B_n=o_{\P}(n^{-1/2})$.

Putting the pieces together, under (C1)--(C3), for $ d^*\leq d\leq d_{\max}$, $J\in\LRl{o,a,b}$, and $k\in\LRm{K}$,
\begin{equation}\label{eq:degeneracy}
\widehat{\V}_{-k}(\widehat{g}_{-k;d},\widehat{\P}_{k,J})-\V(g_{d},\P )=\frac{1}{\left|I_{k,J}\right|}\sum_{i\in I_{k,J}} \ell(Y_i,g_{d^*}(\bfR_i))-\V(g_{d^*},\P)+o_{\P }(\left|I_{k,J}\right|^{-1/2}).
\end{equation}
Since $\V_d=\V_{d^*}$ for $d \ge d^*$, it follows that
\begin{align*}
\widehat{\psi}_d
=&\;\frac{1}{K}\sum_{k\in[K]}
\Bigl\{
\tau\cdot \bigl(
\widehat{\V}_{-k}(\widehat{g}_{-k;d},\widehat{\P}_{k,o})
- \widehat{\V}_{-k}(\widehat{g}_{-k;d_{\max}},\widehat{\P}_{k,o})
\bigr) \\
&\hspace{2em}
+ (1-\tau)\cdot \bigl(
\widehat{\V}_{-k}(\widehat{g}_{-k;d},\widehat{\P}_{k,a})
- \widehat{\V}_{-k}(\widehat{g}_{-k;d_{\max}},\widehat{\P}_{k,b})
\bigr)
\Bigr\} \\
=&\;\frac{1}{n/(2-\tau)}
\LRl{
\sum_{k\in[K]}\sum_{i\in I_{k,a}} \ell(Y_i,g_{d^*}(\bfR_i))
-
\sum_{k\in[K]}\sum_{i\in I_{k,b}} \ell(Y_i,g_{d^*}(\bfR_i))
}
+ o_{\P}(n^{-1/2}).
\end{align*}
By the central limit theorem, $$\LRl{n/(2-\tau)}^{1/2}\widehat{\psi}_d\rightsquigarrow\mathcal{N}(0,2(1-\tau)\sigma_{d^*}^2),$$ where $\sigma_{d^*}^2=\mathrm{var}\LRl{ \ell(\bfY,g_{d^*}(\bfR)) }$.
\end{proof}

\subsection{Proof of Proposition \ref{prop:consistency}}

\begin{proof}
Under (C2)--(C3), and by the same arguments used in the proof of Proposition 2.1 and Remark 2.2 of \cite{chen2024zipper}, it follows that for any $d^*\leq d\leq d_{\max}$, $k\in\LRm{K}$, $\widehat{\sigma}_{d,k}^2\to\sigma_{d^*}^2$ in probability. The consistency of $\widehat{\nu}_{d}^2$ then follows from the continuous mapping theorem.
\end{proof} 

\subsection{Proof of Theorem \ref{thm:thm2}}

\begin{proof}
Fix $J\in\LRl{o,a,b}$ and $k\in\LRm{K}$. Decompose $$\widehat{\V}_{-k}(\widehat{g}_{-k;d},\widehat{\P}_{k,J})-\V\LRs{g_{d},\P }=\widehat{\V}_{-k}(\widehat{g}_{-k;d},\widehat{\P}_{k,J})-\V(g_{d},\widehat{\P}_{k,J})+\V(g_{d},\widehat{\P}_{k,J})-\V(g_{d},\P ).$$ Since $\V(g,\P)=\E_{\P}\LRm{\ell(\bfY,g(\bfR))}=\E_{\P}\LRm{\|\bfY-g(\bfR)\|^2}$ is linear in $\P$, we have
\begin{equation*}
\V(g_{d},\widehat{\P}_{k,J})-\V(g_{d},\P )=\frac{1}{\left|I_{k,J}\right|}\sum_{i\in I_{k,J}}\ell(Y_i,g_{d}(\bfR_i))-\V(g_{d},\P).
\end{equation*}
Next, write
\begin{align*}
&\widehat{\V}_{-k}(\widehat{g}_{-k;d},\widehat{\P}_{k,J})-\V(g_{d},\widehat{\P}_{k,J})\\
= & \underbrace{\widehat{\V}_{-k}(\widehat{g}_{-k;d},\P)-\V(g_{d},\P) }_{:=A_n^{\prime}}+\underbrace{\{\widehat{\V}_{-k}(\widehat{g}_{-k;d},\widehat{\P}_{k,J}) - \widehat{\V}_{-k}(\widehat{g}_{-k;d},\P)\}-\{ \V(g_{d},\widehat{\P}_{k,J})-\V(g_{d},\P) \}  }_{:=B_n^{\prime}}.
\end{align*}
We can express
\begin{align*}
A_n^{\prime}&=\E_{\P}\{\|\bfY-\widehat{g}_{-k;d}(\widehat{\varphi}_{-k}(\bfX))\|^2-\|\bfY-g_d(\bfR)\|^2\mid \widehat{g}_{-k;d},\widehat{\varphi}_{-k}\}\\
&=\underbrace{2\E_{\P}\{ \langle\bfY-g_d(\bfR), g_d(\bfR)-\widehat{g}_{-k;d}(\widehat{\varphi}_{-k}(\bfX))\rangle\mid \widehat{g}_{-k;d},\widehat{\varphi}_{-k}\}}_{:=A^{\prime}_{n,k,d}}+\underbrace{\|\widehat{g}_{-k;d}\circ\widehat{\varphi}_{-k}-g_d\|^2_{L_2(\P)}}_{=o_{\P}(n^{-1/2})}.
\end{align*}
By (C2'), $\lvert A_{n,k,d}^{\prime}\rvert\leq 2\sqrt{\mathrm{tr}(\mathrm{var}_{\P}(\bfY))} \|\widehat{g}_{-k;d}\circ\widehat{\varphi}_{-k}-g_d\|_{L_2(\P)}=o_{\P}(n^{-1/4})$. Under (C2')--(C3'), we similarly obtain $B_n^{\prime}=o_{\P}(n^{-1/2})$, by the same argument used to establish $B_n=o_{\P}(n^{-1/2})$ in the proof of Theorem \ref{thm:thm1}.

\textit{Special case:} $d=0$. Under squared loss, set $g_0=\E(\bfY)$, so that $\V(g_0,\P)=\mathrm{tr}(\mathrm{var}_{\P}(\bfY))$. For any $k$, $\widehat{g}_{-k;0}=\bar{\bfY}_{-k}=\sum_{i\in I_{-k}}\bfY_i/\lvert I_{-k}\rvert$, so $\bfY$ is predicted without using $(\bfR,\bfX)$. In this case, 
\begin{equation*}
\widehat{\V}_{-k}(\widehat{g}_{-k;0},\widehat{\P}_{k,J})-\V(g_{0},\P )=\frac{1}{\left|I_{k,J}\right|}\sum_{i\in I_{k,J}}\ell(\bfY_i,g_{0}(\bfR_i))-\V(g_{0},\P)+O_{\P }(n^{-1}).
\end{equation*} 
For notational convenience, set $A_{n,k,0}^{\prime}=0$ for all $k\in[K]$ henceforth.

Consequently, we have the unified expression: under (C2')--(C3'), for $ 0\leq d< d^*$, $J\in\LRl{o,a,b}$, and $k\in\LRm{K}$,
\begin{equation}\label{eq:alter_linear}
\widehat{\V}_{-k}(\widehat{g}_{-k;d},\widehat{\P}_{k,J})-\V(g_{d},\P )=\frac{1}{\left|I_{k,J}\right|}\sum_{i\in I_{k,J}}\ell(\bfY_i,g_{d}(\bfR_i))-\V(g_{d},\P)+A_{n,k,d}^{\prime}+o_{\P }(\left|I_{k,J}\right|^{-1/2}).
\end{equation}

Combining this expansion with the result of
Theorem \ref{thm:thm1} yields, for $0\leq d<d^*$,
\begin{equation*}
\begin{aligned}
&\widehat{\psi}_{d}-(\V_d-\V_{d_{\max}})-\frac{1}{K}\sum_{k=1}^K A_{n,k,d}^{\prime}\\
=&\frac{1}{n/(2-\tau)}\sum_{k\in[K]}\left[\sum_{i\in I_{k,a}}  \{\ell(\bfY_i,g_{d}(\bfR_i))-\V(g_d,\P) \} -\sum_{i\in I_{k,b}}\{\ell(\bfY_i,g_{d^*}(\bfR_i))-\V(g_{d^*},\P) \}\right.\\
&\left.+\sum_{i\in I_{k,o}}\LRl{ \ell(\bfY_i,g_{d}(\bfR_i))-\ell(\bfY_i,g_{d^*}(\bfR_i)) -(\V(g_d,\P)-\V(g_{d^*},\P)) }\right]+o_{\P }\LRs{n^{-1/2}}. 
\end{aligned}
\end{equation*}
By the standard central limit theorem,
\begin{equation*}
\sqrt{n/(2-\tau)}\LRs{\widehat{\psi}_{d}-(\V_d-\V_{d^*})-\frac{1}{K}\sum_{k=1}^K A_{n,k,d}^{\prime}}\rightsquigarrow \mathcal{N}\LRs{0,(1-\tau)\LRs{\sigma_{d}^2+\sigma_{d^*}^2}+\tau\eta_{d}^2},
\end{equation*}
where $\eta_{d}^2=\mathrm{var}_{\P}\{\ell(\bfY,g_{d}(\bfR))-\ell(\bfY,g_{d^*}(\bfR)) \}$. Let $\nu_{d,\eta}^2=(1-\tau)\LRs{\sigma_{d}^2+\sigma_{d^*}^2}+\tau\eta_{d}^2$ and $\nu_{d}^2=(1-\tau)\LRs{\sigma_{d}^2+\sigma_{d^*}^2}$. By an argument analogous to Proposition \ref{prop:consistency}, under (C2')--(C3'), for $0\leq d<d^*$, and any $\tau\in[0,1)$, we have $\widehat{\nu}^2_{d}\to\nu_d^2\text{ in probability}$, as $n\to\infty$. For any $0\leq d<d^*$, $\alpha\in(0,1)$, and sufficiently small $0<\delta<(\V_d-\V_{d^*})$,
\allowdisplaybreaks[2]
\begin{align*}
&\Pr \big(\frac{\sqrt{n/(2- \tau)}\widehat{\psi}_{d}}{\widehat{\nu}_{d}}<z_{1-\alpha}\big)\\
=&\Pr \big(\underbrace{\frac{\sqrt{n/(2- \tau)}\LRs{\widehat{\psi}_{d}-(\V_d-\V_{d^*})-\sum_{k=1}^K A_{n,k,d}^{\prime}/K}}{\nu_{d,\eta}}}_{:=\widehat{\xi}_d}<z_{1-\alpha}\cdot\frac{\widehat{\nu}_{d}}{\nu_{d,\eta}}\\
&\hspace{16em}-\frac{\sqrt{n/(2-\tau)}\big((\V_d-\V_{d^*})+ \sum_{k=1}^K A_{n,k,d}^{\prime}/K\big)}{\nu_{d,\eta}}\big)\\
=&\Pr \big(\widehat{\xi}_d< z_{1-\alpha}\cdot\frac{\widehat{\nu}_{d}}{\nu_{d,\eta}}  -\frac{\sqrt{n/(2-\tau)}\big((\V_d-\V_{d^*})+ \sum_{k=1}^K A_{n,k,d}^{\prime}/K\big)}{\nu_{d,\eta}},\\ &\hspace{16em}\lvert\sum_{k\in[K]} A_{n,k,d}^{\prime}/K\rvert\leq \delta\big)+o(1)\\
\leq &\Pr \big(\widehat{\xi}_d< z_{1-\alpha}\cdot\frac{\widehat{\nu}_{d}}{\nu_{d,\eta}}  -\frac{\sqrt{n/(2-\tau)}(\V_d-\V_{d^*}-\delta)}{\nu_{d,\eta}}\big)+o(1),
\end{align*}
where the second equality holds because $A_{n,k,d}^{\prime}=o_{\P}(1)$ for all $k\in[K]$.

Finally, because the event $\{\widehat{d}<d^*\}=\bigcup_{d=0}^{d^*-1}\{\widehat{T}_d<z_{1-\alpha}\}$, we obtain
\begin{align*}
\Pr (\widehat{d}<d^*)
&\leq \sum_{d=0}^{d^*-1}\Pr \bigg(\frac{\sqrt{n/(2- \tau)}\widehat{\psi}_{d}}{\widehat{\nu}_{d}}<z_{1-\alpha}\bigg)\\
&\leq \sum_{d=0}^{d^*-1}\Pr \bigg(\widehat{\xi}_d< z_{1-\alpha}\cdot\frac{\widehat{\nu}_{d}}{\nu_{d,\eta}}  -\frac{\sqrt{n/(2-\tau)}(\V_d-\V_{d^*}-\delta)}{\nu_{d,\eta}}\bigg)+o(1)\\
&\leq \sum_{d=0}^{d^*-1}\Phi\bigg(\frac{\nu_{d}}{\nu_{d,\eta}}z_{1-\alpha}-\frac{\sqrt{n/(2-\tau)}(\V_d-\V_{d^*}-\delta)}{\nu_{d,\eta}}\bigg)+o(1),
\end{align*}
where the last steps uses $\widehat{\xi}_d\rightsquigarrow \mathcal{N}(0,1)$, $\widehat{\nu}_d^2\to\nu_{d}^2$ in probability, and Polya's theorem \citep[e.g., Lemma 2.11 of][]{van1998asymptotic}.
\end{proof}

\subsection{Proof of Theorem \ref{thm:thm3}}

\begin{proof}
First, by the definition of the testing procedure, 
\begin{align*}
\Pr (\widehat{d}>d^*)&\leq \Pr \bigg(\frac{\sqrt{n/(2- \tau)}\widehat{\psi}_{d^*}}{\widehat{\nu}_{d^*}}\geq z_{1-\alpha_n}\bigg)\\
&\leq\left|\Pr \bigg(\frac{\sqrt{n/(2- \tau)}\widehat{\psi}_{d^*}}{\widehat{\nu}_{d^*}}\geq z_{1-\alpha_n}\bigg)-\alpha_n\right|+\alpha_n\\
&\leq\sup_{\beta\in(0,1)}\left|\Pr \bigg(\frac{\sqrt{n/(2- \tau)}\widehat{\psi}_{d^*}}{\widehat{\nu}_{d^*}}\geq z_{1-\beta}\bigg)-\beta\right|+\alpha_n\\
&=o(1),
\end{align*}
where the last equality follows from Polya's theorem and the assumption $\alpha_n=o(1)$.

Next, by the argument in the proof of Theorem \ref{thm:thm2}, 
\begin{align*}
&\Pr \LRs{\frac{\sqrt{n/(2- \tau)}\widehat{\psi}_{d}}{\widehat{\nu}_{d}}<z_{1-\alpha_n}}\\
=& \Pr \LRs{\frac{\sqrt{n/(2- \tau)}(\widehat{\psi}_{d}-\psi_d)}{\nu_{d,\eta}}-\frac{(\widehat{\nu}_d-\nu_d)}{\nu_{d,\eta}}z_{1-\alpha_n}<\frac{\nu_d}{\nu_{d,\eta}}z_{1-\alpha_n}- \frac{\LRl{n/(2-\tau)}^{1/2}(\V_d-\V_{d^*})}{\nu_{d,\eta}}}.
\end{align*}
Since $\widehat{\nu}_d-\nu_d=O_{\P}(n^{-1/2})$, if $z_{1-\alpha_n}=o(n^{1/2})$ then $$\Pr \LRs{\frac{\sqrt{n/(2- \tau)}\widehat{\psi}_{d}}{\widehat{\nu}_{d}}<z_{1-\alpha_n}}=\Phi\LRs{\frac{\nu_{d}}{\nu_{d,\eta}}z_{1-\alpha_n}-\frac{\LRl{n/(2-\tau)}^{1/2}(\V_d-\V_{d^*}-\delta)}{\nu_{d,\eta}}}+o(1),$$ by Slutsky's theorem and Polya's theorem. Therefore, 
\begin{equation*}
\Pr (\widehat{d}<d^*)\leq \sum_{d=0}^{d^*-1}\Phi\LRs{\frac{\nu_{d}}{\nu_{d,\eta}}z_{1-\alpha_n}-\frac{\LRl{n/(2-\tau)}^{1/2}(\V_d-\V_{d^*}-\delta)}{\nu_{d,\eta}}}+o(1).
\end{equation*}
Finally, recall the standard normal tail approximation:
\begin{equation*}
\Pr\LRl{\mathcal{N}(0,1)>z_{1-\alpha_n}}\sim(2\pi)^{-1/2}z_{1-\alpha_n}^{-1}\exp\LRs{-z_{1-\alpha_n}^2/2},\text{ as } n\to\infty.
\end{equation*}
Let $\gamma_n$ be a sequence such that $\alpha_n=\exp\LRs{-z_{1-\gamma_n}^2/2}$. Then $z_{1-\alpha_n}<z_{1-\gamma_n}=\LRs{-2\log\alpha_n}^{1/2}$. Under the assumptions $\alpha_n\to0$ and $(\log \alpha_n)/n\to0$, we have $z_{1-\alpha_n}=o(n^{1/2})$ and thus $\Pr (\widehat{d}<d^*)=o(1)$ since $(\V_d-\V_{d^*}-\delta)>0$ for $d<d^*$.

Combining these two bounds yields $\Pr (\widehat{d}=d^*)=1$ as $n\to\infty$.
\end{proof}

\section{Predictiveness-induced order and required theoretical conditions}\label{suppsec:details}

\subsection{Predictiveness-induced order with connections to intrinsic orders}\label{subsec:d*_d0}

Unless stated otherwise, we assume $d^*\ge 1$; when $d^*=0$, the definition of $d^*(\V)$ together with Proposition \ref{prop: equivalence of V} implies $d^*(\V)=0$. We begin with the regression, where squared loss and negative log-likelihood are standard choices.

\begin{prop}\label{prop: d*=dF}
In factor regression, under squared loss, $d^*(\V)=d^*_{\rm F}$ provided that the noise $\varepsilon\indep \bff$ has mean zero and finite second moment, and that the regression function satisfies $m(\bff)\neq\E_{\P}\LRs{Y\mid \bff_{S}}$ for every strict subset $S\subsetneq [d^{*}_{\rm F}]$.
\end{prop}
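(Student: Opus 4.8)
The plan is to establish the two inequalities $d^*(\V)\le d^*_{\rm F}$ and $d^*(\V)\ge d^*_{\rm F}$ separately. The first is immediate: since the oracle representation here is $\bfR^*=\bff\in\mathbb{R}^{d^*_{\rm F}}$, Proposition~\ref{prop: equivalence of V} applies with $d^*=d^*_{\rm F}$ and gives $\V_{d^*_{\rm F}}=\V_{d_{\max}}$, so by the definition of $d^*(\V)$ we have $d^*(\V)\le d^*_{\rm F}$. The substance is therefore the reverse inequality, for which I would show that the predictiveness gap $\V_d-\V_{d_{\max}}$ is strictly positive for every $d<d^*_{\rm F}$.

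First I would identify the minimizers explicitly. Under squared loss the $L_2(\P)$-projection characterization gives, for each $d$, $g_d(\bfR)=\E_{\P}(Y\mid\bfR_{[d]})$, and for $d<d^*_{\rm F}$ the zero-padding makes $\bfR_{[d]}=\bff_{[d]}$, so $g_d(\bfR)=\E_{\P}(Y\mid\bff_{[d]})$; at the full order, $g_{d_{\max}}(\bfR)=\E_{\P}(Y\mid\bff)=m(\bff)$, where the last equality uses $\E(\varepsilon\mid\bff)=\E\varepsilon=0$ from $\varepsilon\indep\bff$. Because $\varepsilon\indep\bff$ also yields $\E(\varepsilon\mid\bff_{[d]})=0$, we obtain $\E_{\P}(Y\mid\bff_{[d]})=\E_{\P}(m(\bff)\mid\bff_{[d]})$, so the conditional mean of $Y$ coincides with that of $m(\bff)$.

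Next I would decompose the gap. Writing $Y-\E_{\P}(Y\mid\bff_{[d]})=\{m(\bff)-\E_{\P}(m(\bff)\mid\bff_{[d]})\}+\varepsilon$ and expanding the square, the cross term $\E[\{m(\bff)-\E_{\P}(m(\bff)\mid\bff_{[d]})\}\varepsilon]$ vanishes because the bracket is a function of $\bff$ and $\E(\varepsilon\mid\bff)=0$, while $\E[\varepsilon^2]=\V_{d_{\max}}$. This yields the clean identity
\[
\V_d-\V_{d_{\max}}=\E_{\P}\bigl[\{m(\bff)-\E_{\P}(m(\bff)\mid\bff_{[d]})\}^2\bigr],
\]
the residual variance of $m(\bff)$ after projecting onto the first $d$ factors.

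Finally I would invoke the hypothesis. Since $\E_{\P}(Y\mid\bff_S)=\E_{\P}(m(\bff)\mid\bff_S)$ by the same independence argument, the assumption $m(\bff)\ne\E_{\P}(Y\mid\bff_S)$ for every strict subset $S\subsetneq[d^*_{\rm F}]$ is equivalent to $m(\bff)\ne\E_{\P}(m(\bff)\mid\bff_S)$ on a set of positive probability; stated for every strict subset, the hypothesis is insensitive to how the factor coordinates are ordered. Specializing to the nested subsets $S=[d]$ for $0\le d<d^*_{\rm F}$ (a strict subset, including $S=\emptyset$ when $d=0$, where $\E_{\P}(Y\mid\bff_{[0]})=\E_{\P}Y$) forces the residual variance above to be strictly positive, hence $\V_d>\V_{d_{\max}}$. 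Combining the two inequalities gives $d^*(\V)=d^*_{\rm F}$. The only points requiring care are the richness of $\mathcal{G}$ needed to realize the conditional expectations as genuine minimizers over $\mathcal{G}_d$, and the empty-set edge case; neither poses a real obstacle, so I expect the translation of the ``strict subset'' hypothesis into strict positivity of the projection residual to be the only genuinely conceptual step.
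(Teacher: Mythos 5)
Your proof is correct and is essentially the paper's argument in contrapositive form: the paper assumes $d^*(\V)<d^*_{\rm F}$, deduces the risk equality $\E_{\P}(Y-m(\bff))^2=\E_{\P}\bigl(Y-\E_{\P}(Y\mid\bff_{[d^*_{\rm F}-1]})\bigr)^2$, and contradicts the hypothesis at $S=[d^*_{\rm F}-1]$, which rests on exactly the orthogonality identity you make explicit. Your decomposition $\V_d-\V_{d_{\max}}=\E_{\P}\bigl[\{m(\bff)-\E_{\P}(m(\bff)\mid\bff_{[d]})\}^2\bigr]$, obtained from $\varepsilon\indep\bff$, is the content hidden in the paper's ``this implies'' step, so the two proofs use the same ingredients with only a direct-versus-contradiction difference in packaging.
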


\begin{proof}
For notational consistency, adopt the convention $\sigma(\bff_{[0]})=\sigma(\bff_{\varnothing})=\{\varnothing,\Omega\}$. Suppose, for contradiction, that $d^*(\V)<d^*_{\rm F}$. Then $\E_{\P}(\bfY-m(\bff))^2=\E_{\P}(\bfY-\E_{\P}(\bfY\mid \bff_{[d^*_{\rm F}-1]}))^2$, where $m(\bff)=\E_{\P}(\bfY\mid\bff)$. This implies $m(\bff)=\E_{\P}(\bfY\mid \bff_{[d^*_{\rm F}-1]})$, contradicting the assumption that $m(\bff)\neq\E_{\P}\LRs{Y\mid \bff_{S}}$ for any strict subset $S\subsetneq [d^{*}_{\rm F}]$. Therefore, $d^*(\V)=d^*_{\rm F}$.
\end{proof}

\begin{prop}
In sufficient dimension reduction, under squared loss, $d^*(\V)=d^*_{\rm CMS}$ whenever the central mean subspace exists and $\E_{\P} Y^2<\infty$. 
\end{prop}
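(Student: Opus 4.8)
The plan is to reduce the statement to a computation with $L_2(\P)$ projections and then prove the two inequalities $d^*(\V)\le d^*_{\rm CMS}$ and $d^*(\V)\ge d^*_{\rm CMS}$, in close parallel with the argument for Proposition \ref{prop: d*=dF}. Since $\E_\P Y^2<\infty$ and the loss is squared error, for every $0\le d\le d_{\max}$ the risk $\V(g,\P)$ is minimized over $\mathcal G_d$ by the conditional mean, so that $g_d(\bfR)=\E_\P(Y\mid\bfR_{[d]})$ and $\V_d=\E_\P\{(Y-\E_\P(Y\mid\bfR_{[d]}))^2\}$; this is finite and, by Proposition \ref{prop: equivalence of V}, non-increasing in $d$. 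Writing the oracle representation as $\bfR=\bfB^{\top}\bfX$ with $\mathrm{span}(\bfB)$ the central subspace and its coordinates arranged so that $\bfR_{[d^*_{\rm CMS}]}$ spans the central mean subspace, sufficiency of $\bfR$ gives $\V_{d_{\max}}=\E_\P\{(Y-\E_\P(Y\mid\bfX))^2\}$, the irreducible squared-error risk.

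For the upper bound I would invoke the defining property of the central mean subspace: $\E_\P(Y\mid\bfX)$ is a measurable function of $\bfR_{[d^*_{\rm CMS}]}$, hence $\E_\P(Y\mid\bfR_{[d^*_{\rm CMS}]})=\E_\P(Y\mid\bfX)$ and $\V_{d^*_{\rm CMS}}=\V_{d_{\max}}$. By the definition of $d^*(\V)$ this gives $d^*(\V)\le d^*_{\rm CMS}$.

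The core of the proof is the reverse inequality, argued by contradiction. Suppose $d^*(\V)<d^*_{\rm CMS}$; monotonicity (Proposition \ref{prop: equivalence of V}) then forces $\V_{d^*_{\rm CMS}-1}=\V_{d_{\max}}$. Because $\sigma(\bfR_{[d^*_{\rm CMS}-1]})\subseteq\sigma(\bfX)$, the tower property gives $\E_\P(Y\mid\bfR_{[d^*_{\rm CMS}-1]})=\E_\P\{\E_\P(Y\mid\bfX)\mid\bfR_{[d^*_{\rm CMS}-1]}\}$, and the Pythagorean identity for $L_2(\P)$ projections yields $\V_{d^*_{\rm CMS}-1}=\V_{d_{\max}}+\E_\P[\{\E_\P(Y\mid\bfX)-\E_\P(Y\mid\bfR_{[d^*_{\rm CMS}-1]})\}^2]$. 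The assumed equality forces $\E_\P(Y\mid\bfX)=\E_\P(Y\mid\bfR_{[d^*_{\rm CMS}-1]})$ almost surely. But $\bfR_{[d^*_{\rm CMS}-1]}=\widetilde\bfB^{\top}\bfX$ for the matrix $\widetilde\bfB$ of the first $d^*_{\rm CMS}-1$ columns of $\bfB$, so $\mathrm{span}(\widetilde\bfB)$ is a mean dimension-reduction subspace of dimension at most $d^*_{\rm CMS}-1$, contradicting the minimality of the central mean subspace. Hence $d^*(\V)=d^*_{\rm CMS}$.

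The step I expect to be the main obstacle is passing from equality of minimal risks to equality of conditional means, and then correctly reading a linear prefix of $\bfR$ as a mean dimension-reduction subspace so that minimality can be invoked; the Pythagorean decomposition is what makes this rigorous, and it is precisely here that $\E_\P Y^2<\infty$ (so that $Y\in L_2(\P)$ and all conditional expectations are genuine $L_2$ projections) and the existence of the central mean subspace (so that $d^*_{\rm CMS}$ and $\bfB$ are well defined) are used. Two modeling points should be flagged: that $\mathcal G$ is rich enough for the $\mathcal G_d$-minimizer to be the conditional mean, which I would carry as a maintained assumption, and the ordering convention placing a CMS-spanning block in the leading $d^*_{\rm CMS}$ coordinates, on which the upper bound relies; the lower bound needs no such convention, since it only uses that any prefix of fewer than $d^*_{\rm CMS}$ coordinates is a projection of $\bfX$ onto a subspace of dimension below $d^*_{\rm CMS}$.
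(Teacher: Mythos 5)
Your proposal is correct and follows essentially the same route as the paper: argue by contradiction that $\V_{d^*_{\rm CMS}-1}=\V_{d_{\max}}$ forces $\E_{\P}(Y\mid\bfX)=\E_{\P}(Y\mid\bfB_{d^*_{\rm CMS}-1}^{\top}\bfX)$, contradicting the minimality of the central mean subspace. The paper's proof is just terser — it leaves the upper bound $d^*(\V)\le d^*_{\rm CMS}$ implicit (it follows from Proposition \ref{prop: equivalence of V} and the embedding convention) and omits the Pythagorean-identity justification that you spell out.
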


\begin{proof}
If $d^*(\V)<d^*_{\rm CMS}$, then by definition $\V_{d^*_{\rm CMS}-1}=\V_{d^*_{\rm CMS}}$, which implies $\E_{\P}(Y\mid \bfX)=\E_{\P}(\bfY\mid\bfB_{d^*_{\rm CMS}-1}^{\top}\bfX)$, where $\bfB_d$ denotes the first $d$ columns of $\bfB$; we adopt the convention $\sigma(\bfB_0^{\top}\bfX)=\{\varnothing,\Omega\}$. This contradicts the minimality of the structural dimension $d^*_{\rm CMS}$. Hence, the conclusion follows.
\end{proof}

\begin{prop}
In reduced-rank regression, under squared loss, $d^*(\V)=d^*_{\rm R}$ provided that the noise $\varepsilon\indep\bfX$ has mean zero and finite second moment.
\end{prop}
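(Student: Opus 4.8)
The plan is to mirror the two preceding proofs: use Proposition~\ref{prop: equivalence of V} for the inequality $d^*(\V)\le d^*_{\rm R}$ and argue the reverse by contradiction, with the contradiction now falling on the rank constraint $\mathrm{rank}(\bfA)=d^*_{\rm R}$ instead of a minimal-subspace statement. First I would record what squared loss and the noise assumption supply. Since $\varepsilon\indep\bfX$ with $\E_{\P}\varepsilon=0$, we have $\E_{\P}(\bfY\mid\bfX)=\bfA^{\top}\bfX=\Omega^{\top}\bfB^{\top}\bfX$; because $\bfR^*=\bfB^{\top}\bfX$ is a function of $\bfX$ and $\varepsilon\indep\bfR^*$, the same computation gives $\E_{\P}(\bfY\mid\bfR^*)=\Omega^{\top}\bfR^*=\bfA^{\top}\bfX$. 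Under squared loss the minimizer over $\mathcal{G}_{d^*_{\rm R}}$ is this conditional mean, so $\V_{d^*_{\rm R}}=\E_{\P}\|\varepsilon\|_2^2$; combined with the zero-padding embedding (intrinsic order $d^*_{\rm R}$), Proposition~\ref{prop: equivalence of V} yields $\V_{d^*_{\rm R}}=\V_{d_{\max}}$ and hence $d^*(\V)\le d^*_{\rm R}$.

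For the reverse inequality, suppose $d^*(\V)<d^*_{\rm R}$; monotonicity of $\{\V_d\}$ then forces $\V_{d^*_{\rm R}-1}=\V_{d^*_{\rm R}}$. Writing $\bfB_{d^*_{\rm R}-1}$ for the first $d^*_{\rm R}-1$ columns of $\bfB$ and using $\E_{\P}(\varepsilon\mid\bfX)=0$ to kill the cross term, a Pythagorean decomposition gives $\V_{d^*_{\rm R}-1}-\V_{d^*_{\rm R}}=\E_{\P}\|\bfA^{\top}\bfX-\E_{\P}(\bfY\mid\bfB_{d^*_{\rm R}-1}^{\top}\bfX)\|_2^2$. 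The assumed equality therefore yields the almost-sure identity $\bfA^{\top}\bfX=\E_{\P}(\bfY\mid\bfB_{d^*_{\rm R}-1}^{\top}\bfX)$; that is, the linear predictor $\bfA^{\top}\bfX$ is recoverable from $\bfB_{d^*_{\rm R}-1}^{\top}\bfX$ alone.

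The last step---converting this identity into the rank bound $\mathrm{rank}(\bfA)\le d^*_{\rm R}-1$---is where I expect the real difficulty, and a purely measure-theoretic reading is not enough: a genuinely nonlinear conditional mean could reconstruct $\bfA^{\top}\bfX$ from fewer coordinates even when $\mathrm{rank}(\bfA)=d^*_{\rm R}$ (say if one coordinate of $\bfR^*$ were a deterministic nonlinear function of the others), so some regularity on the design is indispensable. I would close the gap under a linear conditional-mean condition on $\bfX$ (e.g.\ $\bfX$ elliptical, as is standard here) together with $\mathrm{var}(\bfX)\succ0$: taking covariances of both sides of the identity with $\bfX$ and using linearity of $\E_{\P}(\bfX\mid\bfB_{d^*_{\rm R}-1}^{\top}\bfX)$ produces a matrix identity $\bfA^{\top}=N\bfB_{d^*_{\rm R}-1}^{\top}$ for some $N\in\mathbb{R}^{q\times(d^*_{\rm R}-1)}$, whence $\mathrm{rank}(\bfA)\le\mathrm{rank}(\bfB_{d^*_{\rm R}-1})\le d^*_{\rm R}-1$. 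This contradicts $\mathrm{rank}(\bfA)=d^*_{\rm R}$ and forces $d^*(\V)\ge d^*_{\rm R}$, giving $d^*(\V)=d^*_{\rm R}$.
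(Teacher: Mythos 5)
Your proposal is correct as far as it goes, but it diverges from the paper at exactly the step you flagged as the ``real difficulty,'' and the paper's route there is genuinely different from yours. The paper needs no ellipticity or linear-conditional-mean condition on $\bfX$: it exploits the fact that in reduced-rank regression the object to be reconstructed, $\bfA^{\top}\bfX=\Omega^{\top}\bfB^{\top}\bfX$, is itself \emph{linear}. Starting from the same identity $\Omega^{\top}\bfB^{\top}\bfX=\E_{\P}(\bfY\mid\bfB_{d^*_{\rm R}-1}^{\top}\bfX)$, the paper picks $\alpha$ with $\bfB_{d^*_{\rm R}-1}^{\top}\alpha=0$ but $\bfB^{\top}\alpha\neq 0$ and notes that shifting $\bfX\mapsto\bfX+k\alpha$ leaves the right-hand side unchanged while, by linearity, moving the left-hand side by $k\,\Omega^{\top}\bfB^{\top}\alpha$; hence the regression function is invariant along $\bfB^{\top}\alpha$. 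Extending $\beta_1=\bfB^{\top}\alpha/\|\bfB^{\top}\alpha\|$ to an orthonormal basis of $\mathbb{R}^{d^*_{\rm R}}$ then yields $\bfA=\bfB\Omega=\widetilde{\bfB}\Omega$ with $\mathrm{rank}(\widetilde{\bfB})\leq d^*_{\rm R}-1$, contradicting $\mathrm{rank}(\bfA)=d^*_{\rm R}$. So the ``genuinely nonlinear conditional mean'' you worried about is not the obstruction: whatever $\E_{\P}(\bfY\mid\bfB_{d^*_{\rm R}-1}^{\top}\bfX)$ is, it must coincide with a linear function of $\bfX$, and linearity alone kills the extra direction.

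That said, your instinct that some design regularity is tacitly used is sound. The paper's shift argument invokes the identity at $\bfX$ and at $\bfX+k\alpha$ simultaneously, i.e., it silently upgrades an almost-sure identity to one holding on a translation-invariant set; this is harmless when, say, $\bfX$ has a Lebesgue density (a Fubini argument suffices), but it is exactly what fails in your example where one coordinate of $\bfB^{\top}\bfX$ is a deterministic nonlinear function of the others. Your fix---ellipticity plus $\mathrm{var}(\bfX)\succ 0$, followed by the covariance projection giving $\bfA^{\top}=N\bfB_{d^*_{\rm R}-1}^{\top}$---is valid, but it proves the proposition under a distributional hypothesis that the statement does not contain and that the paper's mechanism does not require. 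In short: you identified a real (implicit) regularity issue, but by overlooking that the link is linear you paid for it with a substantially stronger assumption than the full-support-type condition the paper's argument actually needs.
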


\begin{proof}
Suppose, for contradiction, $d^*(\V)<d^*_{\rm R}$. Write the regression function as $f(\bfB^{\top}\bfX)\triangleq\Omega^{\top}\bfB^{\top}\bfX$. Under the stated noise assumptions, $\Omega^{\top}\bfB^{\top}\bfX=\E_{\P}(\bfY\mid\bfB_{d^*_{\rm R}-1}^{\top}\bfX)$. Since $\text{span}(\bfB_{d^*_{\rm R}-1})\subsetneq \text{span}(\bfB)$, there exists $\alpha\in\mathbb{R}^p$ such that $\bfB_{d^*_{\rm R}-1}^{\top}\alpha= 0$ but $\bfB^{\top}\alpha\neq { 0}$. Consequently, for any scalar $k\in\mathbb{R}$, $$ f(\bfB^{\top}\bfX+k\bfB^{\rmT}\alpha)=\E_{\P}(\bfY\mid\bfB_{d^*_{\rm R}-1}^{\top}\bfX+k \bfB_{d^*_{\rm R}-1}^{\rmT}\alpha)=\E_{\P}(\bfY\mid \bfB_{d^*_{\rm R}-1}^{\rmT}\bfX)=f(\bfB^{\rmT}\bfX).$$ Let $\beta_1=\bfB^{\top}\alpha/\|\bfB^{\top}\alpha\|\in\mathbb{R}^{d^*_{\rm R}}$, and extend it to an orthonormal basis $\{\beta_1,\beta_2,\dots,\beta_{d^*_{\rm R}}\}$ of $\mathbb{R}^{d^*_{\rm R}}$. Then $$ f(\bfB^{\rmT}\bfX)=f\big(\sum_{j\in [d^*_{\rm R}] }\langle\bfB^{\rmT}\bfX,\beta_j\rangle\beta_j \big)=f\big(\sum\limits_{j=2}^{d^*_{\rm R}} \langle \bfB^{\rmT}\bfX,\beta_j\rangle\beta_j\big),$$ so $f(\bfB^{\top}\bfX)=f(\widetilde{\bfB}^{\top}\bfX)$, where $\widetilde{\bfB}= \bfB\sum\limits_{j=2}^{d^*_{\rm R}} \beta_j \beta_j^{\top}$ has rank at most $d^*_{\rm R}-1$. Hence, $\Omega^{\top}\bfB^{\top}\bfX=\Omega^{\top}\widetilde{\bfB}^{\top}\bfX$, for all $\bfX$, implying $\bfB\Omega=\widetilde{\bfB}\Omega$. This contradicts the assumption that $A=\bfB\Omega$ has full rank $d^*_{\rm R}$. Therefore, $d^*_{\rm R}(\V)=d^*_{\rm R}$.
\end{proof}

\begin{prop}
Let $\ell(y,g(\cdot\mid r))=-\log g(y\mid r)$ be the negative log-likelihood, where $g(\cdot\mid r)$ denotes the conditional density selected by $\bfR=r$ and lies in the class 
\[
\mathcal G =
\Bigl\{g:\mathbb{R}^{d_{\max}}\to L^1(\mathcal Y)\mid
g(\cdot\mid r)\ge 0,
\int_{\mathcal Y} g(y\mid r)\,d y = 1
\Bigr\}.
\]
Then $d^*(\V)= d^*_{\rm CS}$.
\end{prop}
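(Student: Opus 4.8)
The plan is to compute the predictiveness $\V_d$ under negative log-likelihood in closed form and then read off $d^*(\V)$ from the minimality property that defines the central subspace. Write $B_d$ for the first $d$ columns of $B$, so that the first $d$ coordinates of the oracle representation are $\bfR_{[d]}=B_d^\top \bfX$, and let $p(\cdot\mid B_d^\top \bfX)$ be the conditional density of $\bfY$ given $B_d^\top \bfX$. First I would record the standard minimization identity: for any $g\in\mathcal{G}_d$, Gibbs' inequality gives
\[
\V(g,\P)-\E_{\P}\{-\log p(\bfY\mid B_d^\top \bfX)\}=\E_{\P}\{\mathrm{KL}(p(\cdot\mid B_d^\top \bfX)\,\|\,g(\cdot\mid B_d^\top \bfX))\}\ge 0,
\]
with equality precisely when $g=p(\cdot\mid B_d^\top \bfX)$ almost surely. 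Hence the minimizer over $\mathcal{G}_d$ is the true conditional density and $\V_d=\E_{\P}\{-\log p(\bfY\mid B_d^\top \bfX)\}$, i.e.\ the conditional entropy $H(\bfY\mid B_d^\top \bfX)$. Throughout I assume the regular conditional densities exist and the relevant entropies are finite.

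For the upper bound, the central-subspace condition $\bfY\indep \bfX\mid B^\top \bfX$ makes $\bfR^*=B^\top \bfX$ predictively sufficient, so $\bfR^*$ has dimension $d^*_{\rm CS}$ and Proposition~\ref{prop: equivalence of V} immediately yields $\V_{d^*_{\rm CS}}=\V_{d^*_{\rm CS}+1}=\cdots=\V_{d_{\max}}$. Therefore $d^*(\V)\le d^*_{\rm CS}$.

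The crux is the matching lower bound, which I would prove by contradiction in parallel with the squared-loss propositions. Suppose $d^*(\V)<d^*_{\rm CS}$; by the monotonicity in Proposition~\ref{prop: equivalence of V} this forces $\V_{d^*_{\rm CS}-1}=\V_{d_{\max}}$. Differencing the two conditional-entropy expressions and conditioning on $B^\top \bfX$ gives
\[
\V_{d^*_{\rm CS}-1}-\V_{d_{\max}}=\E_{\P}\{\mathrm{KL}(p(\cdot\mid B^\top \bfX)\,\|\,p(\cdot\mid B_{d^*_{\rm CS}-1}^\top \bfX))\},
\]
which is nonnegative and vanishes exactly when $p(\bfY\mid B^\top \bfX)=p(\bfY\mid B_{d^*_{\rm CS}-1}^\top \bfX)$ almost surely, i.e.\ $\bfY\indep B^\top \bfX\mid B_{d^*_{\rm CS}-1}^\top \bfX$. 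Combining this with $\bfY\indep \bfX\mid B^\top \bfX$ and the fact that $B_{d^*_{\rm CS}-1}^\top \bfX$ is a deterministic function of $B^\top \bfX$, I chain the densities $p(\bfY\mid \bfX)=p(\bfY\mid B^\top \bfX)=p(\bfY\mid B_{d^*_{\rm CS}-1}^\top \bfX)$ to conclude $\bfY\indep \bfX\mid B_{d^*_{\rm CS}-1}^\top \bfX$. Then $\mathrm{span}(B_{d^*_{\rm CS}-1})$ is a dimension-reduction subspace of dimension $d^*_{\rm CS}-1$, contradicting the minimality that defines $d^*_{\rm CS}$. Hence $\V_{d^*_{\rm CS}-1}>\V_{d_{\max}}$, so $d^*(\V)\ge d^*_{\rm CS}$, and combined with the upper bound we obtain $d^*(\V)=d^*_{\rm CS}$.

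The main obstacle I anticipate is the density-chaining step in the lower bound: converting the equality of conditional entropies into the conditional-independence statement $\bfY\indep \bfX\mid B_{d^*_{\rm CS}-1}^\top \bfX$ rests on the KL divergence being well-defined and finite and on the nesting $B_{d^*_{\rm CS}-1}^\top \bfX=\pi(B^\top \bfX)$, so that conditioning on both variables collapses to conditioning on $B^\top \bfX$. Making the regularity hypotheses explicit (existence of regular conditional densities and finiteness of the entropies) is where most of the technical bookkeeping lies; the KL algebra itself is routine.
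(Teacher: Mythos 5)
Your proof is correct and follows essentially the same route as the paper's: identify the true conditional density as the population risk minimizer, then under the contradiction hypothesis show that the predictiveness gap is an expected Kullback--Leibler divergence (the paper phrases this as Jensen's inequality applied to the density ratio, conditionally on $\bfX$), whose vanishing forces equality of the conditional densities and contradicts the minimality of $d^*_{\rm CS}$. The differences are only organizational: you condition on $B^{\top}\bfX$ and invoke the central-subspace property in the final chaining step, whereas the paper conditions on $\bfX$ and uses that property to show the ratio integrates to one; your explicit chain to $\bfY \indep \bfX \mid B_{d^*_{\rm CS}-1}^{\top}\bfX$ spells out a step the paper leaves implicit.
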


\begin{proof}
The conditional density $f(\cdot\mid r)$ of $\bfY$ given $\bfR=r$ is the unconstrained population minimizer of the expected negative log-likelihood over $\mathcal{G}$. Suppose, for contradiction, that $d^*(\V)<d^*_{\rm CS}$. Then $\V_{d^*_{\rm CS}}=\V_{d^*_{\rm CS}-1}$, which implies $$\E_{\P}\{\log f(\bfY\mid \bfB_{d^*_{\rm CS}-1}^{\top}\bfX )\}= \E_{\P}\{\log f(\bfY\mid \bfB_{d^*_{\rm CS}}^{\top}\bfX )\}.$$ Thus, $\E_{\P}\bigg[\E_{\P}\Big\{\log \dfrac{f(\bfY\mid \bfB_{d^*_{\rm CS}-1}^{\top}\bfX ) }{f(\bfY\mid \bfB_{d^*_{\rm CS}}^{\top}\bfX )}\mid \bfX\Big\}\bigg]=0.$ By Jensen's inequality, $$\E_{\P}\Big\{\log \dfrac{f(\bfY\mid \bfB_{d^*_{\rm CS}-1}^{\top}\bfX ) }{f(\bfY\mid \bfB_{d^*_{\rm CS}}^{\top}\bfX )}\mid \bfX\Big\}\leq \log \E_{\P}\Big\{ \dfrac{f(\bfY\mid \bfB_{d^*_{\rm CS}-1}^{\top}\bfX ) }{f(\bfY\mid \bfB_{d^*_{\rm CS}}^{\top}\bfX )}\mid\bfX\Big\}=\log1=0,$$ where the equality $\E_{\P}\Big\{ \dfrac{f(\bfY\mid \bfB_{d^*_{\rm CS}-1}^{\top}\bfX ) }{f(\bfY\mid \bfB_{d^*_{\rm CS}}^{\top}\bfX )}\mid\bfX\Big\}=1$ follows from the definition of the central subspace. Thus $$\E_{\P}\Big\{\log \dfrac{f(\bfY\mid \bfB_{d^*_{\rm CS}-1}^{\top}\bfX ) }{f(\bfY\mid \bfB_{d^*_{\rm CS}}^{\top}\bfX )}\mid \bfX\Big\}=0\text{ almost surely.}$$ Equality in Jensen's inequality holds if and only if $f(\bfY\mid \bfB_{d^*_{\rm CS}-1}^{\top}\bfX ) = f(\bfY\mid \bfB_{d^*_{\rm CS}}^{\top}\bfX )$ almost surely, contradicting the definition of $d^*_{\rm CS}$. Therefore, $d^*(\V)=d^*_{\rm CS}.$
\end{proof}

We now turn to classification, where the response $Y$ is categorical. We consider two widely used loss functions: cross-entropy and 0-1 loss.

\begin{prop}
Under cross-entropy loss, $d^*(\V)=d^*_{\rm CS}$ provided that the central subspace exists.
\end{prop}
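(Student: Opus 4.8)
The plan is to recognize that, for a categorical response $\bfY\in[M]$, cross-entropy loss is exactly the negative log-likelihood treated in the preceding proposition, with the conditional density replaced by the conditional probability mass function. Writing a prediction rule as $g(r)=(g_1(r),\dots,g_M(r))$ valued in the simplex $\{v\in[0,1]^M:\sum_{m}v_m=1\}$, the cross-entropy risk is $\V(g,\P)=\E_{\P}\{-\sum_{m=1}^{M}1\{\bfY=m\}\log g_m(\bfR)\}$. First I would identify the population minimizer: for each fixed $r$, minimizing $-\sum_{m}\P(\bfY=m\mid\bfR=r)\log g_m$ over the simplex is a Gibbs-inequality problem whose unique solution is $g_m=\P(\bfY=m\mid\bfR=r)$, since the excess risk equals a conditional Kullback--Leibler divergence that vanishes only at the conditional p.m.f. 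Hence the optimal rule at order $d$ is $g_d(r)=(\P(\bfY=m\mid\bfR_{[d]}=r_{[d]}))_{m\in[M]}$, and in the SDR parametrization $\bfR_{[d]}$ corresponds to $\bfB_d^{\top}\bfX$, the projection onto the first $d$ columns of $\bfB$.

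Having identified the optimal rules, I would then replay the contradiction argument of the negative-log-likelihood proposition verbatim. Suppose $d^*(\V)<d^*_{\rm CS}$; then $\V_{d^*_{\rm CS}-1}=\V_{d^*_{\rm CS}}$, which after substituting the optimal p.m.f.'s gives $\E_{\P}\{\log f(\bfY\mid\bfB_{d^*_{\rm CS}-1}^{\top}\bfX)\}=\E_{\P}\{\log f(\bfY\mid\bfB_{d^*_{\rm CS}}^{\top}\bfX)\}$, where $f(\cdot\mid\cdot)$ now denotes the conditional p.m.f. Conditioning on $\bfX$ and applying Jensen's inequality to the log-ratio, together with the central-subspace identity $f(\bfY\mid\bfB_{d^*_{\rm CS}}^{\top}\bfX)=f(\bfY\mid\bfX)$ (which forces the conditional expectation of the ratio to equal $1$), shows that the nonpositive integrand has zero mean and is therefore zero almost surely. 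The equality case of Jensen's inequality then forces $f(\bfY\mid\bfB_{d^*_{\rm CS}-1}^{\top}\bfX)=f(\bfY\mid\bfB_{d^*_{\rm CS}}^{\top}\bfX)$ almost surely, contradicting the minimality of $d^*_{\rm CS}$. Hence $d^*(\V)=d^*_{\rm CS}$.

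The main obstacle, and the only place where the categorical structure matters beyond notation, is cleanly justifying that the simplex-constrained minimizer of the cross-entropy risk is the conditional p.m.f., so that the excess cross-entropy risk is a conditional KL divergence carrying the correct equality condition. Once this proper-scoring-rule fact and the accompanying Jensen equality case are in place, the rest is a specialization of the negative-log-likelihood argument, and one could alternatively invoke that proposition directly after observing that cross-entropy is precisely its categorical instance. A minor point worth checking is integrability, namely that $\log f(\bfY\mid\cdot)$ is integrable so the risks are finite; this holds whenever the conditional probabilities are bounded away from $0$, or can be enforced by restricting $\mathcal{G}$ to rules with strictly positive components.
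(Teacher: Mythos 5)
Your proposal is correct, but it is packaged differently from the paper's own proof of this proposition. The paper argues via information theory: it identifies $\V_d$ with the conditional entropy $H(Y\mid \bfB_d^{\top}\bfX)$ and invokes the data-processing inequality, whose equality case yields $Y\indep \bfB_{d^*_{\rm CS}}^{\top}\bfX\mid \bfB_{d^*_{\rm CS}-1}^{\top}\bfX$, and hence $\Pr(Y\mid \bfB_{d^*_{\rm CS}-1}^{\top}\bfX)=\Pr(Y\mid\bfX)$, contradicting minimality of the central subspace. You instead (i) explicitly verify, via Gibbs' inequality, that the simplex-constrained cross-entropy minimizer at each order is the conditional p.m.f.\ --- a step the paper leaves implicit when it rewrites $\V_d$ as a conditional entropy --- and then (ii) replay the Jensen/KL contradiction argument that the paper actually uses for its \emph{negative log-likelihood} proposition, with densities replaced by p.m.f.'s; indeed, your observation that cross-entropy is precisely the categorical instance of that proposition would let you cite it directly. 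The two routes are mathematically equivalent (the data-processing inequality with its equality condition \emph{is} the Jensen/KL argument), so the difference is one of economy versus self-containedness: the paper's version is shorter because it cites a packaged information-theoretic fact, while yours is more elementary, makes the minimizer identification explicit, and unifies the two loss functions under one argument. Your integrability caveat (conditional probabilities bounded away from zero, or restricting $\mathcal{G}$) is a reasonable addition; note the paper implicitly assumes finiteness of the relevant entropies, and both your write-up and the paper's gloss over the same measure-theoretic support technicalities in the Jensen/DPI equality case.
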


\begin{proof}
Suppose, for contradiction, that $d^*(\V)<d^*_{\rm CS}$. Then $\V_{d^*_{\rm CS}}=\V_{d^*_{\rm CS}-1}$, which implies $H(Y\mid \bfB_{d^*_{\rm CS}-1}^{\top}\bfX)=H(Y\mid \bfB_{d^*_{\rm CS}}^{\top}\bfX)$, where $H(Y\mid R)=-\E_{\P}\{\log \Pr(Y\mid R)\}$ denotes conditional entropy. By the data-processing inequality for conditional entropy, $$ H(Y\mid \bfB_{d^*_{\rm CS}-1}^{\top}\bfX)\geq H(Y\mid \bfB_{d^*_{\rm CS}}^{\top}\bfX),$$ with equality if and only if $$ Y\indep\bfB_{d^*_{\rm CS}}^{\top}\bfX\mid\bfB_{d^*_{\rm CS}-1}^{\top}\bfX.$$ This implies $$ \Pr (Y\mid \bfB_{d^*_{\rm CS}-1}^{\top}\bfX)=\Pr (Y\mid\bfX),$$ contradicting the definition of $d^*_{\rm CS}$ as the minimal dimension such that $\Pr (Y\mid\bfX)= \Pr (Y\mid \bfB_{d^*_{\rm CS}}^{\top}\bfX)$. Hence, the conclusion follows.
\end{proof}

\begin{prop}
Under 0-1 loss, $d^*(\V)=d^*_{\rm CDS}$ provided that the central discriminant subspace exists, where $d^*_{\rm CDS}$ denotes its structural dimension.
\end{prop}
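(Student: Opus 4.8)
The plan is to mirror the contradiction scheme of the preceding propositions, but to replace the \emph{uniqueness-of-minimizer} step that powered the log-likelihood and cross-entropy cases by an \emph{existence}-based use of the central discriminant subspace, because under 0-1 loss the Bayes-optimal classifier is generally not unique. Throughout I would order the oracle coordinates so that the first $d^*_{\rm CDS}$ entries of $R$ span the CDS, and recall that under 0-1 loss the risk of a rule $g$ is $\V(g,\P)=\Pr(Y\neq g(R))$. Its minimizer over $\mathcal{G}_d$ is a Bayes rule based on the first $d$ coordinates, with $\V_d = 1-\E_{\P}[\max_m \Pr(Y=m\mid B_d^{\top}X)]$, where $B_d$ denotes the first $d$ columns of $B$ and $\sigma(B_0^{\top}X)=\{\varnothing,\Omega\}$ as in the earlier proofs.

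First I would establish the upper bound $d^*(\V)\le d^*_{\rm CDS}$. By the defining property of the CDS there is a Bayes-optimal classifier $\delta^{\ast}$ measurable with respect to $B_{d^*_{\rm CDS}}^{\top}X$, i.e.\ $\delta^{\ast}\in\mathcal{G}_{d^*_{\rm CDS}}$. Since $R$ is sufficient for $Y$, the quantity $\V_{d_{\max}}$ equals the overall Bayes risk, so $\V_{d^*_{\rm CDS}}\le \V(\delta^{\ast},\P)=\V_{d_{\max}}$; combined with the monotonicity of Proposition~\ref{prop: equivalence of V} this gives $\V_{d^*_{\rm CDS}}=\V_{d_{\max}}$, hence $d^*(\V)\le d^*_{\rm CDS}$. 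The degenerate case $d^*_{\rm CDS}=0$ (the constant Bayes rule of the motivating example) is covered here directly, since then $\delta^{\ast}\in\mathcal{G}_0$.

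For the reverse inequality I would argue by contradiction, exactly as in the cross-entropy proof. Suppose $d^*(\V)<d^*_{\rm CDS}$; then monotonicity together with the definition of $d^*(\V)$ forces $\V_{d^*_{\rm CDS}-1}=\V_{d_{\max}}$, the overall Bayes risk. Consequently the minimizer $g_{d^*_{\rm CDS}-1}\in\mathcal{G}_{d^*_{\rm CDS}-1}$ is itself Bayes-optimal and is, by construction, a measurable function of the projection of $X$ onto a subspace of dimension at most $d^*_{\rm CDS}-1$. This exhibits a Bayes-optimal rule that depends on $X$ only through a strictly lower-dimensional projection, contradicting the minimality of $d^*_{\rm CDS}$. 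Therefore $d^*(\V)\ge d^*_{\rm CDS}$, and combining the two bounds yields $d^*(\V)=d^*_{\rm CDS}$.

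The main obstacle is precisely the non-uniqueness of the Bayes rule on the tie set $\{x:\Pr(Y=m\mid X=x)\text{ is maximized by more than one }m\}$, where two minimizers of the 0-1 risk may disagree. Unlike the log-likelihood and cross-entropy settings, equality of risks here does \emph{not} force a pointwise identity of predictors, so one cannot run the Jensen- or data-processing-type argument used there. The resolution is to state and use the CDS through the \emph{existence} of a Bayes rule measurable with respect to the projection, so that the risk-achieving minimizer $g_{d^*_{\rm CDS}-1}$ can be fed straight into the minimality condition without ever claiming it equals the canonical arg-max classifier.
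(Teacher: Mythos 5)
Your architecture mirrors the paper's (an explicit upper bound $d^*(\V)\le d^*_{\rm CDS}$, then a contradiction for the reverse inequality), and the upper-bound half is fine --- the paper leaves it implicit. The gap is in your final step. From $\V_{d^*_{\rm CDS}-1}=\V_{d_{\max}}$ you conclude that $g_{d^*_{\rm CDS}-1}$ is Bayes-optimal, and you assert that the mere existence of a Bayes-risk-achieving rule measurable with respect to a $(d^*_{\rm CDS}-1)$-dimensional projection contradicts the minimality of $d^*_{\rm CDS}$. But the central discriminant subspace is defined in \cite{cook2001theory} through coincidence of \emph{classifiers}: a discriminant subspace is one whose projection-based arg-max rule agrees with $\arg\max_y\Pr(Y=y\mid \bfX)$, and $d^*_{\rm CDS}$ is the minimal dimension of such a subspace. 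Minimality in that sense does not, as a matter of definition, rule out risk-achieving rules living on smaller projections; your risk-based reformulation of the CDS is exactly the statement that needs proof, and your proposal installs it by fiat (the ``resolution'' in your last paragraph) rather than proving it. Without that bridge, the contradiction does not fire.

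The bridge is short, and it is precisely the paper's key step. Write $\eta_y(x)=\Pr(Y=y\mid \bfX=x)$ and $\eta_*(x)=\max_y\eta_y(x)$, and let $\hat y(\bfX)$ denote the label predicted by $g_{d^*_{\rm CDS}-1}$. Its risk is $1-\E_{\P}\{\eta_{\hat y(\bfX)}(\bfX)\}$, and $\eta_{\hat y(\bfX)}(\bfX)\le\eta_*(\bfX)$ pointwise; equality of risks therefore forces $\eta_{\hat y(\bfX)}(\bfX)=\eta_*(\bfX)$ almost surely, i.e., $\hat y(\bfX)$ is almost surely \emph{a maximizer} of $\eta_y(\bfX)$, so $\mathrm{span}(\bfB_{d^*_{\rm CDS}-1})$ is a discriminant subspace of dimension $d^*_{\rm CDS}-1$ --- the desired contradiction. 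Note that your stated reason for skipping this argument, namely that ties on the arg-max set make it fail, is mistaken: the argument never claims $\hat y(\bfX)$ equals a canonical arg-max selection, only that the conditional probability it attains equals the maximal one, and that conclusion is unaffected by non-uniqueness. So the obstacle you engineered around is not actually an obstacle, and the pointwise-domination step you deliberately omitted is the essential content of the paper's proof.
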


\begin{proof}
Define the optimal classifiers
\begin{align*}
b_{d^*_{\rm CDS}-1}(\bfX)&=\arg\max_y \Pr (\bfY=y\mid \bfB_{d^*_{\rm CDS}-1}^{\top}\bfX),\\
b_{d^*_{\rm CDS}}(\bfX)&=\arg\max_y \Pr (\bfY=y\mid \bfB_{d^*_{\rm CDS}}^{\top}\bfX).
\end{align*}
By the definition of the central discriminant subspace \citep{cook2001theory}, $ \arg\max_y \Pr (\bfY=y\mid \bfB_{d^*_{\rm CDS}}^{\top}\bfX)=\arg\max_y \eta_y(\bfX)$, where $\eta_y(\bfX)=\Pr (Y=y\mid\bfX) $. Assume $d^*(\V)<d^*_{\rm CDS}$. Define $ \eta_{*}(\bfX)=\max_y \eta_y(\bfX)$. Then $ \V_{d^*_{\rm CDS}}=\V_{d^*_{\rm CDS}-1}$, and under the 0-1 loss,
\begin{align*}
\V_{d^*_{\rm CDS}}&=\Pr (Y\neq b_{d^*_{\rm CDS}}(\bfX) )=1-\E_{\P} (\eta_{*}(\bfX))\\
\V_{d^*_{\rm CDS}-1}&=\Pr (Y\neq b_{d^*_{\rm CDS}-1}(\bfX) )=1-\E_{\P} (\eta_{b_{d^*_{\rm CDS}-1}(\bfX)}(\bfX))
\end{align*}
Thus, $ \E_{\P} (\eta_{*}(\bfX))=\E_{\P} (\eta_{b_{d^*_{\rm CDS}-1}(\bfX)}(\bfX))$. Since $\eta_{*}(\bfX)\geq  \eta_{b_{d^*_{\rm CDS}-1}(\bfX)}(\bfX)$ pointwise, equality of expectations implies $ \eta_{*}(\bfX)= \eta_{b_{d^*_{\rm CDS}-1}(\bfX)}(\bfX),$ almost surely. Hence, $b_{d^*_{\rm CDS}-1}(\bfX)=\argmax_y \eta_y(\bfX)$ almost surely, contradicting the definition of $d^*_{\rm CDS}$. Therefore, $d^*(\V)=d^*_{\rm CDS}$ under 0-1 loss.
\end{proof}

\subsection{Verification of Condition (C1) for common loss functions}\label{subsec: verification of C1}

\begin{enumerate}
\item[(1)] \textbf{Squared loss.}
To verify (C1) under squared loss, we assume that $\bfR$ captures all information in $\bfX$ relevant for predicting the conditional mean of $\bfY$, that is,
\begin{equation}\label{eq: C1 for squared loss}
\E_{\P}(\bfY\mid \bfX,\bfR)=\E_{\P}(\bfY\mid \bfR).   
\end{equation}
For any $g\in\mathcal{G}$ and $\varphi:\mathcal{X}\to \mathbb{R}^{d_{\max}}$, we have 
\begin{align*}
&\lvert\E_{\P}\{\|\bfY-g\circ\varphi(\bfX)\|^2-\|\bfY-g_{d^*}(\bfR)\|^2 \}\rvert\\
=&\|g\circ\varphi(\bfX)-g_{d^*}(\bfR)\|^2_{L_2(\P)}+2\E_{\P}\{\langle \bfY-g_{d^*}(\bfR),g_{d^*}(\bfR)- g\circ\varphi(\bfX)\rangle\}.
\end{align*}
Since $g_{d^*}(r)=\E_{\P}(\bfY\mid \bfR=r)$ is the minimizer of the mean squared error, together with \eqref{eq: C1 for squared loss}, $\E_{\P}\{\langle \bfY-g_{d^*}(\bfR),g_{d^*}(\bfR)- g\circ\varphi(\bfX)\rangle\}=0$ and thus the only contribution comes from $\|g\circ\varphi(\bfX)-g_{d^*}(\bfR)\|^2_{L_2(\P)} $, establishing (C1) under squared loss.

\item[(2)] \textbf{Cross-entropy loss.}
To verify (C1) under cross-entropy loss, we assume the linking condition $\Pr(\bfY\mid \bfX,\bfR)=\Pr(\bfY\mid \bfR)$. For any $g\in\mathcal{G}$ and $\varphi:\mathcal{X}\to \mathbb{R}^{d_{\max}}$, we have 
\begin{align*}
&\Big| \E_{\P}\Big[\sum_{m=1}^M 1\{\bfY=m\} \log\Big(\frac{g\circ\varphi(\bfX)_m}{g_{d^*}(\bfR)_m} \Big) \Big]\Big|\\
=&\Big| \E_{\P}\Big[ \E_{\P}\Big[\sum_{m=1}^M 1\{\bfY=m\} \log\Big(\frac{g\circ\varphi(\bfX)_m}{g_{d^*}(\bfR)_m} \Big)\mid \bfX,\bfR \Big]\Big]\Big|\\
=&\Big| \E_{\P}\Big[\sum_{m=1}^M \Pr(\bfY=m\mid \bfX,\bfR)\log\Big(\frac{g\circ\varphi(\bfX)_m}{g_{d^*}(\bfR)_m} \Big) \Big]\Big|\\
=&\Big| \E_{\P}\Big[\sum_{m=1}^M g_{d^*}(\bfR)_m\log\Big(\frac{g\circ\varphi(\bfX)_m}{g_{d^*}(\bfR)_m} \Big) \Big]\Big|,
\end{align*}
where the last equality uses $g_{d^*}(\bfR)=\Pr(\bfY\mid \bfR)$ and the linking condition. Following the approach in \cite{williamson2023general}, apply a second-order Taylor expansion of $\log(1+t)$ at $t=0$ to obtain, for each class $m\in[M]$, 
\begin{align*}
&\log\Big(\frac{g\circ\varphi(\bfX)_m}{g_{d^*}(\bfR)_m} \Big)\\ =&\log\Big(1+\frac{g\circ\varphi(\bfX)_m-g_{d^*}(\bfR)_m}{g_{d^*}(\bfR)_m} \Big)\\    =&\frac{g\circ\varphi(\bfX)_m-g_{d^*}(\bfR)_m}{g_{d^*}(\bfR)_m}-\frac{(g\circ\varphi(\bfX)_m-g_{d^*}(\bfR)_m)^2}{2g_{d^*}(\bfR)^2_m(1+\xi_m)^2},
\end{align*}
where $\xi_m$ lies pointwise between $0$ and $\frac{g\circ\varphi(\bfX)_m-g_{d^*}(\bfR)_m}{g_{d^*}(\bfR)_m}$. Thus, $g_{d^*}(\bfR)_m(1+\xi_m)$ lies pointwise between $g_{d^*}(\bfR)_m$ and $ g\circ\varphi(\bfX)_m$. Assume that for some $\delta\in(0,1/2)$,  \begin{equation}\label{eq: condition for ce loss}
g_{d^*}(\bfR)_m,g\circ\varphi(\bfX)_m\in(\delta,1-\delta)\text{ almost surely for all }m\in[M].
\end{equation}
Then $$\Big| \E_{\P}\Big[\sum_{m=1}^M g_{d^*}(\bfR)_m\log\Big(\frac{g\circ\varphi(\bfX)_m}{g_{d^*}(\bfR)_m} \Big) \Big]\Big|\leq \frac{1-\delta}{2\delta^2}\|g\circ\varphi-g_{d^*}\|^2_{L_2(\P)},$$ since $\sum_{m=1}^M g\circ\varphi(\bfX)_m=\sum_{m=1}^M g_{d^*}(\bfR)_m=1$. This verifies (C1) for cross-entropy loss under \eqref{eq: condition for ce loss} and the linking condition $\Pr(\bfY\mid \bfX,\bfR)=\Pr(\bfY\mid \bfR)$.

In Example \ref{eg:SDR}, the central subspace is defined by $\Pr(\bfY\mid\bfX)=\Pr(\bfY\mid \bfB^{\top}\bfX)$. Hence, when $\bfR=\bfB^{\top}\bfX$, the linking condition holds automatically.

\item[(3)] \textbf{Negative log-likelihood.} The verification of (C1) for negative log-likelihood is analogous to the cross-entropy case. It suffices to assume the linking condition $$\Pr(\bfY\leq y\mid \bfX,\bfR)=\Pr(\bfY\le y\mid \bfR),\text{ for all }y\in\mathbb{R},$$ together with an analogue of \eqref{eq: condition for ce loss}: for some $\delta\in(0,1/2)$, $$g_{d^*}(y\mid \bfR),g(y\mid \varphi(\bfX))\in(\delta,1-\delta),\text{ for all }y\in\mathbb{R}, \text{ almost surely}.$$

In Example \ref{eg:SDR}, the central subspace is defined by $\bfY\indep\bfX\mid\bfB^{\top}\bfX$. Hence, when $\bfR=\bfB^{\top}\bfX$, the linking condition is again naturally satisfied.
\end{enumerate}

\subsection{Support for Condition (C2): Representation error bounds when \texorpdfstring{$d \ge d^*$}{d >= d*}}\label{sec:recovery_null}

\subsubsection{Factor regression}\label{sec:eg_factor}

Our goal is to verify that, under suitable conditions, the representation error satisfies $\|O \widehat{\bfR}_{[d]}-\bfR^*\|_{L_2(\P)} =o_{\P}(n^{-1/4})$ for some data-dependent rotation matrix $O$. We focus on principal component analysis (PCA) as a representative dimension reduction method for Example \ref{eg:factor regression}. Let $\widehat{\bfv}_1,\dots,\widehat{\bfv}_{d_{\max}}\in\mathbb{R}^p$ denote the leading $d_{\max}$ eigenvectors of the sample covariance matrix computed from $\{\bfX_i:i\in I_{-k}\}$. Define $$\widehat{\bfR}=\widehat{\varphi}_{-k}(\bfX)=p^{-1}\bfW_{d_{\max}}^{\top}\bfX,$$ where $\bfW_{d_{\max}}=\sqrt{p}\LRm{\widehat{\bfv}_1,\dots,\widehat{\bfv}_{d_{\max}}}\in\mathbb{R}^{p\times d_{\max}}$. For brevity, we suppress the fold index $k$ on $\widehat{\bfR}$, since the analysis is identical across folds. Recall from Example \ref{eg:factor regression} and Proposition \ref{prop: d*=dF} that $\bfR^*=\bff$ and $d^*=d^*_{\rm F}$ under mild regularity conditions and squared loss. For $d\ge d^*_{\rm F}$, define $\bfH_d=p^{-1}\bfW_d^{\top}\bfB\in\mathbb{R}^{d\times d^*_{\rm F}}$, where $W_d$ is the first $d$ columns of $\bfW_{d_{\max}}$, and define $O=\bfH_d^{\dagger}\in\mathbb{R}^{d^*_{\rm F}\times d}$ as the pseudo-inverse of $\bfH_d$. The next proposition establishes a representation error bound under additional assumptions.

\begin{description}
\item[\normalfont(D1)](\textit{Boundedness}) Under the factor structure in Example \ref{eg:factor regression}, there exist universal constants $C_1$ and $C_2$ such that:
\begin{itemize}
\item[1.] The factor loading matrix satisfies $\max_{i\in[p],j\in[d_0]}\left|B_{i,j}\right|\leq C_1$.
\item[2.] Each factor $f_i$ is mean zero and bounded in $[-C_2,C_2]$ for all $i\in[d_0]$.
\item[3.] Each idiosyncratic component $u_j$ is mean zero and bounded in $[-C_2,C_2]$ for all $j\in[p]$.
\end{itemize}
\item[\normalfont(D2)](\textit{Weak dependence}) $\sum_{j,j'\in[p],j\neq j'}\left|\E\LRm{u_ju_{j'}}\right|=O(p)$.
\item[\normalfont(D3)](\textit{Bounded projection matrix})
$\max_{i\in[p],j\in[d_{\max}]}\left|(\bfW_{d_{\max}})_{i,j}\right|\leq C_3$.
\item[\normalfont(D4)](\textit{Pervasiveness}) All eigenvalues of $p^{-1}\bfB^{\top}\bf\in\mathbb{R}^{d^*_{\rm F}\times d^*_{\rm F}}$ are bounded away from both $0$ and $\infty$ as $p\to\infty$.
\end{description}

These conditions are standard in the factor-model literature \citep{fan2013large,fan2023factor} and are compatible with (C1)--(C3).

\begin{prop}\label{prop:null_factor}
Let $1\le d^*_{\rm F}\leq d\leq d_{\max}$. Suppose (D1)--(D4) hold and $(\log p)/n=o(1)$ as $n\to\infty$. Then, there exists $O=\bfH_d^{\dagger}\in\mathbb{R}^{d^*_{\rm F}\times d} $ such that \begin{equation*}
\|O\widehat{\bfR}_{\LRm{d}}-\bff\|_{L_2(\P)}=O_{\P }\LRs{p^{-1/2}}.
\end{equation*}
\end{prop}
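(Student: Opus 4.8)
The plan is to exploit the explicit linear structure of the principal-component representation and reduce everything to controlling a pure noise term through a well-conditioned matrix. For a fresh draw $\bfX=\bfB\bff+\bfu$, since $\bfH_d=p^{-1}\bfW_d^{\top}\bfB$ by definition, I would write
\[
\widehat{\bfR}_{[d]}=p^{-1}\bfW_d^{\top}\bfX=\bfH_d\bff+p^{-1}\bfW_d^{\top}\bfu .
\]
Provided $\bfH_d$ has full column rank $d^*_{\rm F}$, the pseudo-inverse satisfies $\bfH_d^{\dagger}\bfH_d=I_{d^*_{\rm F}}$, so with $O=\bfH_d^{\dagger}$ the representation error collapses to $O\widehat{\bfR}_{[d]}-\bff=\bfH_d^{\dagger}\,p^{-1}\bfW_d^{\top}\bfu$, and hence (conditional on the training fold) $\|O\widehat{\bfR}_{[d]}-\bff\|_{L_2(\P)}\le\|\bfH_d^{\dagger}\|_{\mathrm{op}}\,\|p^{-1}\bfW_d^{\top}\bfu\|_{L_2(\P)}$. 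This reduces the proposition to two claims: (i) $\|\bfH_d^{\dagger}\|_{\mathrm{op}}=O_{\P}(1)$ and (ii) $\|p^{-1}\bfW_d^{\top}\bfu\|_{L_2(\P)}=O_{\P}(p^{-1/2})$.

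Claim (ii) is the routine part, handled by a direct second-moment computation. Writing the $b$th coordinate as $p^{-1}\sum_{l}(\bfW_d)_{l,b}u_l$ and using the boundedness $|(\bfW_d)_{l,b}|\le C_3$ from (D3) together with $\sum_{l,l'}|\E[u_lu_{l'}]|=O(p)$ (the diagonal being $O(p)$ by (D1) and the off-diagonal being $O(p)$ by (D2)), each coordinate has conditional second moment at most $C_3^2 p^{-2}\sum_{l,l'}|\E[u_lu_{l'}]|=O(p^{-1})$. Summing over the fixed number $d$ of coordinates gives $\E_{\P}[\|p^{-1}\bfW_d^{\top}\bfu\|_2^2\mid\widehat{\varphi}_{-k}]=O(p^{-1})$, which is claim (ii).

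Claim (i) is the crux. Using $\bfW_d\bfW_d^{\top}=p\widehat{P}_d$, where $\widehat{P}_d$ projects onto the top-$d$ eigenspace of the sample covariance, I would rewrite $\bfH_d^{\top}\bfH_d=p^{-1}\bfB^{\top}\widehat{P}_d\bfB$ and lower-bound its smallest eigenvalue. For $d\ge d^*_{\rm F}$ the top-$d$ eigenspace contains the top-$d^*_{\rm F}$ eigenspace, so monotonicity of projections gives $\widehat{P}_d\succeq\widehat{P}_{d^*_{\rm F}}$; this neatly sidesteps the fact that for $d>d^*_{\rm F}$ the top-$d$ sample eigenspace has no clean population limit, reducing matters to $p^{-1}\bfB^{\top}\widehat{P}_{d^*_{\rm F}}\bfB$. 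From $\bfB^{\top}\widehat{P}_{d^*_{\rm F}}\bfB=\bfB^{\top}\bfB-\bfB^{\top}(I-\widehat{P}_{d^*_{\rm F}})\bfB$ and $0\preceq\bfB^{\top}(I-\widehat{P}_{d^*_{\rm F}})\bfB\preceq\|(I-\widehat{P}_{d^*_{\rm F}})\bfB\|_{\mathrm{op}}^2 I$, I obtain
\[
p^{-1}\bfB^{\top}\widehat{P}_{d^*_{\rm F}}\bfB\succeq\Big(\lambda_{\min}(p^{-1}\bfB^{\top}\bfB)-p^{-1}\|(I-\widehat{P}_{d^*_{\rm F}})\bfB\|_{\mathrm{op}}^2\Big)I .
\]
Pervasiveness (D4) bounds $\lambda_{\min}(p^{-1}\bfB^{\top}\bfB)$ away from zero, so everything reduces to showing $p^{-1}\|(I-\widehat{P}_{d^*_{\rm F}})\bfB\|_{\mathrm{op}}^2=o_{\P}(1)$, i.e. the top-$d^*_{\rm F}$ sample eigenspace nearly contains $\mathrm{col}(\bfB)$.

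The main obstacle, and the step I expect to require genuine care, is this eigenspace-recovery bound, since the idiosyncratic covariance $\Sigma_u=\E[\bfu\bfu^{\top}]$ must be shown to be dominated by the spiked signal. Here I would note that (D1)--(D2) already force $\|\Sigma_u\|_{\mathrm{op}}\le\|\Sigma_u\|_{F}=O(\sqrt{p})$, because the entries are bounded (Cauchy--Schwarz) and their absolute values sum to $O(p)$, whereas pervasiveness makes the leading $d^*_{\rm F}$ eigenvalues of the population signal $\bfB\,\mathrm{Cov}(\bff)\,\bfB^{\top}$ of order $p$; the eigengap thus dwarfs the $O(\sqrt{p})$ perturbation. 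Combining this population eigengap with entrywise concentration of the sample covariance (valid under boundedness and $(\log p)/n=o(1)$) and the Davis--Kahan theorem \citep{davis1970rotation,yu2015useful} yields $\|(I-\widehat{P}_{d^*_{\rm F}})P_{\bfB}\|_{\mathrm{op}}=o_{\P}(1)$, where $P_{\bfB}$ projects onto $\mathrm{col}(\bfB)$, whence $p^{-1}\|(I-\widehat{P}_{d^*_{\rm F}})\bfB\|_{\mathrm{op}}^2\le\|(I-\widehat{P}_{d^*_{\rm F}})P_{\bfB}\|_{\mathrm{op}}^2\,\lambda_{\max}(p^{-1}\bfB^{\top}\bfB)=o_{\P}(1)$. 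This delivers $\lambda_{\min}(\bfH_d^{\top}\bfH_d)\ge c>0$ with probability tending to one, hence full column rank and $\|\bfH_d^{\dagger}\|_{\mathrm{op}}=\sigma_{\min}(\bfH_d)^{-1}=O_{\P}(1)$. Multiplying this $O_{\P}(1)$ bound by the $O(p^{-1/2})$ noise bound from claim (ii) then closes the proof.
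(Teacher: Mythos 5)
Your proof is correct, and in substance it is the same argument the paper relies on; the difference is that the paper's ``proof'' is a one-line citation of Lemma 2 and Proposition 1 of \cite{fan2023factor}, while you reconstruct that content from scratch. Judging from how the paper uses those cited results elsewhere in the Supplementary Material (Lemma 2 supplies $\|\bfW_d^{\top}\bfu\|^2_{L_2(\P)}=O(p)$, and Proposition 1 supplies the lower bound on $\nu_{\min}(\bfH_d)$ invoked in Lemma~\ref{lemma:lemma4}), they correspond exactly to your claims (ii) and (i), and your skeleton---exact cancellation $O\widehat{\bfR}_{[d]}-\bff=\bfH_d^{\dagger}p^{-1}\bfW_d^{\top}\bfu$ via $\bfH_d^{\dagger}\bfH_d=I_{d^*_{\rm F}}$, a conditional second-moment bound for the noise term under (D1)--(D3), and an $O_{\P}(1)$ bound on $\|\bfH_d^{\dagger}\|_{\mathrm{op}}$---is precisely the decomposition underlying the cited results. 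What your route adds is a transparent treatment of the conditioning of $\bfH_d$: the monotonicity $\widehat{P}_d\succeq\widehat{P}_{d^*_{\rm F}}$ is a clean way to reduce the non-identifiable top-$d$ sample eigenspace (when $d>d^*_{\rm F}$) to the identifiable top-$d^*_{\rm F}$ one, after which Davis--Kahan with the order-$p$ spike and the $O(\sqrt{p})+O_{\P}(p\sqrt{(\log p)/n})$ perturbation closes the argument; this makes visible exactly where (D4) and $(\log p)/n=o(1)$ enter. The paper's route buys brevity and defers the regularity bookkeeping to the reference. One caveat for your version to stand alone: treating $\bfB\cov(\bff)\bfB^{\top}$ as the order-$p$ spike tacitly requires $\cov(\bff)$ to be well conditioned (e.g., the usual normalization $\cov(\bff)=I$) and $\bff$ uncorrelated with $\bfu$, so that the population covariance decomposes as spike plus $\Sigma_u$; these normalizations are left implicit in (D1)--(D4) as well (they are made precise in the cited reference), so this is not a gap relative to the paper, but it should be stated explicitly in a self-contained proof.
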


\begin{proof}
This follows directly from Lemma 2 and Proposition 1 of \cite{fan2023factor}.
\end{proof}

By Proposition \ref{prop:null_factor}, the requirement $\|O \widehat{\bfR}_{[d]}-\bff\|_{L_2(\P)} =o_{\P}(n^{-1/4})$ holds with $O=\bfH_d^{\dagger}$ provided that $\sqrt{n}/p=o(1)$ and $(\log p)/n=o(1)$.

\subsubsection{Sufficient dimension reduction}\label{sec:ex2null}

We show that commonly used sufficient dimension reduction (SDR) methods can satisfy the representation requirement $\|O \widehat{\bfR}-\bfR^*\|=o_{\P}(n^{-1/4})$ for some data-dependent rotation matrix $O$. Assume that $\bfB^{\top}\bfX$ contains all the predictive information in $\bfX$ for the response $Y\in\mathbb{R}$ under the user-chosen predictiveness measure $\V$. Depending on the loss function $\ell$, the span of $\bfB$ may represent the central subspace, the central mean subspace, or another task-specific sufficient subspace, and $d^*$ denotes its structural dimension; see Section \ref{subsec:d*_d0}. For each fold $k$, apply a user-specified SDR method to $\{(\bfX_i,Y_i):i\in I_{-k}\}$ to estimate the target subspace, yielding a basis matrix $\widehat{\bfB}_{d_{\max}}\in\mathbb{R}^{p\times d_{\max}}$ for the estimated $d_{\max}$-dimension subspace. Define $\widehat{\bfR}=\widehat{\bfB}_{d_{\max}}^{\top}\bfX\in\mathbb{R}^{d_{\max}}$, and suppress the fold index for brevity. Recall from Example \ref{eg:SDR} that $\bfR^*=\bfB^{\top}\bfX$. For $d\geq d^*$, define $O=\bfB^{\top}\widehat{\bfB}_d\in\mathbb{R}^{d^*\times d}$, where $\widehat{\bfB}_d$ is the first $d$ columns of $\widehat{B}_{d_{\max}}$. The next proposition characterizes the representation error. 

\begin{prop}\label{prop:null_SDR}
Let $1\le d^*\leq d\leq d_{\max}$. With $O=\bfB^{\top}\widehat{\bfB}_d\in\mathbb{R}^{d^*\times d}$, we have
\begin{equation}\label{eq:SDR_basic}
\|O\widehat{\bfR}_{\LRm{d}}-\bfB^{\top}\bfX\|\leq \|P_{\bfB}-P_{\widehat{\bfB}_{d^*}}\|_2 \|\bfX\|,
\end{equation}
where $P_A$ denotes the orthogonal projection onto the column space of $A$.
\end{prop}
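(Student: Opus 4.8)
The plan is to reduce \eqref{eq:SDR_basic} to a short chain of operator-norm manipulations that exploit the nesting of the estimated subspaces. Throughout I treat the training fold (and hence $\widehat{\bfB}_{d_{\max}}$) as fixed and adopt the standard SDR convention that $\widehat{\bfB}_d$ has orthonormal columns and $\bfB$ has full column rank with $\|\bfB^{\top}\|_2=1$ (orthonormal basis of the target subspace). I would first establish the inequality pointwise in $\bfX$ with the Euclidean norm, i.e.\ $\|O\widehat{\bfR}_{[d]}-\bfB^{\top}\bfX\|_2\le\|P_{\bfB}-P_{\widehat{\bfB}_{d^*}}\|_2\,\|\bfX\|_2$ for each realization; since $\|P_{\bfB}-P_{\widehat{\bfB}_{d^*}}\|_2$ is a constant given the training data, the stated bound in any of the functional norms $\|\cdot\|$ considered (e.g.\ $L_2(\P)$ or $L_\infty(\P)$) then follows by monotonicity.

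The first computational step is to express both terms through projections. Since $\widehat{\bfR}=\widehat{\bfB}_{d_{\max}}^{\top}\bfX$, its first $d$ coordinates are $\widehat{\bfR}_{[d]}=\widehat{\bfB}_d^{\top}\bfX$, so $O\widehat{\bfR}_{[d]}=\bfB^{\top}\widehat{\bfB}_d\widehat{\bfB}_d^{\top}\bfX=\bfB^{\top}P_{\widehat{\bfB}_d}\bfX$, using $\widehat{\bfB}_d\widehat{\bfB}_d^{\top}=P_{\widehat{\bfB}_d}$. Applying the identity $\bfB^{\top}=\bfB^{\top}P_{\bfB}$ (valid whenever $\bfB^{\top}\bfB$ is invertible, since $\bfB^{\top}P_{\bfB}=\bfB^{\top}\bfB(\bfB^{\top}\bfB)^{-1}\bfB^{\top}=\bfB^{\top}$) rewrites the error as $O\widehat{\bfR}_{[d]}-\bfB^{\top}\bfX=-\bfB^{\top}P_{\bfB}(I-P_{\widehat{\bfB}_d})\bfX$. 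Submultiplicativity together with $\|\bfB^{\top}\|_2=1$ then gives the preliminary bound $\|O\widehat{\bfR}_{[d]}-\bfB^{\top}\bfX\|_2\le\|P_{\bfB}(I-P_{\widehat{\bfB}_d})\|_2\,\|\bfX\|_2$.

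The crux — and the reason the final constant involves $\widehat{\bfB}_{d^*}$ rather than $\widehat{\bfB}_d$ — is the reduction from $d$ down to $d^*$. Because the first $d^*$ columns of $\widehat{\bfB}$ are among its first $d$ columns, $\operatorname{span}(\widehat{\bfB}_{d^*})\subseteq\operatorname{span}(\widehat{\bfB}_d)$, whence $\operatorname{range}(I-P_{\widehat{\bfB}_d})\subseteq\operatorname{range}(I-P_{\widehat{\bfB}_{d^*}})$ and therefore $(I-P_{\widehat{\bfB}_{d^*}})(I-P_{\widehat{\bfB}_d})=I-P_{\widehat{\bfB}_d}$. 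Inserting this factorization and using $\|I-P_{\widehat{\bfB}_d}\|_2\le1$ yields $\|P_{\bfB}(I-P_{\widehat{\bfB}_d})\|_2\le\|P_{\bfB}(I-P_{\widehat{\bfB}_{d^*}})\|_2$. Finally, idempotence of $P_{\bfB}$ gives $P_{\bfB}(I-P_{\widehat{\bfB}_{d^*}})=P_{\bfB}(P_{\bfB}-P_{\widehat{\bfB}_{d^*}})$, so $\|P_{\bfB}(I-P_{\widehat{\bfB}_{d^*}})\|_2\le\|P_{\bfB}-P_{\widehat{\bfB}_{d^*}}\|_2$; chaining the three inequalities delivers \eqref{eq:SDR_basic}.

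I expect the only genuinely non-routine point to be the nesting argument of the third paragraph: one must observe that enlarging the estimated subspace from dimension $d^*$ to $d$ can only shrink the residual projection $I-P$, so the extra estimated directions never inflate the representation error, which is precisely what permits the bound to be phrased in terms of the $d^*$-dimensional subspace error $\|P_{\bfB}-P_{\widehat{\bfB}_{d^*}}\|_2$ — the quantity subsequently controlled by the Davis--Kahan theorem. The remaining steps are standard submultiplicativity and idempotence facts, and the only structural assumptions required are orthonormal columns for $\widehat{\bfB}$ and full column rank (with unit spectral norm) for $\bfB$.
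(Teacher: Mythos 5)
Your proof is correct and takes essentially the same route as the paper's: write $O\widehat{\bfR}_{[d]}-\bfB^{\top}\bfX=-\bfB^{\top}(I-P_{\widehat{\bfB}_d})\bfX$, exploit the nesting $\mathrm{span}(\widehat{\bfB}_{d^*})\subseteq\mathrm{span}(\widehat{\bfB}_d)$ to replace $P_{\widehat{\bfB}_d}$ by $P_{\widehat{\bfB}_{d^*}}$, and conclude with $\|P_{\bfB}-P_{\widehat{\bfB}_{d^*}}\|_2\|\bfX\|$. The only (harmless) difference is the last step, where you use idempotence and submultiplicativity to obtain the inequality $\|P_{\bfB}(I-P_{\widehat{\bfB}_{d^*}})\|_2\le\|P_{\bfB}-P_{\widehat{\bfB}_{d^*}}\|_2$, while the paper invokes the equidimensional-subspace identity (Theorem 2.5.1 of Golub and Van Loan) to get equality; both suffice, and both versions rely, as you state explicitly, on $\bfB$ and $\widehat{\bfB}_d$ having orthonormal columns.
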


\begin{proof}
Substituting $O=\bfB^{\top}\widehat{\bfB}_d$ and $\widehat{\bfR}_{[d]}=\widehat{\bfB}_d^{\top}\bfX$ into the left-hand side of \eqref{eq:SDR_basic} yields 
\begin{align*}
\|O\widehat{\bfR}_{\LRm{d}}-\bfB^{\top}\bfX\|
=&\|\bfB^{\top}\widehat{\bfB}_d\widehat{\bfB}_d^{\top}\bfX-\bfB^{\top}\bfX\|\\
=&\|\bfB^{\top}(I-P_{\widehat{\bfB}_d})\bfX \|\\
\leq & \|\bfB^{\top}(I-P_{\widehat{\bfB}_d})\|_2\|\bfX \|\\
\leq &  \|\bfB^{\top}(I-P_{\widehat{\bfB}_{d^*}})\|_2\|\bfX \|\\
=& \|P_{\bfB}-P_{\widehat{\bfB}_{d^*}}\|_2\|\bfX \|.
\end{align*}
The second equality uses $P_{\widehat{\bfB}_d}=\widehat{\bfB}_d\widehat{\bfB}_d^{\top}$ since $\widehat{\bfB}_d$ has orthonormal columns. The second inequality holds because $\mathrm{span}({\widehat{\bfB}_{d^*}})\subseteq\mathrm{span}({\widehat{\bfB}_{d}}$) when $d\ge d^*$, where $\mathrm{span}(A)$ denotes the column space of $A$. The final equality follows from the identity $\|A^{\top}Z\|_2=\|P_{A}Z\|_2$ for any $A$ with orthonormal columns and any matrix $Z$, together with Theorem 2.5.1 of \cite{golub2013matrix}.  
\end{proof}

Proposition \ref{prop:null_SDR} shows that, when $\|\bfX\|$ is bounded, the representation error $\|O\widehat{\bfR}_{\LRm{d}}-\bfB^{\top}\bfX\| $ is controlled by either $\|P_{\bfB}-P_{\widehat{\bfB}_{d^*}}\|_2$ or $\|P_{\bfB}-P_{\widehat{\bfB}_{d^*}}\|_F$, which are topologically equivalent \citep{yin2011sufficient}. Below, we summarize known convergence rates for several common SDR methods. These rates indicate that the requirement $\|O \widehat{\bfR}-\bfR^*\|=o_{\P}(n^{-1/4})$ is mild and is typically satisfied by most SDR methods.

\textbf{Forward regression.} 
This class consists of two canonical approaches---the outer product of gradients and the minimum average variance estimator---together with many extensions. See, for example, \cite{xia2002adaptive,hristache2001structure,xia2007constructive,wang2008sliced,yin2011sufficient}. Under the regularity conditions in these works, it is shown that $\widehat{\bfB}_{d^*}$ is consistent with
\begin{equation*}
\|P_{\widehat{\bfB}_{d^*}}-P_{\bfB}\|_2=O_{\P }(h^4+(nh^{d^*})^{-1}\log n+n^{-1/2}),
\end{equation*}
where $h\propto n^{-1/(d^*+4)}$ \citep{xia2002adaptive,wang2008sliced,yin2011sufficient}. Equivalently, $\|P_{\widehat{\bfB}^{d^*}}-P_{\bfB}\|= O_{\P }(r(d^*,n))$, where $$r(d^*,n)=
\begin{cases}
n^{-1/2},&d^*\leq3,\\
(nh^{d^*})^{-1}\log n,&d^*>3.
\end{cases}$$
Thus, the recovery requirement holds whenever $r(d^*,n)=o(n^{-1/4})$. In typical applications with $d^*\leq3$, this condition is automatic, and forward-regression estimators $\widehat{\bfB}_d$ with $d\geq d^*$ achieve the desired representation error when $\|\bfX\|$ is bounded.

\textbf{Inverse regression.}
Representative examples include sliced inverse regression (SIR) \citep{li1991sliced}, sliced average variance estimate \citep{cook1991sliced}, and directional regression (DR) \citep{li2007directional}. These methods typically construct a matrix-valued statistic $\widehat{M}$ that converges to a population matrix $M$ satisfying $P_M=P_{\bfB_{d^*}}$.

When $p<n$ is fixed, prior work \citep{hsing1992asymptotic,zhu1995asymptotics,li2007directional} establishes that $n^{1/2}(\widehat{M}-M)$ converges in distribution to a multivariate normal distribution under mild conditions. Let $\widehat{\bfB}_{d}$ be the eigenvectors corresponding to the largest $d$ eigenvalues of $\widehat{M}$. If the $d^*$th eigenvalue of $M$ is bounded away from 0, then the identity $\|P_{\widehat{\bfB}_{d^*}}-P_{\bfB}\|_F=2^{1/2}\|\sin\Theta(\widehat{\bfB}_{d^*},\bfB)\|_F$, together with the Davis-Kahan sin $\theta$ theorem \citep{davis1970rotation,yu2015useful} and $\widehat{M}-M=O_{\P }(n^{-1/2})$, yields $$\|P_{\widehat{\bfB}_{d^*}}-P_{\bfB}\|_F=O_{\P }(n^{-1/2}).$$ Hence, $\|O \widehat{\bfR}-\bfR^*\|=o_{\P}(n^{-1/4})$ follows immediately.

When $p>n$ and diverges with $n$, additional structure such as sparsity is typically required. Under such conditions, \cite{lin2019sparse} show that the lasso-SIR estimator satisfies $$\|P_{\widehat{\bfB}_{d^*}}-P_{\bfB}\|_F=O_{\P }(\{(s\log p)/(n\lambda)\}^{1/2}),$$ where $s$ is the number of active variables, $\lambda$ is the $d^*$th largest eigenvalue of $\mathrm{var}_{\P}\{\E_{\P} (\bfX\mid Y)\}$, and $n\lambda=p^{\gamma}$ for some $\gamma>1/2$. The representation requirement holds whenever $(s\log p)/(n\lambda)\ll n^{-1/2}$.

\subsection{Support for Condition (C2'): Representation error bounds when \texorpdfstring{$d < d^*$}{d < d*}}\label{sec:recovery_alter}

\subsubsection{Factor regression}\label{sec:alter_ex1}

Our goal is to show that, under suitable conditions, the representation error satisfies $\|O\widehat{\bfR}_{[d]}-\bff_{[d]}\|_{L_2(\P)}=o_{\P}(n^{-1/4})$, for some data-dependent rotation matrix $O$. As discussed in Section \ref{sec:eg_factor}, we use PCA to estimate the latent factors $\bff$. We retain the notation of Section \ref{sec:eg_factor} and focus on the case $d<d^*_{\rm F}$. Since $\widehat{\bfR}=p^{-1}\bfW_{d_{\max}}^{\top}\bfX$, we have $\widehat{\bfR}_{[d]}=p^{-1}\bfW_d^{\top}\bfX$. Without loss of generality, assume that the columns of $\bfB=(\beta_1,\dots,\beta_{d_{\rm F}^*})$ are orthogonal and ordered so that $\|\beta_1\|_2\ge \cdots \geq \|\beta_{d^*_{\rm F}}\|_2$. Let $\bfB_{d}=(\beta_1,\dots,\beta_{d})$ and $\bfB_{-d}=(\beta_{d+1},\dots,\beta_{d^*_{\rm F}})$. For $1\leq d< d^*_{\rm F}$, define $\bfH_{d}=p^{-1}\bfW_d^{\top}\bfB_d\in\mathbb{R}^{d\times d}$ and define $O=\bfH_d^{-1}\in\mathbb{R}^{d\times d}$ as the inverse of $\bfH_d$. The next proposition establishes a representation error bound under (D1)--(D3) and (D4'), replacing (D4) by a strengthened form to address an identifiability issue.

\begin{description}
\item[\normalfont(D4')]  (\textit{Pervasiveness}) All eigenvalues of $p^{-1}\bfB^{\top}\bfB\in\mathbb{R}^{d^*_{\rm F}\times d^*_{\rm F}}$ are distinct and bounded away from both $0$ and $\infty$ as $p\to\infty$.
\end{description}

\begin{prop}\label{prop:dra_factor}
Let $1\leq d<d^*_{\rm F}$. Suppose (D1)--(D3) and (D4') hold, and $n^{-1}\log p=o(1)$ as $n\to\infty$. Then there exists $O=\bfH_d^{-1}\in\mathbb{R}^{d\times d}$ such that $$\|O \widehat{\bfR}_{[d]}-\bff_{[d]}\|_{L_2(\P)}=O_{\P }\LRs{\delta},$$ where $\delta=n^{-1/2}+(n/\log p)^{-1/2}+p^{-1/2}$.
\end{prop}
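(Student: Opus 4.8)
The plan is to reduce the representation error to two explicit terms through an exact algebraic identity, dispose of the easy ``reconstruction noise'' term by a concentration argument, and then concentrate the effort on the ``leakage'' term, where the distinct-eigenvalue condition (D4') does the essential work. First I would substitute $\widehat{\bfR}_{[d]}=p^{-1}\bfW_d^{\top}\bfX$ and $\bfX=\bfB_d\bff_{[d]}+\bfB_{-d}\bff_{-[d]}+\bfu$ and use the defining identity $\bfH_d=p^{-1}\bfW_d^{\top}\bfB_d$ together with $O=\bfH_d^{-1}$ to cancel the leading factor component, yielding the exact decomposition
\[
O\widehat{\bfR}_{[d]}-\bff_{[d]}=\underbrace{\bfH_d^{-1}\,p^{-1}\bfW_d^{\top}\bfB_{-d}\,\bff_{-[d]}}_{\text{leakage}}+\underbrace{\bfH_d^{-1}\,p^{-1}\bfW_d^{\top}\bfu}_{\text{noise}}.
\]
Conditionally on the training fold $I_{-k}$ (which fixes $\bfW_d$ and $\bfH_d$), the $L_2(\P)$ norm integrates only over the fresh draw $(\bff_{-[d]},\bfu)$, so it suffices to bound $\|\bfH_d^{-1}\|_2$, the noise coefficient, and the leakage coefficient $\|p^{-1}\bfW_d^{\top}\bfB_{-d}\|_2$, each $O_{\P}$ over the training data.

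The conditioning of $\bfH_d$ and the noise term are comparatively routine. Writing $\bfW_d=\sqrt{p}\,\widehat V_d$, pervasiveness (D4') forces $\|\beta_j\|_2^2=\Theta(p)$ and, via a Davis--Kahan comparison of $\widehat V_d$ with the population eigenvectors $V_d^*\approx[\beta_1/\|\beta_1\|_2,\dots,\beta_d/\|\beta_d\|_2]$, makes $\bfH_d=p^{-1/2}\widehat V_d^{\top}\bfB_d$ close to the well-separated diagonal $\mathrm{diag}(\|\beta_1\|_2/\sqrt p,\dots,\|\beta_d\|_2/\sqrt p)$, so that $\|\bfH_d^{-1}\|_2=O_{\P}(1)$. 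For the noise term, $\E_{\bfu}\|p^{-1}\bfW_d^{\top}\bfu\|^2=p^{-2}\,\mathrm{tr}(\bfW_d^{\top}\,\mathrm{var}(\bfu)\,\bfW_d)$ plus cross terms; the bounded loadings and idiosyncratic components (D1), the bounded projection entries (D3), and the weak dependence (D2) bound this by $O_{\P}(p^{-1})$, giving a noise contribution of order $p^{-1/2}$.

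The heart of the proof is the leakage coefficient, and this is where the regime $d<d^*_{\rm F}$ genuinely differs from the $d\ge d^*_{\rm F}$ case of Proposition~\ref{prop:null_factor}: recovering the first $d$ \emph{coordinates} of $\bff$, rather than the factor space up to rotation, requires separating the leading $d$ eigenvectors from the remaining factor directions. The key observation is that in the population $(V_d^*)^{\top}\bfB_{-d}=0$ by the assumed orthogonality of the loadings, so the leakage is driven entirely by eigenvector estimation error. I would write $\widehat V_d=V_d^*R+E$ for the optimal orthogonal alignment $R$ and bound $\|p^{-1}\bfW_d^{\top}\bfB_{-d}\|_2=\|p^{-1/2}\widehat V_d^{\top}\bfB_{-d}\|_2\le p^{-1/2}\|E\|_2\|\bfB_{-d}\|_2\asymp\|\sin\Theta(\widehat V_d,V_d^*)\|_2$, using $\|\bfB_{-d}\|_2=\Theta(\sqrt p)$. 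The eigengap between $\lambda_d$ and $\lambda_{d+1}$ is $\Theta(p)$ under the \emph{distinct}-eigenvalue form (D4'), so Davis--Kahan reduces everything to controlling $\|\widehat\Sigma-\Sigma\|_2$ relative to this $\Theta(p)$ gap.

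The decisive step is then to split $\widehat\Sigma-\Sigma=\bfB(\widehat\Sigma_f-I)\bfB^{\top}+(\bfB\widehat\Sigma_{fu}+\widehat\Sigma_{uf}\bfB^{\top})+(\widehat\Sigma_u-\Sigma_u)$ and bound each block after dividing by the $\Theta(p)$ gap. The factor-covariance fluctuation gives $\|\bfB\|_2^2\|\widehat\Sigma_f-I\|_2=\Theta(p)\cdot O_{\P}(n^{-1/2})$, contributing the $n^{-1/2}$ term; the cross block gives $\|\bfB\|_2\|\widehat\Sigma_{fu}\|_F=\Theta(\sqrt p)\cdot O_{\P}(\sqrt{p\log p/n})$ via an entrywise Bernstein bound with a union over the $p$ coordinates (using boundedness (D1)), contributing the $(n/\log p)^{-1/2}$ term; and the idiosyncratic block $\widehat\Sigma_u-\Sigma_u$ is kept at a dominated order under the weak-dependence assumption (D2). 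Combined with the $p^{-1/2}$ noise contribution, this yields $\delta=n^{-1/2}+(n/\log p)^{-1/2}+p^{-1/2}$. The main obstacle I anticipate is precisely this operator-norm control of $\widehat\Sigma-\Sigma$ against the eigengap: the factor and cross blocks must be shown to concentrate at the stated rates despite the growing dimension $p$, and the idiosyncratic block must be prevented from injecting a spurious large term, which is exactly the role of (D2) and (D3). Much of this machinery can be imported from the eigenvector-perturbation arguments underlying Lemma~2 and Proposition~1 of \cite{fan2023factor}, adapted to the distinct-eigenvalue separation required here.
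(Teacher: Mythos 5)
Your proposal follows essentially the same route as the paper's proof: the identical exact decomposition of $O\widehat{\bfR}_{[d]}-\bff_{[d]}$ into a leakage term and a noise term, $\|\bfH_d^{-1}\|_2=O_{\P}(1)$ from the eigenvalue separation, the noise term at $O_{\P}(p^{-1/2})$, and the leakage coefficient controlled by exploiting the population orthogonality of the loadings together with an optimal orthogonal alignment and a Davis--Kahan bound against the $\Theta(p)$ eigengap guaranteed by (D4'). The only difference is bookkeeping: the paper applies Davis--Kahan to $\widehat{\Sigma}$ versus $\bfB\bfB^{\top}$ and cites $\|\widehat{\Sigma}-\bfB\bfB^{\top}\|_F=O_{\P}(p\delta)$ as a standard fact, whereas you expand that concentration step into explicit blocks of $\widehat{\Sigma}-\Sigma$; if you take that route, note that the matrix $\Sigma_u$ itself (Frobenius norm $O(p^{1/2})$ under (D2)) must also be absorbed into the perturbation or into the comparison between the eigenvectors of $\Sigma$ and the normalized loadings, and it contributes exactly the $p^{-1/2}$ part of $\delta$.
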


We first state a technical lemma ensuring that $\nu_{\min}\LRs{\bfH_d}$ is bounded away from 0 in probability. 

\begin{lemma}\label{lemma:lemma4}
Suppose (D1)--(D3) and (D4') hold, and $n^{-1}\log p=o(1)$ as $n\to\infty$. Then for any $\varepsilon>0$ there exists an integer $N$ and a universal constant $C>0$, independent of $n,p,d,d_{\max}$, such that $$\Pr \LRs{\nu_{\min}\LRs{\bfH_d}>C}\geq1-\varepsilon, \quad \text{for all } n \geq N.$$
\end{lemma}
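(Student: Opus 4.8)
The plan is to show that $\bfH_d=p^{-1}\bfW_d^{\top}\bfB_d=p^{-1/2}\widehat V_d^{\top}\bfB_d$ (where $\widehat V_d=[\widehat{\bfv}_1,\dots,\widehat{\bfv}_d]$) is, with probability tending to one, a small perturbation of a diagonal matrix whose entries are uniformly bounded below, and then to invoke Weyl's inequality for singular values. Write the population covariance of $\bfX$ as $\Sigma=\bfB\Sigma_f\bfB^{\top}+\Sigma_u$; with the factors standardized ($\Sigma_f=I$) and the columns of $\bfB$ orthogonal as assumed before the lemma, the signal part $\bfB\bfB^{\top}=\sum_l\|\beta_l\|_2^2\,e_le_l^{\top}$ has eigenpairs $(\|\beta_l\|_2^2,e_l)$ with $e_l=\beta_l/\|\beta_l\|_2$. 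Set $g_l=p^{-1}\|\beta_l\|_2^2$; under (D4') the $g_l$ are distinct and lie in $[c,C]$ for constants $0<c\le C<\infty$ with eigen-gaps bounded below, so the leading $d^*_{\rm F}$ eigenvalues of $\bfB\bfB^{\top}$ are of order $p$ and separated by order $p$. Since singular values are invariant under sign flips of the columns of $\widehat V_d$ (each flip left-multiplies $\bfH_d$ by a diagonal sign matrix), I may align the signs of $\widehat{\bfv}_l$ freely throughout.

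For the population step (deterministic), I would first align the leading population eigenvectors $v_l$ of $\Sigma$ with $e_l$. Under (D1)--(D2) the idiosyncratic block is spectrally negligible against the signal, $\|\Sigma_u\|_2=O(p^{1/2})=o(p)$, so the Davis--Kahan theorem \citep{davis1970rotation,yu2015useful} gives, after sign alignment, $\|v_l-e_l\|_2=O(p^{-1/2})=o(1)$ for each $l\le d^*_{\rm F}$. Writing $H^0_d=p^{-1/2}V_d^{\top}\bfB_d$ with $V_d=[v_1,\dots,v_d]$, the entry $(H^0_d)_{jl}=p^{-1/2}v_j^{\top}\beta_l=p^{-1/2}e_j^{\top}\beta_l+p^{-1/2}(v_j-e_j)^{\top}\beta_l=g_j^{1/2}\,1\{j=l\}+o(1)$, using $e_j^{\top}\beta_l=\|\beta_j\|_2\,1\{j=l\}$ and $\|\beta_l\|_2=O(p^{1/2})$. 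Hence $\|H^0_d-\mathrm{diag}(g_1^{1/2},\dots,g_d^{1/2})\|_2=o(1)$, and since the $g_l$ decrease, $\nu_{\min}(H^0_d)\ge g_d^{1/2}-o(1)\ge g_{d^*_{\rm F}}^{1/2}-o(1)\ge \tfrac12 c^{1/2}$ for all large $p$.

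For the sample step (probabilistic), I would transfer this to $\widehat V_d$. Under (D1), (D3) and $n^{-1}\log p=o(1)$, concentration of the factor, cross, and idiosyncratic blocks of the sample covariance, combined with the spiked structure and the order-$p$ eigen-gaps guaranteed by (D4'), yields individual eigenvector consistency: after sign alignment, $\|\widehat V_d-V_d\|_2=O_{\P}(\delta)$ with $\delta=n^{-1/2}+(n/\log p)^{-1/2}+p^{-1/2}=o_{\P}(1)$. This is the same perturbation bound underlying Proposition~\ref{prop:dra_factor} and follows from the machinery of \cite{fan2023factor} (their Lemma 2 and Proposition 1) combined with the distinct-eigenvalue refinement (D4') and Davis--Kahan. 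Because $\|\bfB_d\|_2=\max_{l\le d}\|\beta_l\|_2=O(p^{1/2})$, it follows that $\|\bfH_d-H^0_d\|_2=p^{-1/2}\|(\widehat V_d-V_d)^{\top}\bfB_d\|_2\le p^{-1/2}\|\widehat V_d-V_d\|_2\,\|\bfB_d\|_2=O_{\P}(\delta)=o_{\P}(1)$.

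Combining, Weyl's inequality for singular values gives $\nu_{\min}(\bfH_d)\ge\nu_{\min}(H^0_d)-\|\bfH_d-H^0_d\|_2\ge \tfrac12 c^{1/2}-o_{\P}(1)$, uniformly over $d<d^*_{\rm F}$ since $g_d\ge g_{d^*_{\rm F}}\ge c$ by the decreasing ordering. Taking $C=\tfrac14 c^{1/2}$, which depends only on the population loadings and not on $n,p,d,d_{\max}$, we obtain $\Pr(\nu_{\min}(\bfH_d)>C)\to1$, which is the claim. I expect the main obstacle to be the sample step: because $p$ may exceed $n$, the operator-norm deviation $\|\widehat\Sigma-\Sigma\|_2$ is not small, so a naive perturbation bound fails, and one must exploit the spiked structure and, crucially, the order-$p$ separation of the leading eigenvalues from (D4') to pin down the individual eigenvectors rather than merely their span. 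The remaining ingredients---the population alignment, the bound $\|\Sigma_u\|_2=o(p)$, and the entrywise concentration inputs---are routine under (D1)--(D3).
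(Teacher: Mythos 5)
Your proof is correct and is, in substance, the argument the paper itself invokes: the paper's own proof is a one-line deferral to ``a straightforward extension of Proposition 1 of \cite{fan2023factor}'', and your population-alignment step, the Davis--Kahan bound exploiting the order-$p$ eigen-gaps from (D4'), the $O_{\P}(\delta)$ sample eigenvector consistency, and the final Weyl perturbation are precisely the ingredients that extension consists of (they also mirror the paper's adjacent proof of Proposition~\ref{prop:dra_factor}). In effect you have supplied the details the paper leaves to the citation, so the two proofs coincide in approach; no gaps.
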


\begin{proof}[Proof of Lemma \ref{lemma:lemma4}]
Under (D1)--(D3) and (D4'), the claim follows by a straightforward extension of the argument used in Proposition 1 of \cite{fan2023factor}.
\end{proof}

\begin{proof}[Proof of Proposition \ref{prop:dra_factor}]
Write
\begin{align*}
\widehat{\bfR}_{\LRm{d}}&=p^{-1}\bfW_d^{\top}\bfX\\
&=p^{-1}\bfW_d^{\top}\LRs{\bfB\bff+\bfu}\\
&=p^{-1}\bfW_d^{\top}\bfB_{d}\bff_{\LRm{d}}+\frac{1}{p}\bfW_d^{\top}\bfB_{-d}\bff_{-\LRm{d}}+p^{-1}\bfW_d^{\top}\bfu\\
&=\bfH_d\bff_{\LRm{d}}+p^{-1}\bfW_d^{\top}\bfB_{-d}\bff_{-\LRm{d}}+p^{-1}\bfW_d^{\top}\bfu.
\end{align*}
Therefore,
\begin{equation}\label{eq:first_term}
\bfH_d^{-1}\widehat{\bfR}_{\LRm{d}}-\bff_{\LRm{d}}=p^{-1}\bfH_{d}^{-1}\bfW_d^{\top}\bfB_{-d}\bff_{-\LRm{d}}+p^{-1}\bfH_d^{-1}\bfW_d^{\top}\bfu.
\end{equation}
Let us begin with the first term on the right-hand side of \eqref{eq:first_term}. We have $$\|p^{-1}\bfH_{d}^{-1}\bfW_d^{\top}\bfB_{-d}\bff_{-\LRm{d}}\|_2\leq p^{-1}\|\bfH_{d}^{-1}\|_2\|\bfW_d^{\top}\bfB_{-d}\|_2\|\bff_{-\LRm{d}}\|_2.$$ Note that $\|\bfH_d^{-1}\|_2=\nu^{-1}_{\min}(\bfH_d)=O_{\P }(1)$ by Lemma \ref{lemma:lemma4}. Moreover, (D1) implies $\E(\|\bff_{-\LRm{d}}\|_2)\leq C$ for some universal constant $C$. It remains to control $\|\bfW_d^{\top}\bfB_{-d}\|_2$, Let $\bfB^{(u)}=\bfB\cdot\text{diag}\LRs{\|\bfbeta_1\|^{-1},\dots,\|\bfbeta_{d_0}\|^{-1}}$ so that $\bfB^{(u)}$ has orthonormal columns, and define $\bfB_{d}^{(u)}$ and $\bfB_{-d}^{(u)}$ analogously. Let $\widehat{\bfW}_d=\bfW_d/\sqrt{p}=\LRm{\widehat{\bfv}_1,\dots,\widehat{\bfv}_{d}}$. Then $\|\bfW_d^{\top}\bfB_{-d}\|_2\leq\sqrt{p}\|\widehat{\bfW}_d^{\top}\bfB_{-d}^{(u)}\|_2\|\bfbeta_{d+1}\|_2$. By singular value decomposition, there exist orthogonal matrices $O_1,O_2\in\mathbb{R}^{d\times d}$ such that $$O_1^{\top}\widehat{\bfW}_d^{\top}\bfB^{(u)}_{d}O_2=\cos\Theta\big(\widehat{\bfW}_d,\bfB^{(u)}_{d}\big).$$ Hence, $\|\widehat{\bfW}_d^{\top}\bfB_{-d}^{(u)}\|_2=\|(\widehat{\bfW}_d-\bfB^{(u)}_{d}O_2O_1^{\top})^{\top}\bfB^{(u)}_{-d}\|_2\leq \|\widehat{\bfW}_d-\bfB^{(u)}_{d}O_2O_1^{\top}\|_2\|\bfB^{(u)}_{-d}\|_2$ since $\bfB$ has orthogonal columns. Moreover, $\|\bfB^{(u)}_{-d}\|_2=1$, and
\begin{align*}
\|\widehat{\bfW}_d-\bfB^{(u)}_{d}O_2O_1^{\top}\|_2^2&\leq \|\widehat{\bfW}_d-\bfB^{(u)}_{d}O_2O_1^{\top}\|_F^2\\
&=\mathrm{tr}\{(\widehat{\bfW}_d-\bfB^{(u)}_{d}O_2O_1^{\top})^{\top}(\widehat{\bfW}_d-\bfB^{(u)}_{d}O_2O_1^{\top})\}\\
&=2\cdot\mathrm{tr}\{\mathbf{I}_d-O_1^{\top}\widehat{\bfW}_d^{\top}\bfB^{(u)}_{d}O_2\}\\
&\leq2\|\sin\Theta\big(\widehat{\bfW}_d,\bfB^{(u)}_{d}\big)\|_F^2.
\end{align*}
Recall that $\widehat{\bfW}_d$ and $\bfB^{(u)}_{d}$ are the top-$d$ eigenvectors of $\widehat{\Sigma}$ and $\bfB\bfB^{\top}$, respectively. The Davis-Kahan theorem \citep{davis1970rotation,yu2015useful} yields that $$\|\sin\Theta\big(\widehat{\bfW}_d,\bfB_{d}^{\LRs{u}}\big)\|_F\leq\dfrac{\|\widehat{\Sigma}-\bfB\bfB^{\top}\|_{F}}{\|\bfbeta_d\|_2^2-\|\bfbeta_{d+1}\|_2^2}.$$ Under (D4'), $\|\bfbeta_d\|_2^2-\|\bfbeta_{d+1}\|_2^2\geq C\cdot p$ for some constant $C>0$. Under (D1)--(D3), standard arguments \citep[see, e.g.,][]{fan2013large,fan2023factor} yields 
\begin{equation*}
\|\widehat{\Sigma}-\bfB\bfB^{\top}\|_F=O_{\P }\LRs{p\delta},
\end{equation*}
where $\delta=n^{-1/2}+(n/\log p)^{-1/2}+p^{-1/2}$. Therefore,
\begin{equation*}  \|\sin\Theta\big(\widehat{\bfW}_d,\bfB_{d}^{\LRs{u}}\big)\|_F=O_{\P }\LRs{\delta},
\end{equation*}
which implies $\|\bfW_d^{\top}\bfB_{-d}\|_2=O_{\P }\LRs{p\delta}$. Combining the preceding bounds gives
\[
\|p^{-1}\bfH_d^{-1}\bfW_d^{\top}\bfB_{-d}\bff_{-\LRm{d}}\|_{L_2(\P)}=O_{\P }\LRs{\delta}.
\]

We next bound the second term on the right-hand side of \eqref{eq:first_term}. From the proof of Lemma 2 in \cite{fan2023factor}, under (D1)--(D3), we have $\|\bfW_d^{\top}\bfu\|^2_{L_2(\P)}=O\LRs{p}$. Together with $\|\bfH_d^{-1}\|_2=O_{\P }(1)$, this yields $\|p^{-1}\bfH_d^{-1}\bfW_d^{\top}\bfu\|_{L_2(\P)}=O_{\P }(p^{-1/2})$. 

Substituting these bounds establishes that, with $O=\bfH_d^{-1}$, $\|O\widehat{\bfR}_{\LRm{d}}-\bff_{\LRm{d}}\|_{L_2(\P)}=O_{\P }\LRs{\delta}$, as claimed.
\end{proof}

In particular, Proposition \ref{prop:dra_factor} implies that the requirement $\|O \widehat{\bfR}_{[d]}-\bff_{[d]}\|_{L_2(\P)} =o_{\P}(n^{-1/4})$ holds with $O=\bfH_d^{-1}$ provided that $(\log p)/\sqrt{n}=o(1)$ and $\sqrt{n}/p=o(1)$.

\subsubsection{Sufficient dimension reduction}

We show that, under suitable conditions, the representation error $\|O\widehat{\bfR}_{[d]}-\bfB_d^{\top}\bfX\|=o_{\P}(n^{-1/4})$ for some data-dependent rotation matrix $O$, when $d<d^*$. We focus on kernel-matrix-based SDR methods (as in the ``inverse regression" class described in Section \ref{sec:ex2null}). Let $M\in\mathbb{R}^{p\times p}$ be the population kernel matrix, whose $d^*$ nonzero eigenvalues determines the SDR subspace. Let $\bfB\in\mathbb{R}^{p\times d^*}$ collect the corresponding eigenvectors, ordered by decreasing eigenvalues. For $1\le d\le d^*$, let $\bfB_d$ denote the first $d$ columns of $\bfB$. Applying the same SDR method to $\{(\bfX_i,Y_i):i\in I_{-k}\}$ yields an empirical kernel matrix $\widehat{M}$, whose leading $d_{\max}$ eigenvectors form $\widehat{\bfB}_{d_{\max}}$. For any $d\le d_{\max}$, write $\widehat{\bfB}_d$ for the first $d$ columns of $\widehat{\bfB}_{d_{\max}}$. The following result parallels Proposition \ref{prop:null_SDR} in the regime $d<d^*$.

\begin{prop}\label{prop:alter_SDR}
Let $1\le d< d^* $. There exists $O=\bfB_d^{\top}\widehat{\bfB}_d\in\mathbb{R}^{d\times d} $ such that \begin{equation*}
\|O\widehat{\bfR}_{\LRm{d}}-\bfB_d^{\top}\bfX\|\leq \|P_{\bfB_d}-P_{\widehat{\bfB}_d}\|_2 \|\bfX\|.
\end{equation*}
\end{prop}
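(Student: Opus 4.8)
The plan is to mirror the proof of Proposition \ref{prop:null_SDR} almost verbatim, the essential difference being that the target block $\bfB_d^\top\bfX$ and the surrogate $\widehat{\bfR}_{[d]}=\widehat{\bfB}_d^\top\bfX$ now live in matching dimension $d$. This matching is precisely what lets the argument be even simpler than the $d\ge d^*$ case: the intermediate span-inclusion step (passing from $\widehat{\bfB}_d$ to the lower-index $\widehat{\bfB}_{d^*}$) is unnecessary, since the target $\bfB_d$ is already $d$-dimensional. First I would substitute $O=\bfB_d^\top\widehat{\bfB}_d$ and $\widehat{\bfR}_{[d]}=\widehat{\bfB}_d^\top\bfX$ into the left-hand side and use $P_{\widehat{\bfB}_d}=\widehat{\bfB}_d\widehat{\bfB}_d^\top$ (valid since $\widehat{\bfB}_d$ has orthonormal columns):
\begin{align*}
\|O\widehat{\bfR}_{[d]}-\bfB_d^\top\bfX\|
&=\|\bfB_d^\top\widehat{\bfB}_d\widehat{\bfB}_d^\top\bfX-\bfB_d^\top\bfX\|
=\|\bfB_d^\top(I-P_{\widehat{\bfB}_d})\bfX\|\\
&\le\|\bfB_d^\top(I-P_{\widehat{\bfB}_d})\|_2\,\|\bfX\|,
\end{align*}
where the inequality is submultiplicativity of the operator norm.

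The remaining step is to identify $\|\bfB_d^\top(I-P_{\widehat{\bfB}_d})\|_2$ with $\|P_{\bfB_d}-P_{\widehat{\bfB}_d}\|_2$. Here I would first apply the identity $\|A^\top Z\|_2=\|P_A Z\|_2$, valid for any $A$ with orthonormal columns (it follows from $(A^\top Z)^\top(A^\top Z)=Z^\top P_A Z$ together with $P_A^\top P_A=P_A$), taking $A=\bfB_d$ and $Z=I-P_{\widehat{\bfB}_d}$; this rewrites the quantity as $\|P_{\bfB_d}(I-P_{\widehat{\bfB}_d})\|_2$. Then, because both $P_{\bfB_d}$ and $P_{\widehat{\bfB}_d}$ are \emph{rank-$d$} orthogonal projections, Theorem 2.5.1 of \cite{golub2013matrix} supplies the canonical-angle symmetry $\|P_{\bfB_d}(I-P_{\widehat{\bfB}_d})\|_2=\|P_{\bfB_d}-P_{\widehat{\bfB}_d}\|_2$. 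Chaining these equalities with the displayed bound closes the argument.

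The only genuinely load-bearing point—and the reason the $d<d^*$ regime is clean—is the equal-rank requirement behind the projection identity: both subspaces have exactly dimension $d$, so the symmetry $\|P(I-Q)\|_2=\|(I-P)Q\|_2=\|P-Q\|_2$ holds, whereas for projections of unequal rank one only gets $\|P-Q\|_2=\max\{\|P(I-Q)\|_2,\|(I-P)Q\|_2\}$. I do not anticipate a real obstacle here; the one thing worth checking explicitly is that $\bfB_d$ and $\widehat{\bfB}_d$ each have full column rank $d$ (so that $P_{\bfB_d}$ and $P_{\widehat{\bfB}_d}$ are rank-$d$), which holds by construction since both are formed from the first $d$ columns of bases with orthonormal columns.
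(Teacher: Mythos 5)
Your proof is correct and is essentially the paper's own argument: the paper proves this proposition simply by referring to the proof of Proposition \ref{prop:null_SDR}, whose chain of steps (substituting $O=\bfB_d^{\top}\widehat{\bfB}_d$ and $\widehat{\bfR}_{[d]}=\widehat{\bfB}_d^{\top}\bfX$, using $P_{\widehat{\bfB}_d}=\widehat{\bfB}_d\widehat{\bfB}_d^{\top}$, submultiplicativity, the identity $\|A^{\top}Z\|_2=\|P_A Z\|_2$, and Theorem 2.5.1 of \cite{golub2013matrix}) is exactly what you reproduce. Your observation that the span-inclusion step is unnecessary here, and that the equal-rank ($\mathrm{rank}\,P_{\bfB_d}=\mathrm{rank}\,P_{\widehat{\bfB}_d}=d$) symmetry is what licenses the final projection identity, is precisely the intended adaptation.
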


\begin{proof}
See the proof of Proposition \ref{prop:null_SDR}. 
\end{proof}

Using the identity $\|P_{\widehat{\bfB}_d}-P_{\bfB_d}\|_F=2^{1/2}\|\sin\Theta(\widehat{\bfB}_d,\bfB_d)\|_F$, the Davis-Kahan theorem \citep{davis1970rotation,yu2015useful} implies $$\|\sin\Theta(\widehat{\bfB}_d,\bfB_d)\|_F\leq\frac{\|\widehat{M}-M\|_F}{\lambda_d-\lambda_{d+1}},$$ where $\lambda_1,\dots,\lambda_{d^*}$ are the nonzero eigenvalues of $M$, ordered decreasingly. When $p<n$ is fixed, prior work \citep{hsing1992asymptotic,zhu1995asymptotics,li2007directional} established that $\|\widehat{M}-M\|_F=O_{\P}(n^{-1/2})$. Assume further that eigen-gap condition $\lambda_j-\lambda_{j+1}>c>0$ holds for all $1\le j<d^*$, as in \cite{zhu1995asymptotics,yu2025testing}. Then for any $d<d^*$,  $\|\sin\Theta(\widehat{\bfB}_d,\bfB_d)\|_F=O_{\P }(n^{-1/2})$, $\|P_{\bfB_d}-P_{\widehat{\bfB}_d}\|_2=O_{\P}(n^{-1/2})$. Combining this with Proposition \ref{prop:alter_SDR} and assuming $\|\bfX\|$ is bounded, we conclude that $\|O\widehat{\bfR}_{[d]}-\bfB_d^{\top}\bfX\|=o_{\P}(n^{-1/4})$ as desired.

\section{Additional experiments and implementation details}

\subsection{Task-specific construction of the reduction map}\label{suppsec:reduction map}

We use Examples \ref{eg:factor regression}--\ref{eg:reduced rank regression} to illustrate how to construct the reduction map $\widehat{\varphi}$ under different dimension-reduction paradigms and prediction tasks.

\textbf{Factor regression.} A large literature provides procedures for constructing $\widehat{\varphi}$, including principal component analysis \citep{jolliffe2002principal}, maximum likelihood approaches \citep{bai2012factor,doz2012quasi}, and covariance-based methods \citep{fan2013large}. Projected-PCA \citep{fan2016projected} incorporates covariate information to improve factor estimation. \cite{fan2022learning} further develops constructions tailored to specific structures, including characteristic-based weights, moving-window estimators, initial transformations, and Hadamard projections. The choice of method should be guided by the data-generating setting; for example, when informative covariates are available, characteristic-based weighting schemes or projected-PCA may be preferable.

\textbf{Sufficient dimension reduction.} The dimension-reduction method should be chosen to match the prediction task (equivalently, the loss function). As discussed in Section \ref{subsec:d*_d0}, different losses correspond to different target SDR subspaces, and estimation procedures should be selected accordingly. For instance, squared loss targets the central mean subspace, for which principal Hessian direction \citep{li1992principal}, the minimum average variance estimator \citep{xia2002adaptive}, iterative Hessian transformation \citep{cook2002dimension} are appropriate. Negative log-likelihood and cross-entropy losses target the central subspace, for which a broad range of inverse- and forward-regression methods is available \citep{li1991sliced,cook1991sliced,li2007directional,xia2007constructive,wang2008sliced,yin2011sufficient}. Under 0-1 loss, the target is the central discriminant subspace, which can be estimated using methods such as \cite{cook2001theory}.

\textbf{Reduced-rank regression.} Many estimators are available for fitting a low-rank coefficient matrix $\widehat{A}\in\mathbb{R}^{p\times q}$ under a rank constraint \citep{reinsel2022multivariate,yuan2007dimension,bunea2011optimal,zou2022estimation}. Any such method can be used to obtain a rank-$d_{\max}$ estimate $\widehat{A}_{d_{\max}}$. Applying a singular value decomposition to $\widehat{A}_{d_{\max}}$ yields its top $d_{\max}$ left singular vectors, denoted by $\widehat{\bfB}_{d_{\max}}\in\mathbb{R}^{p\times d_{\max}}$. In this framework, one may take $\widehat{\varphi}=\widehat{\bfB}_{d_{\max}}$.

\subsection{Implementation details for the information-criterion baseline}

We compare our method with the information criterion (IC), specifically $IC_{p1}$, proposed by \cite{bai2002determining}. Let $\mathbf{X}=[\bfX_1,\dots,\bfX_n]\in\mathbb{R}^{p\times n}$ denote the data matrix, and let $\mathbf{F}^k=[\bff_1,\dots,\bff_n]\in\mathbb{R}^{k\times n} $ be a matrix of $k$ latent factors. Define $$V(k)=\min_{\bfB^k\in\mathbb{R}^{p\times k},\mathbf{F}^k}\frac{\|\mathbf{X}-B^k\mathbf{F}^k\|_F^2}{np}.$$ The criterion is
\begin{equation}\label{eq:IC}
IC_{p1}(k)=\ln{(V(k))}-k\frac{n+p}{np}\ln{\Big(\frac{n+p}{np}\Big)}.
\end{equation}
Equivalently, $V(k)$ can be computed as the sum of the eigenvalues of $\mathbf{X}\mathbf{X}^{\top}/(np)$ from $(k+1)$ to $p$.

\subsection{Sensitivity to the choice of learner class}\label{sec:ftm simulation}

\begin{figure}[!h]
\centering
\includegraphics[width=0.7\linewidth]{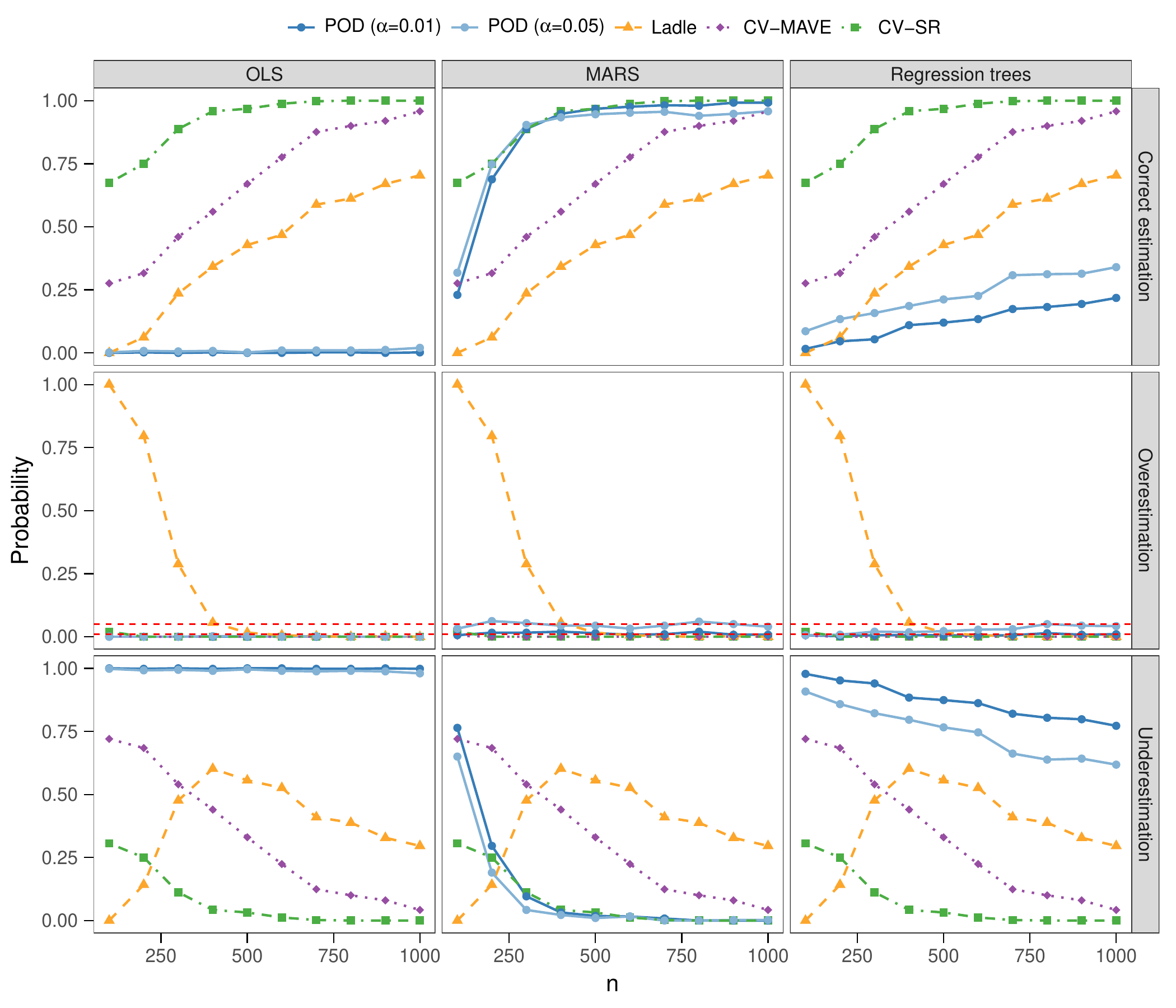}
\caption{Empirical probabilities of correct estimation ($\widehat{d}=d^*_{\rm CMS}$), overestimation ($\widehat{d}>d^*_{\rm CMS}$), and underestimation ($\widehat{d}<d^*_{\rm CMS}$) across $n$ for SDR. Red dashed horizontal lines in the second row mark the nominal significance levels $\alpha=0.01$ and $\alpha=0.05$.} 
\label{Fig: percentages of correct order estimation with df}
\end{figure}

We use Model 4 in Section \ref{subsec: simulation for SDR} to examine how the choice of $\mathcal{F}$ affects POD. Keeping all other specifications fixed as in the main text, we restrict $\mathcal{F}$ to one of the following learner classes: OLS, MARS, or regression trees. The resulting performance of each method is reported in Figure~\ref{Fig: percentages of correct order estimation with df}.

OLS and regression trees do not adequately approximate the relationship between $\bfY$ and $(\bfX_1$, $\bfX_2)$ in this model, reflecting limitations of these learner classes in capturing the underlying functional form; this misspecification in turn degrades POD's performance. In contrast, MARS provides a more accurate approximation and yields satisfactory results. These findings highlight the importance of choosing a sufficiently rich learner class $\mathcal{F}$, one that includes flexible models capable of representing the underlying structure.

\subsection{SDR with categorical responses}\label{sec:categorical response}

For categorical responses, let $X\sim \mathcal{N}(0,I_p)$ with $p=10$ and consider
\begin{align}
&\text{Model 6}:\quad Y\sim\text{Binomial}(2,\text{logit}(X_1)), \nonumber\\
&\text{Model 7}:\quad Y=\mathbf{1}\LRl{X_1+\dots+X_5+\sigma\varepsilon>1}+2\cdot\mathbf{1}\LRl{X_6+\dots+X_{10}+\sigma\varepsilon>0}, \nonumber
\end{align}
where $\varepsilon\sim\mathcal{N}(0,1)$ is independent of $X$ and $\sigma=0.5$. The structural dimensions are $d^*_{\rm CS}=1$ (Model 6) and $d^*_{\rm CS}=2$ (Model 7, \cite{zhu2006fourier}); in both models, $d^*_{\rm CS}=d^*_{\rm CMS}$. POD evaluates predictiveness using cross-entropy loss, so that $d^*(\V)=d^*_{\rm CS}$. For POD, we estimate the reduction maps for Models~6--7 using SIR, with $3$ slices for Model~6 and $4$ slices for Model~7. The downstream classifier is selected by a two-fold cross-validation between a support vector machine and a classification tree.

We compare POD with three alternatives: the ladle estimator based on SIR, and cross-validation criteria based on MAVE and SR. All methods target $d^*_{\rm CS}=1$ for Model~6 and $d^*_{\rm CS}=2$ for Model~7.

\begin{figure}[!h]
\centering
\includegraphics[width=0.6\linewidth]{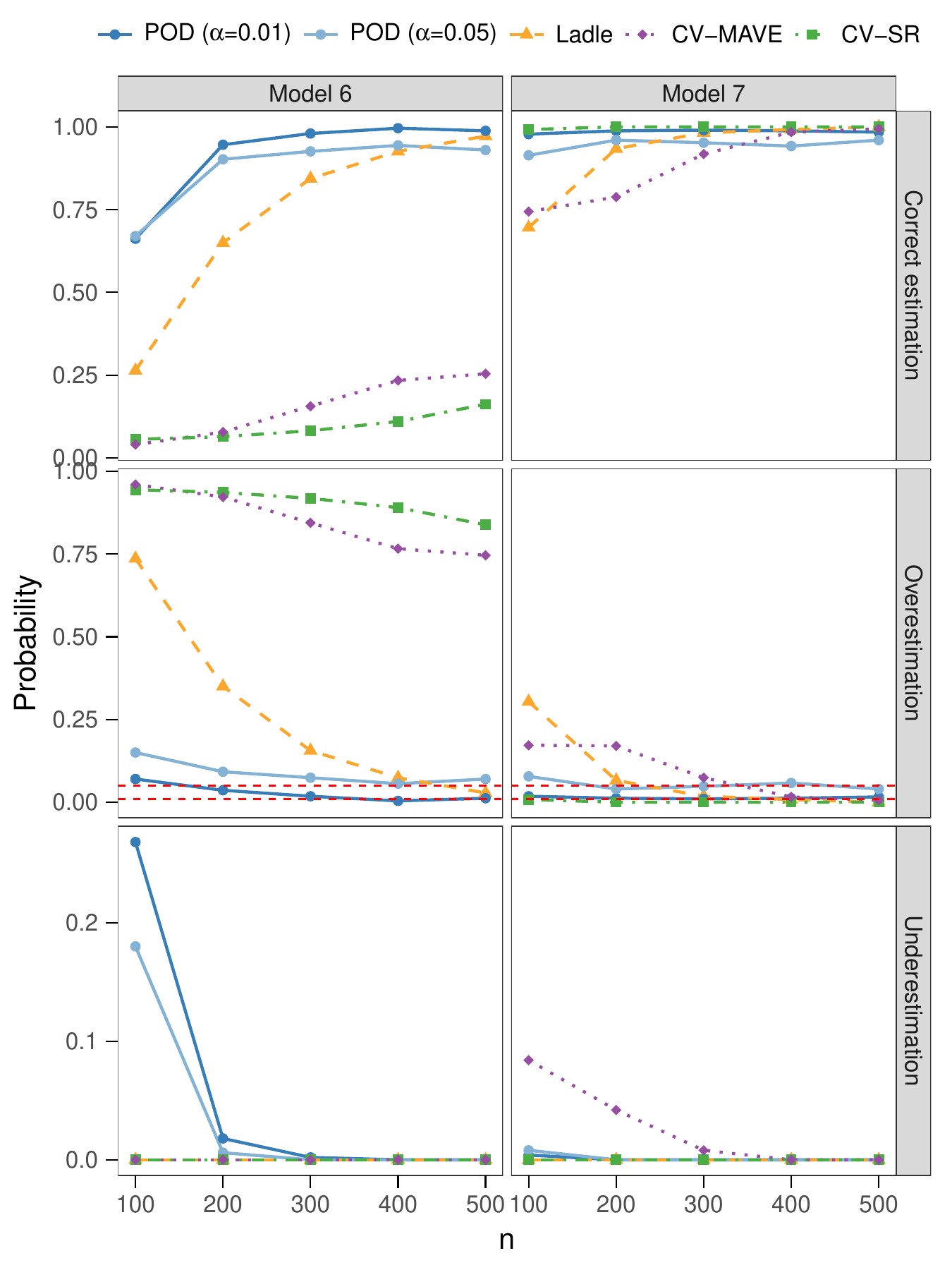}
\caption{Empirical probabilities of correct estimation ($\widehat{d}=d^*_{\rm CS}$), overestimation ($\widehat{d}>d^*_{\rm CS}$), and underestimation ($\widehat{d}<d^*_{\rm CS}$) across $n$ for SDR Models 6--7. Red dashed horizontal lines in the second row mark the nominal significance levels $\alpha=0.01$ and $\alpha=0.05$.} 
\label{Fig: percentages of CS estimation}
\end{figure}

Figure~\ref{Fig: percentages of CS estimation} reports the empirical probabilities of correct estimation, overestimation, and underestimation of $d^*_{\rm CS}$ as $n$ increases. Across both models, POD converges more rapidly than the competitors. Its overestimation probability approaches the nominal level, while the underestimation probability vanishes.

\subsection{Selection-frequencies in the real-data analysis}\label{sec:frequency}

Because our method uses $K$-fold cross-fitting, the ladle estimator relies on bootstrapping, and the critical value of the weighted-$\chi^2$ test is obtained via simulation, all methods exhibit randomness even when applied to the same dataset. We therefore report, in Tables~\ref{tab:frequency_dmax8}--\ref{tab:frequency_dmax16}, the selection frequencies of each estimated dimension across 100 repeated runs, which demonstrate that POD yields stable dimension selection.

\begin{table}[!h]
\renewcommand{\arraystretch}{1.25}
\centering
\caption{Selection frequencies of estimated dimensions $\hat{d}$ over 100 repetitions when $d_{\max}=8$.}
\label{tab:frequency_dmax8}
{ \footnotesize
\begin{tabular}{lrrrcrrcrrrcrrcrcrrcr}
\toprule
& \multicolumn{13}{c}{POD}                                                                    &                                         & Ladle & & \multicolumn{4}{c}{BY, weighted-$\chi^2$}                                  \\
& \multicolumn{6}{c}{0-1 loss}                                    &  & \multicolumn{6}{c}{Cross-entropy}                             & &  &  &     &                &               &                           \\ \cline{2-7} \cline{9-14}  \cline{16-16} \cline{18-21}
& \multicolumn{3}{c}{$\alpha=1\%$} & &\multicolumn{2}{c}{$\alpha=5\%$} &  & \multicolumn{3}{c}{$\alpha=1\%$} & & \multicolumn{2}{c}{$\alpha=5\%$} &   & &    & \multicolumn{2}{c}{$\alpha=1\%$} & &  \multicolumn{1}{c}{$\alpha=5\%$}                \\  
$\widehat{d}$         & $1$   & $2$    & $ 3$ & & $2$       & $\geq 3$     &  & $1$   & $2 $   & $ 3$&  & $2$       &   $\geq 3$ &  & $3$ &    & $14$             & $15 $   &        & $16$    \\
Frequency & $1 $  &$ 93 $  & $6$ &                & $89$      & $11 $                  &  & $1 $  & $97$   & $2$      &           & $94$      & $6$                  &  & $100$&   & $81$             & $19$ &           & $100$   \\ \bottomrule
\end{tabular}}
\end{table}

\begin{table}[!h]
\renewcommand{\arraystretch}{1.25}
\centering
\caption{Selection frequencies of estimated dimensions $\hat{d}$ over 100 repetitions when $d_{\max}=16$.}
\label{tab:frequency_dmax16}
{\footnotesize
\begin{tabular}{lrrcrrcrrcrrcrrcrrcr}
\hline
& \multicolumn{11}{c}{POD}  &  & \multicolumn{2}{c}{Ladle} &  & \multicolumn{4}{c}{BY, weighted-$\chi^2$}                                        \\
& \multicolumn{5}{c}{0-1 loss}                                          &  & \multicolumn{5}{c}{Cross-entropy}                                     &  & \multicolumn{2}{c}{}      &  & \multicolumn{4}{c}{}                                               \\ \cline{2-6} \cline{8-12} \cline{14-15} \cline{17-20} 
& \multicolumn{2}{c}{$\alpha=1\%$}    &  & \multicolumn{2}{c}{$\alpha=5\%$} &  & \multicolumn{2}{c}{$\alpha=1\%$}    &  & \multicolumn{2}{c}{$\alpha=5\%$} &  & \multicolumn{2}{c}{}      &  & \multicolumn{2}{c}{$\alpha=1\%$} &  & \multicolumn{1}{l}{$\alpha=5\%$} \\
$\widehat{d}$         & $2$  &$3$        &  & $2$       & $\geq 3$     &  & $2$  & $3$        &  & $2$       & $\geq 3$     &  & $\leq 13$    & $14$    &  & $14$             & $15$            &  & $16$                             \\
Frequency & $93$ & $7$        &  & $84$      & $16$                   &  & $95$& $5$       &  & $94$     & $6$                    &  & $13$                & $87$    &  & $81$             & $19$            &  & $100$                            \\ \hline
\end{tabular}}
\end{table}

\end{document}